\newtheorem{theorem}{Theorem}[section]
\newtheorem{proposition}[theorem]{Proposition}
\newtheorem{corollary}[theorem]{Corollary}
\newtheorem{lemma}[theorem]{Lemma}
\newtheorem{rmk}[theorem]{Remark}
\newtheorem{example}{Example}
\newcommand{\biga}{\mbox{\normalfont\bfseries A}}
\newcommand{\bigd}{\mbox{\normalfont\bfseries D}}
\newcommand{\real}{{\rm I\!R}}
\title{Bounds for $A_\alpha$-eigenvalues
\thanks{ }}
\author{
  João Domingos G. da Silva Jr. \\ 
  Departamento de Engenharia de Produção\\
  Centro Federal de Educação Tecnológica do Rio de Janeiro \\
  Rio de Janeiro, Brazil\\
  \texttt{joao.dgomes@gmail.com} \\
  \And
  Carla Silva Oliveira \\ 
  Departamento de Matemática \\
  Escola Nacional de Ci\^encias Estat\'{\i}sticas \\
  Rio de Janeiro, Brazil\\
  \texttt{carla.oliveira@ibge.gov.br} \\
  \And
  Liliana Manuela G. C. da Costa  \\ 
  Departamento de Matemática \\
  Col\'egio Pedro II \\
  Rio de Janeiro, Brazil\\
  \texttt{lmgccosta@gmail.com} \\
}
\begin{document}
\maketitle

\begin{abstract}
Let $G$ be a graph with adjacency matrix $A(G)$ and degree diagonal matrix $D(G)$. In 2017, Nikiforov \cite{VN17} defined the matrix $A_\alpha(G)$, as a convex combination of $A(G)$ and $D(G)$, the following way, 
$A_\alpha(G)=\alpha A(G)+(1-\alpha)D(G),$
where $\alpha\in[0,1]$. In this paper we present some new upper and lower bounds for the largest, second largest and the smallest eigenvalue of $A_\alpha$-matrix. Moreover, extremal graphs attaining some of these bounds are characterized.
\end{abstract}

\keywords{$A_\alpha$-matrix, $A_\alpha$-eigenvalues, Bounds.}

\section{Introduction} \label{sec::introduction}

Let $G=(V,E)$ be a simple graph such that $\vert V\vert = n$ and $\vert E \vert = m$. If $v_i$ is adjacent to $v_j$ we denote by $v_i \sim v_j$, otherwise $v_i \nsim v_j$. For $v \in V$, the set of its neighbours is denoted by $N_v$ and  $\vert N_v \vert$ is the cardinality of $N_v$. For each vertex $v \in V$ the degree of $v$, denoted by $d(v)$, is the number of neighbours to $v$. The minimum degree of $G$ is  $\delta(G) := \min \{ d(v); v \in V \}$ and the maximum degree of $G$ is $\Delta(G) := \max\{d(v); v \in V\}$. Eventually, we will use $d_i$, $\Delta$ and $\delta$ to represent $d(v_i)$,  $\Delta(G)$ and $\delta(G)$ respectively. Also assume that the vertices are labeled such that $\Delta = d_1 \geq d_2 \geq \ldots \geq d_n = \delta$ and the degree sequence is defined by $d(G) = (d_1, \ldots, d_n)$. A graph is said to be bidegreed if and only if its 
degree sequence has only two different values. 

The first Zagreb index is defined by $\displaystyle Z_1(G) = \sum_{i=1}^n d^2(v_i)$ and study of its bounds and properties can be found at \cite{NKMT2003,Das2003SHARPBF,DAS200457,CIOABA20061959, DasZagreb}. A nonempty subset $I \subset V$ is \textit{independent} if and only if no two of its elements are adjacent. The independence number of $G$, $\gamma(G)$, is the largest cardinality among all independent sets of $G$.

A graph is called regular if all its vertices have the same degree. The complement of the graph $G$, denoted by $\overline{G}$, is the graph  obtained from $G$ with the same vertex set, $\overline{V} = V$, and $v_iv_j \in \overline{E}$ if and only if $v_iv_j \notin E$. We denote by $K_n$, $K_{a,b}$, $K_{1,n-1}$ and $P_n$ the complete graph, the complete bipartite graph, the star and the path, respectively. The bipartite complement of a connected bipartite graph $G$ with partitions $V_1$ and $V_2$ is a bipartite graph, denoted by $\widetilde{G}$, such that $\widetilde{G}$ has edges between $V_1$ and $V_2$ exactly where $G$ does not, that is, $V(\widetilde{G}) = V(K_{\vert V_1 \vert,\vert V_2 \vert})$ and $E(\widetilde{G}) = E(K_{\vert V_1 \vert,\vert V_2 \vert}) - E(G)$ . A graph $G$ is called semi-regular bipartite, with parameters $(n_1, n_2, r_1, r_2)$, if $G$ is bipartite such that $V = V_1 \cup V_2$ where $n_1 = \vert V_1 \vert$ and $n_2 = \vert V_2 \vert$, and the vertices in the same partition have the same degree, in other words, $n_1$ vertices have degree $r_1$ and $n_2$ vertices have  degree $r_2$, such that $n_1r_1 = n_2r_2$.

Let $G = (V, E)$ and $H = (W, F)$ be disjoint graphs in their vertex sets. The coalescence between $G$ and $H$, denoted by $G \cdot H$, is a graph with $|V| + |W| - 1$ vertices that can be obtained identifying some vertex of $G$ with some vertex of $H$. Given the non-negative integers $s$ and $l$, the double kite graph, denoted by $DK(s;l)$, is the graph obtained by coalescing one of the pendent vertices of the path $P_{l+2}$ with a vertex of the complete graph $K_s$. Figure \ref{fig:dkgraph} shows the  double kite graph $D(4,3)$.

\begin{figure}[h]
\centering
\includegraphics[width=0.5\textwidth]{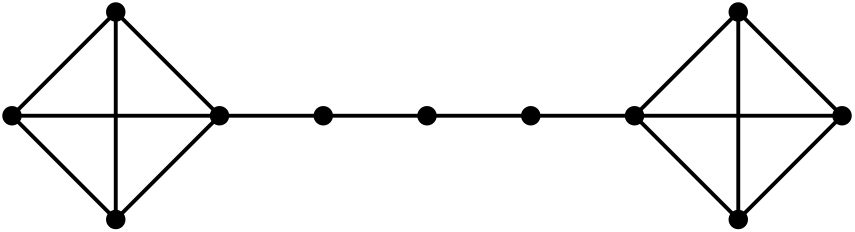}
\caption{DK(4,3).}
\label{fig:dkgraph}
\end{figure}

Let $x \in \real^n$, we denote by $\vert x \vert$ the Euclidean norm of $x$. Let $M$ be an $n \times n$ matrix. If $M$ is symmetric, the $M$-eigenvalues are real and we shall index them in non-increasing order, represented by $\lambda_1(M) \geq \ldots \geq \lambda_n(M)$. The collection of $M$-eigenvalues together with their multiplicities is called the $M$-spectrum, denoted by $\sigma(M)$.

The adjacency matrix of $G$,  $A = A(G) = [a_{ij}]$, is a square and symmetric matrix of order $n$, such that $a_{ij} = 1$ if $v_i \sim v_j$ and $a_{ij} = 0$ otherwise. The degree matrix of $G$, denoted by $D(G) = [d_{ij}]$, is the diagonal matrix such that $d_{ii} = d(v_i)$. The Laplacian  and signless Laplacian matrices are defined by $L(G) = D(G) - A(G)$ and $Q(G) = D(G) + A(G)$, respectively. An interesting problem in Graph Spectral Theory is to obtain bounds for $A$-eigenvalues, $L$-eigenvalues and $Q$-eigenvalues involving invariants associated to graphs.

In $2017$ Nikiforov, \cite{VN17}, defined for any real $\alpha \in [0,1]$ the convex linear combination, $A_\alpha(G)$, of $A(G)$ and $D(G)$ in the following way: 
$$A_\alpha(G) = \alpha D(G) + (1-\alpha)A(G) , \ \ \alpha \in [0,1].$$
It is easy to see that $A(G) = A_0(G)$, $D(G) = A_1(G)$ and $Q(G) = 2A_{\frac{1}{2}}(G)$. So, obtain bounds for $A_\alpha$-eigenvalues is an
interesting problem because it contemplates the study of bounds for the adjacency and signless Laplacian matrices.

Results involving bounds for $A_\alpha$-eigenvalues have been obtained, as we can see in \cite{LIU2020111917, LIN2018210, LIN2018430, WANG2020210, Pirzada2022, LIU2020347,Zhang2019,CHEN2019343, VN17, Abdollah2022, NIKIFOROV2017286}. In this paper we obtain some bounds for the largest eigenvalue, the second largest eigenvalue and the smallest eigenvalue of the $A_\alpha$ matrix. This paper is organized as follows: in Section \ref{sec::preliminaries} we introduce some definitions and results required to prove the main results; after, in Section \ref{sec::main}  we show the main results referring to bounds for $A_\alpha$-eigenvalues.

\section{Preliminaries} \label{sec::preliminaries}

In this section we present some aspects of matrix theory that will be needed to prove the main results of this paper. The principal sub-matrix of a matrix is obtained by removing rows and columns with the same indices \cite{Goldberg59}. An important result about sub-matrices is presented in Theorem \ref{theo::interlacing}.

\begin{theorem} \label{theo::interlacing}
\cite{horn2013matrix} Suppose $A \in M_n({\rm I\!R})$ symmetric with eigenvalues $\lambda_1 \geq \ldots \geq \lambda_n$. If $B \in M_m({\rm I\!R})$ with $ m<n$, a principal sub-matrix of $A$ with eigenvalues $\mu_1 \geq \ldots \geq \mu_m$, then $\lambda_{n-m+i} \leq \mu_i \leq \lambda_i$, for $i = 1, \ldots, m$.
\end{theorem}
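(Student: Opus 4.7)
The plan is to deduce both interlacing inequalities from the Courant--Fischer min-max characterization of eigenvalues of real symmetric matrices. Since $B$ is a principal submatrix of $A$ corresponding to some index set $I \subseteq \{1, \ldots, n\}$ of cardinality $m$, I would identify $\real^m$ isometrically with the coordinate subspace $W := \mathrm{span}\{e_j : j \in I\} \subset \real^n$. Under this identification, the Rayleigh quotient $x^{T} B x / x^{T} x$ for $x \in \real^m \setminus \{0\}$ coincides with $y^{T} A y / y^{T} y$ for the embedded vector $y \in W$, so every Rayleigh quotient of $B$ is already a Rayleigh quotient of $A$ restricted to $W$.

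For the upper bound $\mu_i \leq \lambda_i$, I would write Courant--Fischer in the form
\[
\mu_i = \max_{\substack{S \subseteq W \\ \dim S = i}} \min_{0 \neq x \in S} \frac{x^{T} A x}{x^{T} x}, \qquad \lambda_i = \max_{\substack{S \subseteq \real^n \\ \dim S = i}} \min_{0 \neq x \in S} \frac{x^{T} A x}{x^{T} x}.
\]
Because the outer maximum for $\mu_i$ ranges over a subfamily of the subspaces appearing in the formula for $\lambda_i$, the inequality $\mu_i \leq \lambda_i$ is immediate.

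For the lower bound $\lambda_{n-m+i} \leq \mu_i$, I would invoke the dual min-max formula
\[
\lambda_k = \min_{\substack{T \subseteq \real^n \\ \dim T = n-k+1}} \max_{0 \neq x \in T} \frac{x^{T} A x}{x^{T} x}
\]
with $k = n-m+i$, so that $n-k+1 = m-i+1$. Restricting $T$ to lie inside $W$ yields the analogous formula for $\mu_i$ (a minimum over $(m-i+1)$-dimensional subspaces of $W$), and the minimum for $\lambda_{n-m+i}$ is taken over a strictly larger family, hence $\lambda_{n-m+i} \leq \mu_i$.

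The whole argument is a packaged application of Courant--Fischer, and the main step deserving care is aligning the dimension parameter in the dual formula: the lower index $n-m+i$ emerges precisely from the identity $n-(m-i+1)+1 = n-m+i$, not from any further structural property of $B$. Since this is the classical Cauchy interlacing theorem recorded in Horn--Johnson, no additional combinatorial input is required.
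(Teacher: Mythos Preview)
Your proof is correct and is precisely the standard Courant--Fischer argument for Cauchy interlacing. Note, however, that the paper does not supply its own proof of this theorem: it is stated in the Preliminaries section with a citation to Horn--Johnson and used as a black box later (in Theorem~\ref{theo::LowerBound_SpectralRadius2}). Your argument is essentially the one given in that reference, so there is nothing to compare against beyond confirming that the dimension bookkeeping in the min--max step (the identity $n-(n-m+i)+1=m-i+1$) is handled correctly, which it is.
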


The following result is the theorem of Weyl and So, which is inequalities involving eigenvalues of sums of Hermitian matrices. 

\begin{theorem}[Weyl] \label{theo::weyl}
\cite{horn2013matrix} Let $A, B \in M_n({\rm I\!R})$ be Hermitian and let the spectrum of $A$, $B$, and $A + B$ be $\sigma(A) = \{\lambda_1(A), \ldots, \lambda_n(A)\}$, $\sigma(B) = \{\lambda_1(B), \ldots, \lambda_n(B)\}$ and $\sigma(A+B) = \{\lambda_1(A+B), \ldots, \lambda_n(A+B)\}$, respectively. Then,
\begin{equation}\label{weyl_upper_bound}
\lambda_{i+j-1}(A+B) \leq \lambda_i(A) + \lambda_j(B), \ \ j = 1, \ldots , n-i+1
\end{equation}
for each $i = 1, \ldots, n$, with equality for some pair $i,j$ if and only if there is a nonzero vector $x$ such that $Ax = \lambda_i x$, $Bx = \lambda_j x$ and $(A+B)x = \lambda_{i+j-1} x$.
Also,	\begin{equation}\label{weyl_lower_bound}
\lambda_i(A) + \lambda_j(B) \leq \lambda_{i + j -n}(A+B), \ \ j = i, \ldots, n
\end{equation}
for each $i = 1, \ldots, n$, with equality for some pair $i,j$ if and only if there is a nonzero vector $x$ such that $Ax = \lambda_i x$, $Bx = \lambda_j x$ e $(A+B)x = \lambda_{i+j-n} x$. If $A$ and $B$ have no common eigenvector, then the inequalities in (\ref{weyl_upper_bound}) and (\ref{weyl_lower_bound}) are strict.
\end{theorem}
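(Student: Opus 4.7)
The plan is to derive both Weyl inequalities from the Courant--Fischer min--max characterization of eigenvalues of Hermitian matrices, and then extract the equality and strictness clauses by inspecting exactly when each step of the argument is tight.

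For the upper bound \eqref{weyl_upper_bound}, I would start from the min--max formula
$$\lambda_k(M)=\min_{\dim S=n-k+1}\ \max_{0\neq x\in S}\frac{x^{*}Mx}{x^{*}x}.$$
Fix orthonormal eigenbases $u_1,\ldots,u_n$ of $A$ and $v_1,\ldots,v_n$ of $B$ compatible with the indexing $\lambda_1\geq\cdots\geq\lambda_n$, and let $U=\mathrm{span}(u_i,\ldots,u_n)$ and $V=\mathrm{span}(v_j,\ldots,v_n)$, of dimensions $n-i+1$ and $n-j+1$ respectively. The dimension formula gives $\dim(U\cap V)\geq n-(i+j-1)+1$, and by construction every $x\in U\cap V$ satisfies $x^{*}Ax\leq\lambda_i(A)\,x^{*}x$ and $x^{*}Bx\leq\lambda_j(B)\,x^{*}x$. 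Using any subspace of $U\cap V$ of dimension exactly $n-(i+j-1)+1$ as a competitor in the min--max formula for $\lambda_{i+j-1}(A+B)$ then yields \eqref{weyl_upper_bound}.

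The lower bound \eqref{weyl_lower_bound} I would deduce from \eqref{weyl_upper_bound} via the substitution $A\mapsto -A$, $B\mapsto -B$, together with the identity $\lambda_k(-M)=-\lambda_{n-k+1}(M)$. Reindexing $(i,j)\mapsto(n-i+1,n-j+1)$ and negating the resulting inequality converts \eqref{weyl_upper_bound} line by line into \eqref{weyl_lower_bound}; the index range $j=1,\ldots,n-i+1$ for the upper bound transforms into the range $j\geq i$ for the lower bound.

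For the equality clauses I would track when the inequalities above are saturated. If equality holds in \eqref{weyl_upper_bound}, then the competitor subspace constructed above must contain a unit vector $x$ with $x^{*}(A+B)x=\lambda_{i+j-1}(A+B)$ and simultaneously $x^{*}Ax=\lambda_i(A)$ and $x^{*}Bx=\lambda_j(B)$. Writing $x=\sum_{k\geq i}c_k u_k$ and exploiting $\sum_k |c_k|^{2}(\lambda_i(A)-\lambda_k(A))=0$ forces $x$ to lie in the $\lambda_i(A)$-eigenspace of $A$; the analogous argument gives $Bx=\lambda_j(B)x$, and then $(A+B)x=\lambda_{i+j-1}(A+B)x$ follows by linearity. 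The converse direction is a one-line Rayleigh quotient computation, and the strictness assertion then reduces to the observation that the common eigenvector $x$ is precisely what is forbidden by the no-common-eigenvector hypothesis.

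The main obstacle I anticipate is not the inequalities themselves but this equality analysis: when $\lambda_i(A)$ or $\lambda_j(B)$ has multiplicity larger than one, the candidate vector $x$ is not canonically one of the basis vectors $u_k$ or $v_k$ but only some element of the corresponding eigenspace, so the orthogonal decomposition inside $U\cap V$ must be handled carefully. Once that is executed cleanly, the strictness conclusion drops out as an immediate corollary, and the corresponding statements for \eqref{weyl_lower_bound} transfer through the same $-A,-B$ substitution used for the inequality itself.
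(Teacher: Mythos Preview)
The paper does not supply its own proof of this theorem: Theorem~\ref{theo::weyl} is stated in the Preliminaries section with a citation to Horn and Johnson and is used as a black box throughout Section~\ref{sec::main}. There is therefore nothing in the paper to compare your argument against.

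That said, your proposal is the standard Courant--Fischer proof of Weyl's inequalities and is correct in outline. The dimension count for $U\cap V$, the use of a subspace of $U\cap V$ as a competitor in the min--max formula, the reduction of \eqref{weyl_lower_bound} to \eqref{weyl_upper_bound} via $A\mapsto -A$, $B\mapsto -B$, and the equality analysis via the expansion $x=\sum_{k\ge i}c_k u_k$ are all sound. One small remark on the index bookkeeping: under the substitution $(i,j)\mapsto(n-i+1,n-j+1)$ the range $j'=1,\ldots,n-i'+1$ becomes $j\ge n-i+1$, which is exactly the constraint needed for $i+j-n\ge 1$; the range ``$j=i,\ldots,n$'' printed in the statement appears to be a typo in the paper (and in the source it cites), so you should not try to reproduce it literally.
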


As consequence of Theorem \ref{theo::weyl}, follows Corollary \ref{cor::weyl}.

\begin{corollary} \label{cor::weyl}
\cite{horn2013matrix} Let be $A, B \in M_n({\rm I\!R})$ Hermitian. Then,
\begin{equation} \label{weyl_inequalities}
\lambda_i(A) + \lambda_n(B) \leq \lambda_i(A+B) \leq \lambda_i(A) + \lambda_1(B),
\end{equation}
with $i = 1, \ldots, n$. Equality in the upper bound holds if and only if there is nonzero vector $x$  that is eigenvector of $A,B$ and $ A+B$ with corresponding eigenvalues $\lambda_i$, $\lambda_1$ and $\lambda_i$, respectively. Analogously, equality in the lower bound holds if and only if there is nonzero vector $x$  that is eigenvector of $A,B$ and $ A+B$ with corresponding eigenvalues $\lambda_i$, $\lambda_n$ and $\lambda_i$, respectively.
\end{corollary}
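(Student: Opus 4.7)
The plan is to obtain Corollary \ref{cor::weyl} as an immediate specialization of Theorem \ref{theo::weyl}. The two inequalities in (\ref{weyl_inequalities}) are each a one-index-restricted version of the two halves of Weyl's theorem: the upper estimate comes from (\ref{weyl_upper_bound}) by choosing the second index to be the extreme value that makes $\lambda_j(B)=\lambda_1(B)$, and the lower estimate comes from (\ref{weyl_lower_bound}) by choosing the second index to be the extreme value that makes $\lambda_j(B)=\lambda_n(B)$.

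For the upper bound, I would set $j=1$ in (\ref{weyl_upper_bound}). Since $j$ there ranges over $1,\ldots,n-i+1$, this choice is always legal. Substituting yields $\lambda_{i+1-1}(A+B)\leq \lambda_i(A)+\lambda_1(B)$, i.e.\ $\lambda_i(A+B)\leq \lambda_i(A)+\lambda_1(B)$. The equality clause in Theorem \ref{theo::weyl} applied with $j=1$ gives exactly the condition stated in the corollary: a nonzero $x$ with $Ax=\lambda_i(A)x$, $Bx=\lambda_1(B)x$, and $(A+B)x=\lambda_i(A+B)x$.

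For the lower bound, I would set $j=n$ in (\ref{weyl_lower_bound}). There $j$ ranges over $i,\ldots,n$, so this choice is legal for every $i$. Substituting gives $\lambda_i(A)+\lambda_n(B)\leq \lambda_{i+n-n}(A+B)=\lambda_i(A+B)$, which is the lower half of (\ref{weyl_inequalities}). The associated equality case is inherited verbatim from the corresponding clause of Theorem \ref{theo::weyl} with $j=n$.

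There is essentially no obstacle: the whole argument is an exercise in specializing the free index $j$ and reading off the equality cases. The only item that deserves a brief explicit remark in the writeup is the verification that $j=1$ and $j=n$ lie in the permitted ranges of (\ref{weyl_upper_bound}) and (\ref{weyl_lower_bound}) respectively, so that Theorem \ref{theo::weyl} applies without caveat for all $i=1,\ldots,n$.
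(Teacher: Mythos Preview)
Your proposal is correct and matches the paper's approach: the paper simply states that Corollary~\ref{cor::weyl} follows as a consequence of Theorem~\ref{theo::weyl} without spelling out the details, and your specialization $j=1$ in (\ref{weyl_upper_bound}) and $j=n$ in (\ref{weyl_lower_bound}) is precisely the intended derivation.
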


\begin{lemma} \label{lemma::rowsums}
\cite{ELLINGHAM200045} Let $G$ be a connected graph with $n$ vertices and $A(G)$ its adjacency matrix. Let $P(x)$ be any polynomial function and $S_v(P(A(G)))$ be the row sums of $P(A(G))$ corresponding to each vertex $v$. Then
$$\min{S_v(P(A))} \leq P(\lambda_1(A(G))) \leq \max{S_v(P(A))}.$$
Moreover, equality holds if and only if the row sums of $P(A(G))$ are all equal.
\end{lemma}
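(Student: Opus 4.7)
The plan is to exploit the fact that a connected graph has an adjacency matrix that is irreducible and nonnegative, so Perron--Frobenius supplies a \emph{strictly positive} eigenvector for $\lambda_1(A(G))$. Pairing this eigenvector with the all-ones vector will express $P(\lambda_1(A(G)))$ as a positively weighted average of the row sums $S_v(P(A(G)))$, from which the two-sided bound and the equality case drop out immediately.

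In more detail, first I would invoke Perron--Frobenius to produce $x>0$ with $A(G)x=\lambda_1(A(G))x$; since $P(A(G))$ is a polynomial in $A(G)$, the same vector satisfies $P(A(G))x=P(\lambda_1(A(G)))x$. Next, let $\mathbf{1}\in\real^n$ denote the all-ones vector, so that $\bigl(P(A(G))\mathbf{1}\bigr)_v=S_v(P(A(G)))$ by the very definition of a row sum. The key step is to compute the scalar $x^{T}P(A(G))\mathbf{1}$ in two ways using symmetry of $P(A(G))$:
\begin{equation*}
\sum_{v} x_v\, S_v(P(A(G)))\;=\;x^{T}P(A(G))\mathbf{1}\;=\;\bigl(P(A(G))x\bigr)^{T}\mathbf{1}\;=\;P(\lambda_1(A(G)))\sum_{v} x_v.
\end{equation*}

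Dividing by $\sum_v x_v>0$ exhibits $P(\lambda_1(A(G)))$ as a convex combination of the row sums $S_v(P(A(G)))$ with strictly positive coefficients $x_v/\sum_u x_u$. The inequalities $\min_v S_v(P(A(G)))\le P(\lambda_1(A(G)))\le \max_v S_v(P(A(G)))$ then follow at once, and equality in either bound forces $S_v(P(A(G)))$ to be constant across all $v$ (because every weight is strictly positive). Conversely, if all row sums equal a common value $c$, then $P(A(G))\mathbf{1}=c\mathbf{1}$, so $c$ is an eigenvalue of $P(A(G))$; one checks using the Perron eigenvector that $c=P(\lambda_1(A(G)))$, giving the converse.

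The only place that requires care is the strict positivity of the Perron eigenvector: without connectedness of $G$, $A(G)$ may fail to be irreducible and one could only guarantee a nonnegative eigenvector, which is not enough to conclude the equality characterization. So I expect this Perron--Frobenius step, and specifically the appeal to irreducibility via connectedness, to be the main subtlety; everything else is a one-line symmetry calculation.
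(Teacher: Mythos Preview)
Your argument is correct and is the standard Perron--Frobenius weighted-average proof of this result: expressing $P(\lambda_1(A(G)))$ as a strictly positive convex combination of the row sums via the Perron eigenvector is exactly the right mechanism, and your treatment of the equality case is clean. Note, however, that the paper itself does not supply a proof of this lemma; it is simply quoted as a known preliminary from the cited reference, so there is no in-paper argument against which to compare yours. For what it is worth, the original source establishes it by essentially the same method you outline.
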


The proof of Lemma \ref{lemma:: rowsumsBound} is presented because it is used to prove Theorem \ref{theo::UpperBound2_SpectralRadius} in the next section.

\begin{lemma} \label{lemma:: rowsumsBound}
\cite{HONG2001177} Let $G$ be a graph with $n$ vertices, $m$ edges and minimum degree $\delta$. Then, $S_v(A^2(G) - (\delta - 1)A(G)) \leq 2m -\delta(n-1)$.
\end{lemma}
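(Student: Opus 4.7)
The plan is to expand the row sum $S_v(A^2 - (\delta-1)A)$ combinatorially and then bound the contribution coming from the non-neighbors of $v$ using the minimum degree condition.

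First, I would compute $S_v(A^2)$ via a walk-counting interpretation: since $(A^2)_{vj}$ counts walks of length two from $v$ to $j$, summing over $j$ gives the total number of length-two walks starting at $v$, which equals $\sum_{u \sim v} d(u)$. Combined with the elementary identity $S_v(A) = d(v)$, this yields
\[
S_v\bigl(A^2 - (\delta-1)A\bigr) = \sum_{u \sim v} d(u) - (\delta-1)\,d(v).
\]

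Next, I would use the handshake identity $2m = \sum_{u \in V} d(u)$, split according to whether $u = v$, $u \sim v$, or $u \not\sim v$ with $u \ne v$, to rewrite
\[
\sum_{u \sim v} d(u) = 2m - d(v) - \sum_{\substack{u \not\sim v \\ u \ne v}} d(u).
\]
Substituting and collecting the coefficient of $d(v)$ (which simplifies to $-\delta$) produces
\[
S_v\bigl(A^2 - (\delta-1)A\bigr) = 2m - \delta\, d(v) - \sum_{\substack{u \not\sim v \\ u \ne v}} d(u).
\]

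Finally, I would apply the minimum degree bound $d(u) \geq \delta$ to each of the $n - 1 - d(v)$ non-neighbors of $v$, obtaining $\sum_{u \not\sim v,\, u \ne v} d(u) \geq \delta(n-1-d(v))$. Plugging this in, the $\delta\, d(v)$ terms cancel cleanly and the bound $2m - \delta(n-1)$ emerges. The argument needs no tool beyond careful counting; the only place one can slip is forgetting to exclude $v$ itself from the set of non-neighbors, so the main obstacle is purely bookkeeping.
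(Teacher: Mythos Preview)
Your proposal is correct and follows essentially the same route as the paper: both use the walk-counting interpretation $S_v(A^2)=\sum_{u\sim v} d(u)$, rewrite this via the handshake identity as $2m - d(v) - \sum_{u\nsim v,\,u\neq v} d(u)$, and then bound the non-neighbor sum below by $\delta(n-1-d(v))$. The only cosmetic difference is that the paper first bounds $S_v(A^2)$ alone and then subtracts $(\delta-1)S_v(A)$, whereas you carry the $(\delta-1)d(v)$ term along from the start; the arithmetic is identical.
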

\begin{proof}
Note that $S_v(A^k(G))$ is exactly the number of walks of length $k$ in $G$ which begin at $v$. In particular, $S_v(A(G))$ is $d(v)$ and $S_v(A^2(G))= \displaystyle \sum_{u \sim v} d(u)$. So,
\begin{align*}
    S_v(A^2(G)) &= \displaystyle \sum_{u \sim v} d(u) \\
    &= 2m - d(v) - \displaystyle \sum_{u \nsim v, u \neq v} d(u)\\
    &\leq 2m - d(v) - (n - d(v) - 1)\delta \\
    &= 2m + (\delta - 1)d(v) - \delta(n-1).
\end{align*}
Hence,
$$S_v(A^2(G) - (\delta - 1)A(G)) \leq 2m - \delta(n-1).$$
\end{proof}

The next two theorems present a lower and an upper bound, respectively, for $Z_1(G)$ using $\Delta$, $\delta$, $m$ and $n$.

\begin{theorem} \label{theo::sharp_DAS}
	\cite{Das2003SHARPBF} Let $G$ be a simple graph with $n$ vertices and $m$ edges. Let $\delta$ and $\Delta$ be the minimum  and the maximum degree of $G$, respectively. Then, for $n \geq 3$, $\displaystyle Z_1(G) \geq \Delta^2 + \delta^2 + \frac{(2m - \Delta - \delta)^2}{n-2}$. Furthermore, equality occurs if, and only if, $d_2 = \ldots = d_{n-1}$.
\end{theorem}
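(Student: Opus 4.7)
The plan is to isolate the extremal degrees $d_1=\Delta$ and $d_n=\delta$ in the Zagreb sum and then apply the Cauchy--Schwarz (equivalently, QM--AM) inequality to the remaining $n-2$ degrees. More precisely, I would first write
\[
Z_1(G) \;=\; \sum_{i=1}^{n} d_i^{\,2} \;=\; \Delta^{2} + \delta^{2} + \sum_{i=2}^{n-1} d_i^{\,2}.
\]
Since $\sum_{i=1}^{n} d_i = 2m$, the inner sum of degrees is $\sum_{i=2}^{n-1} d_i = 2m - \Delta - \delta$, so the task reduces to bounding $\sum_{i=2}^{n-1} d_i^{\,2}$ from below in terms of this fixed sum.

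Next I would apply the Cauchy--Schwarz inequality to the $n-2$ real numbers $d_2,\ldots,d_{n-1}$ against the all-ones vector of the same length:
\[
(n-2)\sum_{i=2}^{n-1} d_i^{\,2} \;\geq\; \Bigl(\sum_{i=2}^{n-1} d_i\Bigr)^{2} \;=\; (2m - \Delta - \delta)^{2}.
\]
Dividing by $n-2$ (which is positive since $n \geq 3$) and substituting back yields
\[
Z_1(G) \;\geq\; \Delta^{2} + \delta^{2} + \frac{(2m - \Delta - \delta)^{2}}{n-2},
\]
which is the desired inequality.

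For the equality characterization I would invoke the equality case of Cauchy--Schwarz: equality in the step above holds exactly when the vector $(d_2,\ldots,d_{n-1})$ is a scalar multiple of $(1,\ldots,1)$, that is, when $d_2 = d_3 = \cdots = d_{n-1}$. Conversely, if the middle degrees are all equal then each inequality above is saturated and the bound is attained. I do not anticipate a serious obstacle here; the only point to watch is that $n\geq 3$ is needed so that the denominator $n-2$ is positive and the Cauchy--Schwarz step is legitimate, and that no monotonicity assumption on the $d_i$ beyond $d_1=\Delta$, $d_n=\delta$ is required for the argument.
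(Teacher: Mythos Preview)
Your argument is correct and is exactly the standard proof of this bound: split off $d_1=\Delta$ and $d_n=\delta$, then apply Cauchy--Schwarz (power-mean) to the remaining $n-2$ degrees. Note, however, that the paper itself does not supply a proof of this statement; it is quoted as a known result from \cite{Das2003SHARPBF}, so there is no ``paper's own proof'' to compare against---your proposal simply reproduces the expected argument from the original source.
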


\begin{theorem} \label{theo::sharp_DAS_upper}
\cite{Das2003SHARPBF} Let $G$ be a connected graph with $n$ vertices and $m$ edges. Let $\delta$ be the minimum degree of $G$.  Then, $\displaystyle Z_1(G) \leq 2mn -n(n-1)\delta + 2m(\delta-1)$. Moreover, the equality holds if, and only if, $G$ is a star graph or a regular graph.
\end{theorem}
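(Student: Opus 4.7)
The plan is to exploit the elementary fact that every vertex degree satisfies $\delta \leq d_i \leq n-1$ in any simple graph on $n$ vertices with minimum degree $\delta$. From this one-line observation the bound will fall out by a single summation, with the equality discussion requiring a short combinatorial analysis of the degree sequence.

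First I would note that, for each $i = 1, \ldots, n$,
\begin{equation*}
(d_i - \delta)\bigl((n-1) - d_i\bigr) \geq 0,
\end{equation*}
which, after expanding, is equivalent to
\begin{equation*}
d_i^2 \leq (n - 1 + \delta)\, d_i - \delta(n-1).
\end{equation*}
Summing over $i = 1, \ldots, n$ and applying the handshake identity $\sum_{i=1}^{n} d_i = 2m$ gives
\begin{equation*}
Z_1(G) = \sum_{i=1}^{n} d_i^2 \;\leq\; (n-1+\delta)(2m) - n\delta(n-1) \;=\; 2mn + 2m(\delta-1) - n(n-1)\delta,
\end{equation*}
which is precisely the claimed inequality.

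For the equality statement, the above derivation shows that $Z_1(G)$ attains the bound if and only if $(d_i - \delta)((n-1) - d_i) = 0$ for every $i$, so every vertex has degree either $\delta$ or $n - 1$. I would then split into two cases. If every $d_i = \delta$, the graph is $\delta$-regular. Otherwise there exists at least one vertex $v$ of degree $n-1$ (a universal vertex), and then the assumption $\delta \leq d_j \leq n-1$ combined with connectedness would be used to conclude that the only remaining possibility consistent with the degree set $\{\delta, n-1\}$ and the quoted characterization is the star $K_{1,n-1}$ (where $\delta = 1$ and the center is the unique universal vertex).

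The main obstacle is the equality analysis rather than the inequality itself: ruling out, under the hypotheses of the theorem, every other connected configuration with two-valued degree sequence $\{\delta, n-1\}$ (for instance graphs with several universal vertices whose remaining vertices form a regular subgraph) is the delicate part and likely requires a careful counting argument on the induced subgraph spanned by the low-degree vertices, together with the handshake identity, to force the star structure.
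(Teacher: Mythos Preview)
The paper does not supply its own proof of this theorem; it is quoted from Das's 2003 paper as a preliminary result and is used later only via the inequality (in the proof of the bound \eqref{eq::UpperBound1}). So there is no in-paper argument to compare against, and I evaluate your proposal on its own merits.

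Your derivation of the inequality is correct and is in fact the standard one: the pointwise bound $(d_i-\delta)\bigl((n-1)-d_i\bigr)\ge 0$ summed over all vertices gives exactly $Z_1(G)\le 2mn - n(n-1)\delta + 2m(\delta-1)$.

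The gap is entirely in the equality discussion, and your instinct that ``ruling out every other connected configuration with two-valued degree sequence $\{\delta,n-1\}$'' is the delicate part is well placed --- in fact it cannot be done, because the characterisation ``star or regular'' quoted in the statement is incomplete. Your own argument already shows that equality holds precisely when every vertex has degree $\delta$ or $n-1$, and there are connected graphs with this property that are neither stars nor regular. For instance, the wheel $W_5=K_1\vee C_4$ has $n=5$, $m=8$, $\delta=3$, degree sequence $(4,3,3,3,3)$, and
\[
Z_1(W_5)=16+4\cdot 9=52=2\cdot 8\cdot 5-5\cdot 4\cdot 3+2\cdot 8\cdot 2,
\]
so equality holds while $W_5$ is neither a star nor regular. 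More generally, any graph of the form $K_k\vee H$ with $H$ a $(\delta-k)$-regular graph on $n-k$ vertices (including the complete split graphs $K_k\vee\overline{K_{n-k}}$) achieves equality. Thus no ``careful counting argument'' will force the star structure; the correct equality condition is simply that $d_i\in\{\delta,\,n-1\}$ for every $i$, and your proof already establishes this. The discrepancy is with the cited statement, not with your reasoning.
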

The next results involve properties and bounds for the largest, the second largest and the smallest eigenvalues of $A(G)$.

\begin{theorem} \label{theo::bipartite_spectrum}
\cite{cvetkovic2009} A graph $G$ is bipartite if and only if its spectrum is symmetric about the origin. 
\end{theorem}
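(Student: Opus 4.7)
The plan is to prove the two directions separately, using the block structure of $A(G)$ for the ``only if'' direction and the trace--walks correspondence for the ``if'' direction.

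For the forward implication, suppose $G$ is bipartite with bipartition $V = V_1 \cup V_2$. Ordering the vertices so that those of $V_1$ come first, the adjacency matrix takes the block form
$$
A(G) = \begin{pmatrix} 0 & B \\ B^T & 0 \end{pmatrix},
$$
where the zero blocks are square of orders $|V_1|$ and $|V_2|$. If $(x^T, y^T)^T$ is an eigenvector of $A(G)$ with eigenvalue $\lambda$, a direct computation shows that $(x^T, -y^T)^T$ satisfies $A(G)(x^T,-y^T)^T = -\lambda (x^T,-y^T)^T$. Moreover, this sign-flip map on eigenvectors is an involution that preserves linear independence inside each eigenspace, so the geometric multiplicities of $\lambda$ and $-\lambda$ coincide. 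Since $A(G)$ is symmetric, algebraic and geometric multiplicities agree, and the spectrum is symmetric about $0$.

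For the converse, I would use the standard fact that $\operatorname{tr}(A(G)^k)$ equals the number of closed walks of length $k$ in $G$, together with $\operatorname{tr}(A(G)^k) = \sum_{i=1}^n \lambda_i(A(G))^k$. If the $A$-spectrum is symmetric about the origin, then for every odd $k$ the terms $\lambda_i^k$ and $(-\lambda_i)^k$ cancel in pairs, so $\operatorname{tr}(A(G)^k) = 0$. In particular, for each odd $k \geq 3$, $G$ contains no closed walk of length $k$, and hence no cycle of odd length. A graph with no odd cycle is bipartite (a classical characterization, provable by $2$-coloring each connected component via BFS), which finishes the proof.

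The main technical point is the converse: one must argue cleanly from ``odd trace powers vanish'' to ``no odd cycles.'' The subtlety is that vanishing of $\operatorname{tr}(A^k)$ for odd $k$ rules out \emph{closed walks} of odd length, not immediately cycles; but the presence of an odd cycle $C_{2\ell+1}$ in $G$ would give rise to at least $2(2\ell+1)$ closed walks of length $2\ell+1$ (traversing the cycle in either direction from each starting vertex), which cannot be cancelled by anything negative since the count is a nonnegative integer. Thus the absence of odd closed walks truly forces the absence of odd cycles, and bipartiteness follows.
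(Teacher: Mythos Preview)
Your proof is correct and follows the standard textbook argument. Note, however, that the paper does not actually prove this statement: Theorem~\ref{theo::bipartite_spectrum} is quoted in the Preliminaries section as a known result from \cite{cvetkovic2009}, with no proof given. So there is no ``paper's own proof'' to compare against; your argument is essentially the one found in the cited reference.
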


\begin{proposition} \label{prop::regular_graph_spectrum}
\cite{cvetkovic2009} Let $G$ be a $r$-regular graph. Then
\begin{itemize}
\item [(i)] $r$ is an eigenvalue of $A(G)$;
\item[(ii)] $G$ is a connected graph if and only if the algebraic multiplicity of $r$ is $1$;
\item[(iii)] any $\lambda$ eigenvalue of $A(G)$ satisfies $|\lambda| \leq r$.
\end{itemize}
\end{proposition}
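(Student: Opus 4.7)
The plan is to prove the three parts in order, using only elementary linear algebra together with a Perron--Frobenius style maximum-modulus argument.

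For part (i), I would observe that $r$-regularity means every row of $A(G)$ sums to $r$, so the all-ones vector $\mathbf{1}\in\real^n$ satisfies $A(G)\mathbf{1} = r\mathbf{1}$. This exhibits $r$ as an eigenvalue with $\mathbf{1}$ as an eigenvector.

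For part (iii), which I would do before (ii) since it feeds into it, I would start with an arbitrary eigenpair $A(G)x = \lambda x$, $x\neq 0$, and pick an index $i$ where $|x_i|$ is maximal (so $|x_i|>0$). Writing out the $i$-th equation gives
\begin{equation*}
\lambda x_i \;=\; \sum_{j\sim i} x_j,
\end{equation*}
so that by the triangle inequality and the choice of $i$,
\begin{equation*}
|\lambda|\,|x_i| \;\leq\; \sum_{j\sim i}|x_j| \;\leq\; d(i)\,|x_i| \;=\; r\,|x_i|.
\end{equation*}
Dividing by $|x_i|>0$ yields $|\lambda|\leq r$. In particular, $r$ is the largest eigenvalue of $A(G)$.

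For part (ii), I would argue both directions. If $G$ is disconnected with connected components $G_1,\ldots,G_k$ (with $k\geq 2$), each component is itself $r$-regular, so by part (i) applied to each $G_t$ the characteristic vector of $V(G_t)$ is an eigenvector of $A(G)$ for the eigenvalue $r$. These $k$ vectors have pairwise disjoint supports and are therefore linearly independent, so the multiplicity of $r$ is at least $k>1$. Conversely, if $G$ is connected, I would suppose $A(G)x = r x$ for some $x\neq 0$ and show $x$ is a scalar multiple of $\mathbf{1}$. Taking the maximum-modulus argument above but now with $\lambda = r$ forces equality throughout, which in particular forces $x_j = x_i$ for every neighbour $j$ of the maximizing index $i$; propagating this equality along paths (which exist by connectedness) shows $x$ is constant, proving the multiplicity of $r$ equals $1$. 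The only step requiring care is this propagation: I would need to note that equality in the triangle inequality with a positive scalar $\lambda=r>0$ (and in the case $r=0$ the graph is edgeless, handled separately) forces the $x_j$ for $j\sim i$ to share both sign and magnitude with $x_i$, so each neighbour is itself a maximum-modulus index, allowing the argument to be iterated. This Perron-type propagation is the most delicate point of the proof; everything else is a direct computation.
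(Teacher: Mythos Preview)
Your argument is correct and is the standard elementary proof of this classical fact. Note, however, that the paper does not give its own proof of this proposition: it is stated in the Preliminaries section as a known result cited from \cite{cvetkovic2009}, with no accompanying argument. So there is nothing in the paper to compare your approach against; you have simply supplied the (textbook) proof that the authors chose to omit.
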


\begin{proposition} \label{prop::LowerBound}
\cite{HOFFMAN1995105} Let $G$ be a graph with $m$ edges, then
$$\lambda_1(A(G)) \geq \frac{1}{m}\sum_{i \sim j}\sqrt{d_id_j}.$$
Equality holds if, and only if, $G$ is regular or semi-regular bipartite.
\end{proposition}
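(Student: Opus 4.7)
The plan is to apply the Rayleigh--Ritz variational principle with a carefully chosen test vector. Since
\[
\lambda_1(A(G)) \geq \frac{y^{T} A(G)\, y}{y^{T} y}
\]
for every nonzero $y \in \real^n$, with equality iff $y$ is an eigenvector for $\lambda_1(A(G))$, I would take $y_i = \sqrt{d_i}$, as suggested by the form of the right-hand side of the inequality. A direct computation gives $y^{T} y = \sum_{i=1}^n d_i = 2m$ and $y^{T} A(G)\, y = \sum_{i,j} a_{ij}\sqrt{d_i d_j} = 2 \sum_{i \sim j}\sqrt{d_i d_j}$, and dividing produces the stated lower bound.

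For the equality characterization, the easy direction is a direct verification that the chosen $y$ is an eigenvector of $A(G)$. If $G$ is $r$-regular, then $y = \sqrt{r}\,\mathbf{1}$ is the Perron eigenvector (Proposition \ref{prop::regular_graph_spectrum}) with eigenvalue $r$. If $G$ is semi-regular bipartite with parameters $(n_1, n_2, r_1, r_2)$, then $y$ is constant on each part (value $\sqrt{r_1}$ on $V_1$ and $\sqrt{r_2}$ on $V_2$), and a one-line computation on each part gives $A(G)\, y = \sqrt{r_1 r_2}\, y$, so equality holds in Rayleigh.

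For the converse, assume $G$ is connected (the general case reduces to components) and equality holds, so $y = (\sqrt{d_i})$ is an eigenvector of $A(G)$ for $\lambda = \lambda_1$, giving $\sum_{j \sim i}\sqrt{d_j} = \lambda\sqrt{d_i}$ for every vertex $i$. Squaring and applying Cauchy--Schwarz yields the per-vertex bound $\lambda^{2} \leq \sum_{j \sim i} d_j$, with equality iff all neighbors of $i$ share a common degree. Summing over $i$ gives $n\lambda^{2} \leq Z_1(G)$, and combining this with the companion lower bound $\lambda_1^{2} \geq Z_1(G)/n$ (obtained by applying Rayleigh to $A^2$ with the all-ones vector) forces equality throughout. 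Consequently every vertex has all its neighbors of a single common degree, and then $d_i d_j = \lambda^2$ for every edge $\{i,j\}$. A standard alternating-degree argument along walks from a fixed vertex $v_0$ (degrees $a$ at even distance, $b$ at odd distance) shows that either $G$ contains an odd cycle and hence $a = b$, making $G$ regular, or $G$ is bipartite with precisely two degree classes, making $G$ semi-regular bipartite.

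The main obstacle is the converse direction: the per-vertex Cauchy--Schwarz inequality $\lambda^2 \leq \sum_{j\sim i} d_j$ is not a priori tight from the eigenvector equation alone. The resolution is to couple it with the Hofmeister-type lower bound $\lambda_1^{2} \geq Z_1(G)/n$ so that the two estimates squeeze $\lambda_1^{2} = Z_1(G)/n$, tightening every per-vertex inequality and thereby exposing the combinatorial structure needed to identify $G$ as regular or semi-regular bipartite.
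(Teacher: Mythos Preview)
The paper does not supply its own proof of this proposition; it is quoted from \cite{HOFFMAN1995105} as a preliminary result, so there is no in-paper argument to compare against. That said, your Rayleigh step with test vector $y_i=\sqrt{d_i}$ is exactly the device the authors deploy in the proof of Theorem~\ref{theo::LowerBound_SpectralRadius} for the $A_\alpha$ generalisation, so in spirit your approach to the inequality matches the paper's methods.

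Your argument is correct. The inequality and the ``if'' direction of the equality case are standard verifications. For the ``only if'' direction, the route you take---derive $\lambda_1^2\le\sum_{j\sim i}d_j$ at each vertex from Cauchy--Schwarz applied to the eigenvector equation, sum to $n\lambda_1^2\le Z_1(G)$, and squeeze against the Hofmeister bound $\lambda_1^2\ge Z_1(G)/n$---is valid. The Hofmeister inequality is justified as you indicate (Rayleigh on $A^2$ with the all-ones vector, using $\lambda_1(A^2)=\lambda_1(A)^2$ for nonnegative symmetric $A$). Once every per-vertex Cauchy--Schwarz is tight you correctly extract $d_id_j=\lambda_1^2$ on each edge, and the parity argument along walks finishes the classification.

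One caveat: your parenthetical ``the general case reduces to components'' hides a wrinkle in the \emph{statement}, not in your reasoning. Equality forces each nontrivial component to be regular or semi-regular bipartite with the same $\lambda_1$, but a disjoint union such as $C_4\cup K_{1,4}$ (both components with $\lambda_1=2$) attains equality while being neither globally regular nor globally semi-regular bipartite. The equality characterisation as quoted is really a statement about connected graphs.
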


\begin{theorem} \label{theo::UpperBound_Hong}
\cite{YUAN1988135} Let $G$ be a connected graph with $m$ edges and $n$ vertices. Then,
$$\lambda_1(A(G)) \leq \sqrt{2m-n+1},$$
with equality if, and only if, $G \cong K_n$ or $G \cong K_{1, n-1}$.
\end{theorem}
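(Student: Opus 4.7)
The plan is to combine Lemma \ref{lemma:: rowsumsBound} with the polynomial row-sum technique of Lemma \ref{lemma::rowsums}, and to close the remaining gap with the trivial bound $\lambda_{1}(A(G))\leq \Delta(G)\leq n-1$. The case $n=1$ is immediate since both sides vanish, so we may assume $n\geq 2$, which (with $G$ connected) gives $\delta=\delta(G)\geq 1$. I would then choose the polynomial $P(x)=x^{2}-(\delta-1)x$, tailored precisely to match the matrix $A^{2}(G)-(\delta-1)A(G)$ appearing in Lemma \ref{lemma:: rowsumsBound}.

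Invoking Lemma \ref{lemma:: rowsumsBound} yields, for every vertex $v$,
$$S_{v}\bigl(A^{2}(G)-(\delta-1)A(G)\bigr)\leq 2m-\delta(n-1).$$
Feeding $P$ into Lemma \ref{lemma::rowsums} then gives
$$\lambda_{1}^{2}(A(G))-(\delta-1)\lambda_{1}(A(G))\leq 2m-\delta(n-1).$$
Since $\lambda_{1}(A(G))\leq n-1$ and $\delta-1\geq 0$, one has $(\delta-1)\lambda_{1}(A(G))\leq(\delta-1)(n-1)$. Substituting and simplifying collapses the right-hand side to $2m-n+1$, and taking square roots finishes the inequality.

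For the equality analysis I would split on $\delta$. If $\delta\geq 2$, tightness of $(\delta-1)\lambda_{1}(A(G))\leq(\delta-1)(n-1)$ forces $\lambda_{1}(A(G))=n-1$, which for a connected graph is equivalent to $G\cong K_{n}$ (via $\lambda_{1}=\Delta=n-1$ and regularity). If $\delta=1$ that estimate is automatic, so tightness must come from equality in Lemmas \ref{lemma::rowsums} and \ref{lemma:: rowsumsBound} simultaneously. Tracing back the displayed chain in the proof of Lemma \ref{lemma:: rowsumsBound}, equality there forces every non-neighbour of every vertex to have degree exactly $1$. Applying this to a pendant vertex $u$ with its unique neighbour $z$ obliges all of the remaining $n-2$ vertices to be pendants as well; connectedness then attaches each of them to $z$, yielding $G\cong K_{1,n-1}$. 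The converse is a direct spectral computation: $\lambda_{1}(K_{n})=n-1$ and $\lambda_{1}(K_{1,n-1})=\sqrt{n-1}$ both meet the bound.

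The main obstacle I anticipate is the $\delta=1$ branch of the equality case, where the slack step $\lambda_{1}\leq n-1$ contains no information, so the extremal graph must be pinned down entirely from the tightness of the two lemmas. There one has to combine the structural constraint from equality in Lemma \ref{lemma:: rowsumsBound} (every non-neighbour of every vertex is pendant) with the equal-row-sums condition of Lemma \ref{lemma::rowsums} and then invoke connectedness to rule out disconnected collections of pendant edges, before concluding that $G\cong K_{1,n-1}$.
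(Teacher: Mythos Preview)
The paper does not contain its own proof of this theorem: it is listed in the Preliminaries as a cited result from \cite{YUAN1988135} and is never proved (nor even invoked) in the body of the paper. So there is no ``paper's proof'' to compare against.

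That said, your argument is correct and is precisely the row-sum approach of Hong--Shu--Fang \cite{HONG2001177}, which the paper itself packages as Lemmas \ref{lemma::rowsums} and \ref{lemma:: rowsumsBound} and re-uses in the proof of Theorem \ref{theo::UpperBound2_SpectralRadius}. Your chain $\lambda_{1}^{2}-(\delta-1)\lambda_{1}\leq 2m-\delta(n-1)$ together with $(\delta-1)\lambda_{1}\leq(\delta-1)(n-1)$ cleanly collapses to $\lambda_{1}^{2}\leq 2m-n+1$. The equality analysis is also sound: for $\delta\geq 2$ you correctly extract $\lambda_{1}=n-1\Rightarrow G\cong K_{n}$; for $\delta=1$ you correctly observe that equality must hold in both lemmas, so every row sum of $A^{2}$ equals $2m-(n-1)$ and hence, for every vertex $v$, every non-neighbour of $v$ is pendant. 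Your argument that this forces $K_{1,n-1}$ is right; the only point worth stating more explicitly is that equality in Lemma \ref{lemma::rowsums} forces \emph{all} row sums to coincide, so equality in Lemma \ref{lemma:: rowsumsBound} is required at \emph{every} vertex simultaneously (you do use this when you pick a second pendant $w$ and conclude its neighbour must also be $z$).
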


\begin{theorem}\label{theo::UpperBound_Pintu}
\cite{Pintu2019} Suppose $G$ be graph with $n$ vertices and $m$ edges. Let $\lambda_1(A(G))$
be the largest eigenvalue of the adjacency matrix $A(G)$. Then
\begin{equation*}
    \lambda_1(A(G)) \leq \displaystyle \sqrt{\max_{1 \leq i \leq n} \sum_{\substack{j \\ v_i \sim v_j}}d_j}
\end{equation*}
\end{theorem}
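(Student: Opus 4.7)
The plan is to apply Lemma \ref{lemma::rowsums} with the polynomial $P(x)=x^2$. The quantity on the right-hand side of the theorem, $\sum_{j:\,v_i\sim v_j} d_j$, is exactly the $i$-th row sum of $A(G)^2$, since
$$S_{v_i}(A^2) \;=\; \sum_{k=1}^{n}(A^2)_{ik} \;=\; \sum_{k=1}^{n}\sum_{j=1}^{n} A_{ij}A_{jk} \;=\; \sum_{j:\,v_j\sim v_i}\,\sum_{k=1}^{n} A_{jk} \;=\; \sum_{j:\,v_j\sim v_i} d_j.$$
(Equivalently, $S_{v_i}(A^2)$ counts walks of length $2$ starting at $v_i$.) So the strategy is to bound $\lambda_1(A(G))^2$ by the maximum row sum of $A(G)^2$.

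Next, I would apply Lemma \ref{lemma::rowsums} with $P(x) = x^2$ to obtain
$$\lambda_1(A(G))^2 \;=\; P(\lambda_1(A(G))) \;\leq\; \max_{1\leq i\leq n} S_{v_i}(P(A)) \;=\; \max_{1\leq i\leq n}\sum_{j:\,v_i\sim v_j} d_j.$$
Since $A(G)$ is a nonnegative symmetric matrix, $\lambda_1(A(G))\geq 0$, so taking square roots yields the stated inequality.

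The only delicate point is that Lemma \ref{lemma::rowsums} is stated under the assumption that $G$ is connected, while the theorem makes no such hypothesis. If $G$ is disconnected, I would handle this by applying the lemma componentwise: $\lambda_1(A(G))$ equals $\lambda_1(A(H))$ for some connected component $H$, and the maximum row sum of $A(G)^2$ is at least the maximum row sum of $A(H)^2$. Alternatively one may invoke Perron--Frobenius directly: $A(G)^2$ is a nonnegative symmetric matrix whose spectral radius is $\lambda_1(A(G))^2$ and is therefore bounded above by its maximum row sum. This is really the only conceptual step in the argument; the rest is the identification of $\sum_{j\sim i} d_j$ with $S_{v_i}(A^2)$, which is the main idea driving the bound.
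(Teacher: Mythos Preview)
Your argument is correct: the identification $S_{v_i}(A^2)=\sum_{j\sim i} d_j$ together with Lemma~\ref{lemma::rowsums} applied to $P(x)=x^2$ gives the bound immediately, and your treatment of the disconnected case (componentwise, or via the row-sum bound for the spectral radius of the nonnegative matrix $A^2$) closes the only gap. Note, however, that the paper does not supply its own proof of this statement; Theorem~\ref{theo::UpperBound_Pintu} is quoted from \cite{Pintu2019} in the preliminaries without argument, so there is nothing in the paper to compare your approach against. Your proof is the standard one and is essentially how the result is derived in the literature.
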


\begin{theorem} \label{theo::LowerBound_SecondLargest_adjacency}
\cite{NIKIFOROV2006612} Let $G$ be a $r$-regular graph of order $n$ and independence number $\gamma(G)$, then 
$$ \lambda_2(A(G)) \geq -1 + \dfrac{2(n-1-r)^{\gamma(G)}}{\gamma(G) n^{\gamma(G) - 1}} $$
\end{theorem}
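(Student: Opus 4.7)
My plan is to pass to the complement graph and then use a variational argument exploiting the clique that $I$ produces there. Since $G$ is $r$-regular, the all-ones vector $\mathbf{1}$ is a common eigenvector of $A(G)$ and $A(\overline{G})$, and $A(G)+A(\overline{G})=J-I$ shows that on $\mathbf{1}^{\perp}$ one has $A(\overline{G})=-I-A(G)$. This gives the eigenvalue identification
\[
\lambda_i(A(G))=-1-\lambda_{n+2-i}(A(\overline{G}))\qquad (i=2,\dots,n),
\]
and in particular $\lambda_2(A(G))=-1-\lambda_n(A(\overline{G}))$. Thus the desired bound is equivalent to the upper bound
\[
\lambda_n(A(\overline{G}))\ \le\ -\,\frac{2(n-1-r)^{\gamma}}{\gamma\, n^{\gamma-1}}.
\]
Since $\overline{G}$ is $s$-regular with $s=n-1-r$, and the maximum independent set $I$ of $G$ induces a copy of $K_\gamma$ inside $\overline{G}$, the problem reduces to proving this upper bound on $\lambda_n$ from the presence of a clique $K_\gamma$ in an $s$-regular graph on $n$ vertices.

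For step two, I would apply Courant--Fischer to $A(\overline{G})$ restricted to $\mathbf{1}^{\perp}$: for any nonzero $y\perp\mathbf{1}$,
\[
\lambda_n(A(\overline{G}))\ \le\ \frac{y^{T}A(\overline{G})y}{y^{T}y}.
\]
The key is to pick $y$ so that both $y^{T}A(\overline{G})y$ is negative and can be controlled using the clique $K_\gamma$, while $y^{T}y$ involves $n^{\gamma-1}$ in the denominator. A natural candidate is to take $y=P_{\mathbf{1}^{\perp}}(A(\overline{G})^{k}\mathbf{1}_I)$ for a suitable exponent $k$: the regularity of $\overline{G}$ gives $\mathbf{1}^{T}A(\overline{G})^{j}\mathbf{1}_I=\gamma s^{j}$, which is exactly how the factor $s^{\gamma}=(n-1-r)^{\gamma}$ will materialize after squaring and subtracting the Perron component, while the quadratic form $\mathbf{1}_I^{T}A(\overline{G})^{2k+1}\mathbf{1}_I$ can be expanded in terms of walks in $\overline{G}$ and bounded below using walks confined to the clique $K_\gamma\subset\overline{G}$ (whose eigenvalues $\gamma-1$ and $-1$ are explicit).

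In more detail, the expansion $\mathbf{1}_I^{T}A(\overline{G})^{\ell}\mathbf{1}_I=\sum_{i}\mu_i^{\ell}\langle q_i,\mathbf{1}_I\rangle^{2}$ in an orthonormal basis of eigenvectors, combined with the fact that the projection of $\mathbf{1}_I$ onto the Perron line has squared norm $\gamma^{2}/n$, yields identities of the form
\[
\|y\|^{2}=\mathbf{1}_I^{T}A(\overline{G})^{2k}\mathbf{1}_I-\frac{\gamma^{2}s^{2k}}{n},\qquad y^{T}A(\overline{G})y=\mathbf{1}_I^{T}A(\overline{G})^{2k+1}\mathbf{1}_I-\frac{\gamma^{2}s^{2k+1}}{n}.
\]
Matching the exponent $\gamma$ in the desired bound strongly suggests choosing $2k+1=\gamma$ (i.e.\ $\gamma$ odd, the even case being handled by a parity trick or by noting that shifting $\gamma$ by $1$ strengthens the bound), and then lower-bounding the walk count $\mathbf{1}_I^{T}A(\overline{G})^{\gamma}\mathbf{1}_I$ by the corresponding count inside $K_\gamma$, which has closed form via the eigenvalues $\gamma-1,-1,\dots,-1$ of $A(K_\gamma)$.

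The main obstacle, and the place where I expect the bookkeeping to be delicate, is calibrating the exponent and the comparison so that the constants line up to produce exactly $\dfrac{2(n-1-r)^{\gamma}}{\gamma n^{\gamma-1}}$: the naive Rayleigh-quotient with $\mathbf{1}_I$ (the case $k=0$) only yields the much weaker Hoffman-type bound $\lambda_2(A(G))\ge -\gamma r/(n-\gamma)$, so the strength of the stated inequality must come from the higher-order moment step and a careful extraction of the Perron component. Once the correct $k$ and the correct clique-walk lower bound are in place, the inequality $\lambda_n(A(\overline{G}))\le y^{T}A(\overline{G})y/\|y\|^{2}$ is translated back by $\lambda_2(A(G))=-1-\lambda_n(A(\overline{G}))$ to complete the proof.
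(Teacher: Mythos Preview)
This statement is quoted in the paper's Preliminaries (Section~\ref{sec::preliminaries}) as a known result of Nikiforov, with a citation and no argument; the paper never proves it, and only invokes it once by direct substitution. There is therefore no proof in this paper to compare your proposal against.

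On the proposal itself: the complement reduction $\lambda_2(A(G))=-1-\lambda_n(A(\overline{G}))$ for regular $G$ is correct, and turning the maximum independent set of $G$ into a copy of $K_\gamma$ inside the $(n-1-r)$-regular graph $\overline{G}$ is the natural first move. But what follows is only a hope, not an argument, and one key step points the wrong way. To make the Rayleigh quotient $y^{T}A(\overline{G})y/\|y\|^{2}$ sufficiently negative you need an \emph{upper} bound on $\mathbf{1}_{I}^{T}A(\overline{G})^{2k+1}\mathbf{1}_{I}$, so that after subtracting the Perron contribution $\gamma^{2}s^{2k+1}/n$ the numerator is negative; lower-bounding this walk count by the walks confined to $K_\gamma$, as you propose, goes in the opposite direction. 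Already at $k=0$ the test vector $y=\mathbf{1}_{I}-\tfrac{\gamma}{n}\mathbf{1}$ can produce a \emph{positive} Rayleigh quotient (for $G=C_5$, where $\overline{G}\cong C_5$, one gets $+\tfrac13$), so the scheme does not automatically yield a negative bound, let alone one with the specific shape $2(n-1-r)^{\gamma}/(\gamma n^{\gamma-1})$. As written, the outline lacks the mechanism that would generate that exponent and that constant; this is a genuine gap, not deferred bookkeeping.
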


\begin{theorem}  \label{theo::lambda2_KS_bound}
	\cite{KS2013} Let $G$ be a $r$-regular bipartite connected graph with $n$ vertices. Then $\displaystyle \lambda_2(A(G)) \leq \frac{n}{2} - r$.
\end{theorem}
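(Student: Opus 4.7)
The plan is to compare $G$ to its bipartite complement $\widetilde{G}$ inside the complete bipartite graph. Since $G$ is $r$-regular and bipartite on $n$ vertices, a simple edge-count forces the two color classes to have the same size, namely $n/2$; hence $\widetilde{G}$ inherits this bipartition, is $(n/2-r)$-regular bipartite, and the adjacency matrices satisfy
$$A(G) + A(\widetilde{G}) = A(K_{n/2,\, n/2}).$$
The spectrum of the right-hand side is well known to be $\{\,n/2,\; 0^{(n-2)},\; -n/2\,\}$, so in particular $\lambda_2(A(K_{n/2,n/2})) = 0$.

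Next, I would apply the lower inequality in Corollary \ref{cor::weyl} with the choices $A \leftarrow A(G)$, $B \leftarrow A(\widetilde{G})$, and $i = 2$:
$$\lambda_2(A(G)) + \lambda_n(A(\widetilde{G})) \;\leq\; \lambda_2\bigl(A(G)+A(\widetilde{G})\bigr) \;=\; \lambda_2(A(K_{n/2,n/2})) \;=\; 0.$$
This reduces the problem to showing $\lambda_n(A(\widetilde{G})) \geq -(n/2 - r)$, which is immediate from Proposition \ref{prop::regular_graph_spectrum}(iii) applied to the $(n/2-r)$-regular graph $\widetilde{G}$: every one of its eigenvalues satisfies $|\lambda| \leq n/2 - r$, so in particular the smallest is at least $-(n/2 - r)$. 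Substituting back gives $\lambda_2(A(G)) \leq -\lambda_n(A(\widetilde{G})) \leq n/2 - r$, as required.

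The only nontrivial step is the initial observation that the bipartite complement exists in the right form and that $A(G) + A(\widetilde{G}) = A(K_{n/2,n/2})$; once that additive decomposition is identified, Weyl's inequality together with the regularity bound on $\widetilde{G}$ finishes the argument with no further work. A small technical point worth flagging is the verification that the two color classes of $G$ really have equal cardinality $n/2$, which is needed both to identify the total graph as $K_{n/2,n/2}$ and to compute the regularity of $\widetilde{G}$; this follows from $r|V_1| = |E(G)| = r|V_2|$, valid whenever $r > 0$, a case that we may assume since otherwise $G$ has no edges and the bound is vacuous.
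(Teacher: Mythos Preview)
Your argument is correct. The paper does not give a proof of this particular statement---it is quoted as a preliminary from \cite{KS2013}---so there is no ``paper's own proof'' to compare against line by line. That said, the paper does prove the $A_\alpha$-generalization in Corollary~\ref{cor::ineq_bipartite} (via Proposition~\ref{theo::ineq_bipartite}), and at $\alpha=0$ that argument is essentially yours: decompose $A(K_{n/2,n/2})=A(G)+A(\widetilde{G})$ and apply Weyl. The only cosmetic difference is that the paper invokes the spectral symmetry of bipartite graphs (Theorem~\ref{theo::bipartite_spectrum}) to rewrite $\lambda_n(A(\widetilde{G}))=-\lambda_1(A(\widetilde{G}))=-(n/2-r)$, whereas you bound $\lambda_n(A(\widetilde{G}))\ge -(n/2-r)$ directly from Proposition~\ref{prop::regular_graph_spectrum}(iii); for regular bipartite $\widetilde{G}$ both routes give the same value, so neither approach is stronger here.
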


\begin{corollary}\label{cor::KS2013_corollary31}
	\cite{KS2013} Let $G$ be a $r$-regular bipartite graph with $n$ vertices. Then $\lambda_1(A(G)) + \lambda_2(A(G)) \leq \displaystyle \frac{n}{2}$. Furthermore, $\displaystyle \lambda_1(A(G)) + \lambda_2(A(G)) = \frac{n}{2}$ if and only if its bipartite complement is disjoint.
\end{corollary}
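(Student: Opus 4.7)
The plan is to combine Proposition~\ref{prop::regular_graph_spectrum} with Theorem~\ref{theo::lambda2_KS_bound} for the inequality, and to use Corollary~\ref{cor::weyl} for the equality characterization. Since $G$ is $r$-regular, Proposition~\ref{prop::regular_graph_spectrum}(i)-(iii) forces $\lambda_1(A(G)) = r$. Assuming $G$ is connected (which is implicit, as the bipartite complement in this paper is only defined for connected bipartite graphs), Theorem~\ref{theo::lambda2_KS_bound} yields $\lambda_2(A(G)) \leq n/2 - r$, and adding the two gives $\lambda_1(A(G)) + \lambda_2(A(G)) \leq n/2$ immediately.

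For the equality characterization, I would exploit the identity $A(G) + A(\widetilde{G}) = A(K_{n/2,n/2})$, which is valid because an $r$-regular bipartite graph has balanced parts $|V_1| = |V_2| = n/2$. The bipartite complement $\widetilde{G}$ is $(n/2-r)$-regular bipartite, so Proposition~\ref{prop::regular_graph_spectrum} and Theorem~\ref{theo::bipartite_spectrum} give $\lambda_n(A(\widetilde{G})) = -(n/2-r)$; a direct computation yields $\sigma(A(K_{n/2,n/2})) = \{n/2,\, 0^{(n-2)},\, -n/2\}$, hence $\lambda_2(A(K_{n/2,n/2})) = 0$. Applying Corollary~\ref{cor::weyl} with $i = 2$ to this sum recovers $\lambda_2(A(G)) \leq n/2 - r$ and, crucially, tells us that equality holds if and only if there is a nonzero $x$ that is a common eigenvector of the three matrices with eigenvalues $\lambda_2(A(G))$, $-(n/2-r)$, and $0$, respectively.

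It then remains to identify when such an $x$ exists. Writing $x = (u, v)$ with $u, v \in \real^{n/2}$, the condition $A(K_{n/2,n/2})x = 0$ forces $\mathbf{1}^{\top} u = \mathbf{1}^{\top} v = 0$, while $A(\widetilde{G})x = -(n/2-r)x$ places $x$ in the extremal eigenspace of $A(\widetilde{G})$. By Proposition~\ref{prop::regular_graph_spectrum}(ii) combined with Theorem~\ref{theo::bipartite_spectrum}, when $\widetilde{G}$ is connected this eigenspace is one-dimensional, spanned by $(\mathbf{1}_{V_1}, -\mathbf{1}_{V_2})$, which fails the orthogonality requirement, so the inequality is strict. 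When $\widetilde{G}$ is disconnected, each component $\widetilde{H}_i$ is itself $(n/2-r)$-regular bipartite with balanced parts and contributes an independent vector $(\mathbf{1}_{V_1 \cap \widetilde{H}_i}, -\mathbf{1}_{V_2 \cap \widetilde{H}_i})$ to the extremal eigenspace; a nontrivial linear combination that is simultaneously orthogonal to $\mathbf{1}_{V_1}$ and $\mathbf{1}_{V_2}$ can then be produced by a dimension count. The main obstacle is this last equivalence---verifying that the two linear conditions imposed by $A(K_{n/2,n/2})x = 0$ can be met inside the extremal eigenspace of $A(\widetilde{G})$ precisely when that eigenspace has dimension at least two, that is, precisely when $\widetilde{G}$ is disjoint.
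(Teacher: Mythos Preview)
The paper does not supply its own proof of this statement: Corollary~\ref{cor::KS2013_corollary31} is quoted in the Preliminaries section with a citation to \cite{KS2013} and is used later only as a black box (in the equality analysis of Proposition~3.12). So there is nothing in the paper to compare your argument against.

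That said, your proposal is essentially sound and the ``main obstacle'' you flag is lighter than you suggest. Restricted to the span of the vectors $v_i=(\mathbf{1}_{V_1\cap\widetilde H_i},-\mathbf{1}_{V_2\cap\widetilde H_i})$, the two linear conditions $\mathbf{1}_{V_1}^{\top}x=0$ and $\mathbf{1}_{V_2}^{\top}x=0$ coincide: since each component $\widetilde H_i$ of the $(n/2-r)$-regular bipartite graph $\widetilde G$ has balanced parts, one has $\mathbf{1}_{V_1}^{\top}v_i=|V_1\cap\widetilde H_i|=|V_2\cap\widetilde H_i|=-\,\mathbf{1}_{V_2}^{\top}v_i$. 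Hence both constraints read $\sum_i c_i\,|V_1\cap\widetilde H_i|=0$, a single nontrivial linear equation in $k$ unknowns, solvable nontrivially exactly when $k\ge 2$, i.e.\ when $\widetilde G$ is disconnected. Any such $x$ then satisfies $A(G)x=(n/2-r)x$ automatically (subtract the two eigen-equations), and since $x\perp\mathbf 1$ this forces $\lambda_2(A(G))=n/2-r$ whether or not $n/2-r$ coincides with $r$. Conversely, when $\widetilde G$ is connected the $-(n/2-r)$-eigenspace is spanned by $(\mathbf 1_{V_1},-\mathbf 1_{V_2})$ alone, which violates the orthogonality, so Corollary~\ref{cor::weyl} gives strict inequality. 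One small caveat: your reduction leans on $G$ being connected (both to invoke Theorem~\ref{theo::lambda2_KS_bound} and to have a well-defined bipartition for $\widetilde G$), an assumption absent from the corollary as stated but consistent with how the paper defines the bipartite complement.
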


\begin{theorem} \label{lambda_nGeneralBound}
    \cite{Haemers1980} Let $G$ be a graph with minimum degree $\delta \neq 0$ and independence number $\gamma(G)$, then
    \begin{equation*}
        \lambda_n(A(G)) \leq \dfrac{\gamma(G)\delta^2}{\lambda_1(A(G))(\gamma(G) - n)}
    \end{equation*}
\end{theorem}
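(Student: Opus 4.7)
The plan is to prove this as a reformulation of Haemers' classical ratio bound for general graphs, using eigenvalue interlacing applied to a quotient matrix built from a maximum independent set.

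First, I would fix an independent set $S \subset V$ with $|S|=\gamma(G)$ and consider the vertex partition $(S, V\setminus S)$. Let $e$ denote the number of edges between $S$ and $V\setminus S$ and let $f$ denote the number of edges inside $V\setminus S$. Since every vertex of $S$ has all of its neighbors in $V\setminus S$ and minimum degree at least $\delta$, I get the crucial lower bound $e \geq \gamma\delta$. The $2\times 2$ quotient matrix $B$ of $A(G)$ with respect to this partition has entries
\begin{equation*}
B = \begin{pmatrix} 0 & e/\gamma \\ e/(n-\gamma) & 2f/(n-\gamma) \end{pmatrix},
\end{equation*}
so $\det(B) = -e^2/\bigl(\gamma(n-\gamma)\bigr)$, giving $\lambda_1(B)\lambda_2(B) = -e^2/\bigl(\gamma(n-\gamma)\bigr)$.

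Next, I would invoke the standard interlacing theorem for quotient matrices of a partition: the eigenvalues of $B$ interlace the eigenvalues of $A(G)$, hence
\begin{equation*}
0 < \lambda_1(B) \leq \lambda_1(A(G)) \qquad \text{and} \qquad \lambda_n(A(G)) \leq \lambda_2(B).
\end{equation*}
(Positivity of $\lambda_1(B)$ follows from the off-diagonal entries being positive, assuming $e>0$, which holds since $\delta\neq 0$.) Writing $\lambda_2(B) = -e^2/\bigl(\gamma(n-\gamma)\lambda_1(B)\bigr)$ and using $\lambda_1(B)\leq \lambda_1(A(G))$ to replace the denominator (which only makes the negative quantity larger in absolute value, hence smaller),
\begin{equation*}
\lambda_n(A(G)) \leq \lambda_2(B) \leq -\frac{e^2}{\gamma(n-\gamma)\,\lambda_1(A(G))}.
\end{equation*}

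Finally, I would apply the bound $e \geq \gamma\delta$ to get
\begin{equation*}
\lambda_n(A(G)) \leq -\frac{\gamma^2\delta^2}{\gamma(n-\gamma)\,\lambda_1(A(G))} = \frac{\gamma(G)\,\delta^2}{\lambda_1(A(G))\,(\gamma(G)-n)},
\end{equation*}
which is the claimed inequality. The main technical obstacle is the sign-handling: one must carefully track that $\gamma-n<0$ and $\lambda_2(B)<0$ so that replacing $\lambda_1(B)$ by the larger quantity $\lambda_1(A(G))$ gives an \emph{upper} bound on $\lambda_n(A(G))$, and that the substitution $e\geq \gamma\delta$ pushes the negative quantity up (not down). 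Beyond this sign bookkeeping, the argument is a routine application of the quotient-matrix interlacing technique and requires no new ingredients.
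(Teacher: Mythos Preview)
The paper does not give its own proof of this statement; it is quoted in the Preliminaries section as a known result from Haemers' 1980 thesis and later invoked as a black box. Your argument is precisely Haemers' original quotient-matrix interlacing proof of the ratio bound, and the displayed chain of inequalities is correct.

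One minor slip in the narration: your parenthetical that replacing $\lambda_1(B)$ by the larger $\lambda_1(A(G))$ ``makes the negative quantity larger in absolute value, hence smaller'' is backwards --- a larger denominator makes the absolute value \emph{smaller}, hence the quantity \emph{larger}. Fortunately the inequality you actually display,
\[
\lambda_2(B)\;\le\;-\frac{e^{2}}{\gamma(n-\gamma)\,\lambda_1(A(G))},
\]
is in the correct direction (since $\lambda_1(B)\le\lambda_1(A(G))$ forces $\lambda_2(B)=-e^{2}/(\gamma(n-\gamma)\lambda_1(B))$ to be \emph{at most} the right-hand side), and the subsequent step with $e\ge\gamma\delta$ then correctly weakens this to the stated bound. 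So the proof is fine; only the verbal sign bookkeeping needs tidying.
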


\begin{theorem} \label{theo::HoffmanBound}
    \cite{cvetkovic2009,HAEMERS2021215} If $G$ is a $r$-regular graph with independence number $\gamma(G)$, then
    \begin{equation*}
        \lambda_n(A(G)) \leq \dfrac{\gamma(G)r}{\gamma(G) - n}
    \end{equation*}
\end{theorem}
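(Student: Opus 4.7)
The plan is to run the classical Hoffman argument, exploiting the fact that for an $r$-regular graph the all-ones vector $\mathbf{1}\in\real^n$ is an eigenvector of $A(G)$ with eigenvalue $r$. By Proposition~\ref{prop::regular_graph_spectrum}, $r = \lambda_1(A(G))$, so I would fix an orthonormal eigenbasis $v_1,\ldots,v_n$ of $A(G)$ with $v_1 = \mathbf{1}/\sqrt{n}$ and associated eigenvalues $r = \lambda_1 \geq \lambda_2 \geq \cdots \geq \lambda_n$.

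Next, let $S\subset V$ be a maximum independent set, so that $|S| = \gamma(G)$, and let $\chi_S\in\real^n$ denote its characteristic vector. Writing $\chi_S = \sum_{i=1}^n c_i v_i$ in the eigenbasis above, three simple identities drive everything: $|\chi_S|^2 = \gamma$ (Parseval); the Fourier coefficient $c_1 = \chi_S^T v_1 = \gamma/\sqrt{n}$, so that $c_1^2 = \gamma^2/n$; and since no edge of $G$ has both endpoints in $S$, the quadratic form vanishes, $\chi_S^T A(G)\chi_S = 0$. Expanding this last identity in the eigenbasis yields
$$0 \;=\; r\cdot\frac{\gamma^2}{n} \;+\; \sum_{i=2}^n \lambda_i c_i^2.$$

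The key estimate is the one-line lower bound $\sum_{i=2}^n \lambda_i c_i^2 \geq \lambda_n \sum_{i=2}^n c_i^2 = \lambda_n(\gamma - \gamma^2/n)$, obtained by replacing every $\lambda_i$ ($i\geq 2$) by the smallest one. Substituting this into the display above gives $r\gamma/n + \lambda_n(1-\gamma/n) \leq 0$, and clearing denominators yields $r\gamma \leq \lambda_n(\gamma - n)$. The trickiest point, and essentially the only delicate step, is the final sign bookkeeping: provided $G$ has at least one edge one has $\gamma(G) < n$, so $\gamma - n < 0$ and dividing through flips the inequality, delivering $\lambda_n(A(G)) \leq \gamma(G)r/(\gamma(G) - n)$ as claimed.
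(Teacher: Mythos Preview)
Your argument is correct and is exactly the classical Hoffman proof. Note, however, that the paper does not give its own proof of this statement: Theorem~\ref{theo::HoffmanBound} sits in the Preliminaries section and is simply cited from \cite{cvetkovic2009,HAEMERS2021215}, so there is nothing in the paper to compare against. Your write-up is a clean, self-contained derivation of the cited result; the only caveat worth making explicit is that the edgeless case $r=0$ (where $\gamma(G)=n$ and the right-hand side is undefined) is tacitly excluded by your assumption that $G$ has at least one edge.
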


\begin{lemma} \label{lemma::lambda_nBoundTriangleFree}
    \cite{CSIKVARI202292} Let $G$ be a triangle-free graph on $n$ vertices. Then,
    \begin{equation*}
        \lambda_n(A(G)) \leq \dfrac{\lambda_1^2(A(G))}{\lambda_1(A(G)) -n}.
    \end{equation*}
\end{lemma}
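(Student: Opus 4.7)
The plan is to upper bound $\lambda_n(A(G))$ by Rayleigh--Ritz applied to a two-dimensional test subspace, letting the triangle-free hypothesis enter through the fact that $N(v_0)$ is an independent set for every vertex $v_0$. I would assume $G$ connected with at least one edge (otherwise the claim reduces to $0\leq 0$), let $\phi$ be the unit Perron eigenvector of $A(G)$ with $\phi>0$, and set $s=\mathbf{1}^{T}\phi>0$. For each $v_0$ with $1\leq d_{v_0}\leq n-1$, consider the partition $V=N(v_0)\cup(V\setminus N(v_0))$ and test vectors $y=\alpha\mathbf{1}_{N(v_0)}+\beta\mathbf{1}_{V\setminus N(v_0)}$. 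Since $N(v_0)$ is independent, the $\alpha^{2}$-term in $y^{T}Ay$ disappears, and the extremal values of the Rayleigh quotient over this subspace are the eigenvalues $\mu_1\geq\mu_2$ of a $2\times 2$ quotient matrix $B_{v_0}$ whose determinant works out to $\det B_{v_0}=-S_{v_0}^{2}/(d_{v_0}(n-d_{v_0}))$, where $S_{v_0}=\sum_{u\sim v_0}d_u$.

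By the variational characterization of $\lambda_1$ and $\lambda_n$, restricting to a subspace only tightens the interval, so $\mu_1\leq\lambda_1(A(G))$ and $\lambda_n(A(G))\leq\mu_2$. Because $\det B_{v_0}<0$, the identity $\mu_1\mu_2=\det B_{v_0}$ forces $\mu_2\leq\det(B_{v_0})/\lambda_1(A(G))$, giving the intermediate bound
\[
\lambda_n(A(G))\leq-\frac{S_{v_0}^{2}}{d_{v_0}(n-d_{v_0})\,\lambda_1(A(G))}.
\]
The proof then reduces to exhibiting some $v_0$ with $S_{v_0}^{2}/(d_{v_0}(n-d_{v_0}))\geq \lambda_1(A(G))^{3}/(n-\lambda_1(A(G)))$.

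The delicate step is this vertex selection, which I would carry out by a $\phi$-weighted average and two applications of Cauchy--Schwarz. Abbreviate $\lambda_1=\lambda_1(A(G))$. Dotting $A\phi=\lambda_1\phi$ against $\mathbf{1}$ (and iterating) yields the Perron identities $\sum_v d_v\phi_v=\lambda_1 s$ and $\sum_v S_v\phi_v=\lambda_1^{2}s$. A first Cauchy--Schwarz gives $\sum_v\phi_v d_v^{2}\geq\lambda_1^{2}s$, hence $\sum_v\phi_v d_v(n-d_v)\leq\lambda_1 s(n-\lambda_1)$. A second Cauchy--Schwarz, applied to $\sum_v\phi_v S_v=\lambda_1^{2}s$ with splitting weights $\phi_v d_v(n-d_v)$ and $\phi_v S_v^{2}/(d_v(n-d_v))$, then gives $\sum_v\phi_v S_v^{2}/(d_v(n-d_v))\geq\lambda_1^{3}s/(n-\lambda_1)$. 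Since $\phi_v>0$ and $\sum_v\phi_v=s$, some vertex $v_0$ must attain the required pointwise lower bound, and substituting into the intermediate bound finishes the proof. The main obstacle I expect is choosing the weights in the second Cauchy--Schwarz so that the Perron identities close the chain cleanly on the target $\lambda_1^{3}/(n-\lambda_1)$; once those weights are pinpointed, the rest is algebraic bookkeeping.
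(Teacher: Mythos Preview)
The paper does not prove this lemma at all: it is quoted verbatim from \cite{CSIKVARI202292} as a known background result and is only \emph{applied} later, in the proof of the last proposition of Section~\ref{subsection::lambdan}. There is therefore no ``paper's own proof'' to compare against.

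That said, your argument is sound. The quotient-matrix step is correct: on the two-dimensional subspace $\mathrm{span}\{\mathbf 1_{N(v_0)},\mathbf 1_{V\setminus N(v_0)}\}$ the Rayleigh quotients are governed by the generalized eigenvalue problem whose product of roots equals $\det N/\det M=-S_{v_0}^{2}/(d_{v_0}(n-d_{v_0}))$, using only that $N(v_0)$ is independent; interlacing then gives $\mu_2\le\lambda_n$ and $\mu_1\le\lambda_1$, and since $\det B_{v_0}<0$ and $\mu_1>0$ (take the test vector $\mathbf 1$) you obtain the intermediate bound. The vertex-selection step is also correct: the two Perron identities $\sum_v\phi_v d_v=\lambda_1 s$ and $\sum_v\phi_v S_v=\lambda_1^2 s$, together with the two Cauchy--Schwarz inequalities you describe, yield $\sum_v\phi_v\,S_v^{2}/(d_v(n-d_v))\ge \lambda_1^{3}s/(n-\lambda_1)$, and since $\phi>0$ some $v_0$ attains the pointwise bound. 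One small caveat: the second Cauchy--Schwarz needs $d_v(n-d_v)>0$ for every $v$, i.e.\ $1\le d_v\le n-1$; in a connected triangle-free graph this fails only for the star $K_{1,n-1}$, which you should dispose of separately (a one-line check, since both sides are explicit). The reduction to the connected case also deserves a sentence, but these are routine technicalities rather than gaps.
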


\begin{theorem} \label{theo::equality_roots_polynomial}
	\cite{KS2013} Let $G$ be a $r$-regular bipartite connected graph with $2n$ vertices and $\widetilde{G}$ its bipartite complement. Then,
	$$\displaystyle \frac{P_{A(G)}(\lambda)}{\lambda^2 - r^2} = \frac{P_{A(\widetilde{G})}(\lambda)}{\lambda^2 - (n-r)^2}.$$
\end{theorem}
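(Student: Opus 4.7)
The plan is to exploit the block structure of bipartite adjacency matrices and reduce the claim to comparing singular values of the biadjacency block with those of its $(0,1)$-complement. Write $V(G)=V_1\cup V_2$ with $|V_1|=|V_2|=n$, and choose an ordering of the vertices so that
\[
A(G)=\begin{pmatrix}0 & B\\ B^T & 0\end{pmatrix},\qquad A(\widetilde{G})=\begin{pmatrix}0 & J-B\\ (J-B)^T & 0\end{pmatrix},
\]
where $B$ is an $n\times n$ $(0,1)$-matrix and $J$ is the all-ones $n\times n$ matrix. Since $G$ is $r$-regular bipartite, $B$ has all row and column sums equal to $r$, so $B\mathbf{1}=r\mathbf{1}$ and $\mathbf{1}^TB=r\mathbf{1}^T$; in particular $\widetilde{G}$ is $(n-r)$-regular. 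A standard block computation gives $A(G)^2=\mathrm{diag}(BB^T,B^TB)$, hence the eigenvalues of $A(G)$ are exactly $\pm\sigma$ for $\sigma$ a singular value of $B$, and likewise for $A(\widetilde{G})$ with $B$ replaced by $J-B$. So it suffices to compare the eigenvalues of $B^TB$ and $(J-B)^T(J-B)$.

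The key identity is
\[
(J-B)^T(J-B)=J^TJ-J^TB-B^TJ+B^TB=(n-2r)J+B^TB,
\]
a rank-one perturbation of $B^TB$ supported on $\mathbf{1}$. Since $J$ annihilates $\mathbf{1}^\perp$, the operators $B^TB$ and $(J-B)^T(J-B)$ agree on the $(n-1)$-dimensional subspace $\mathbf{1}^\perp$, producing a common list of eigenvalues $\nu_1,\dots,\nu_{n-1}$ there. On the line spanned by $\mathbf{1}$ they act as scalars $r^2$ and $(n-r)^2$, respectively. The connectedness hypothesis, via Proposition~\ref{prop::regular_graph_spectrum}, guarantees that $r$ is a simple eigenvalue of $A(G)$, so $r^2$ occurs with multiplicity exactly one in $B^TB$ and the split into the Perron part plus the shared $\nu_i$'s is clean.

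Setting $\mu_i=\sqrt{\nu_i}$, one obtains
\[
P_{A(G)}(\lambda)=(\lambda^2-r^2)\prod_{i=1}^{n-1}(\lambda^2-\mu_i^2),\qquad P_{A(\widetilde{G})}(\lambda)=(\lambda^2-(n-r)^2)\prod_{i=1}^{n-1}(\lambda^2-\mu_i^2),
\]
and dividing each side by the corresponding quadratic factor yields the stated identity. The only real obstacle is the correct identification of the common factor $\prod(\lambda^2-\mu_i^2)$; once the algebraic relation $(J-B)^T(J-B)=(n-2r)J+B^TB$ is in hand, both the coincidence on $\mathbf{1}^\perp$ and the isolated shift along $\mathbf{1}$ are immediate, and the rest of the argument is bookkeeping.
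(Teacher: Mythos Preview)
The paper does not give a proof of this statement: Theorem~\ref{theo::equality_roots_polynomial} is quoted in the Preliminaries section from \cite{KS2013} and is used later as a black box in the proof of Proposition~\ref{theo::adapt_alpha_matrix_bound_lambda2}. So there is nothing in the present paper to compare your argument against.

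That said, your proof is correct and is essentially the standard argument for this fact. The computation $(J-B)^T(J-B)=(n-2r)J+B^TB$ is the heart of the matter, and once you observe that $\mathbf{1}$ is a common eigenvector of $B^TB$ and $(J-B)^T(J-B)$ (with eigenvalues $r^2$ and $(n-r)^2$ respectively) while the two operators coincide on $\mathbf{1}^\perp$, the factorisations of the two characteristic polynomials follow at once. One minor remark: your appeal to connectedness to force $r^2$ to be a \emph{simple} eigenvalue of $B^TB$ is not actually needed for the polynomial identity. Even if some $\nu_i$ happened to equal $r^2$, the factorisation $P_{A(G)}(\lambda)=(\lambda^2-r^2)\prod_{i=1}^{n-1}(\lambda^2-\nu_i)$ would still hold verbatim, because you are simply listing the eigenvalues of $B^TB$ according to the orthogonal decomposition $\mathbb{R}^n=\langle\mathbf{1}\rangle\oplus\mathbf{1}^\perp$. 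The connectedness hypothesis is really there to make the bipartition of $G$ --- and hence $\widetilde{G}$ itself --- well defined.
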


\begin{theorem} \label{theo::LargestEigenvaluesSum}
\cite{EBRAHIMIB20082781} If $G$ is a regular graph of order $n$, then $\lambda_1(A(G)) + \lambda_2(A(G)) \leq n - 2$. Moreover, $\lambda_1(A(G)) + \lambda_2(A(G)) = n - 2$ if, and only if, the complement of $G$ has a component that is a bipartite graph.
\end{theorem}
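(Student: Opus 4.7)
The plan is to pass to the complement. Writing $r$ for the common degree, so that $\overline{G}$ is $(n-1-r)$-regular and the identity $A(G)+A(\overline{G})=J-I$ holds, where $J$ is the $n\times n$ all-ones matrix. Since $G$ is regular, $\mathbf{1}$ is a common eigenvector of $A(G)$ and $A(\overline{G})$, and the invariant complement $\mathbf{1}^{\perp}$ coincides with $\ker J$. Consequently, any $A(G)$-eigenvector $x\in\mathbf{1}^{\perp}$ with eigenvalue $\lambda$ is an $A(\overline{G})$-eigenvector with eigenvalue $-(1+\lambda)$; this is the central identification.

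From this, the spectrum of $\overline{G}$ is $\{n-1-r\}\cup\{-(1+\lambda_i(A(G))):2\le i\le n\}$. Proposition \ref{prop::regular_graph_spectrum}(iii) applied to $\overline{G}$ confirms that $n-1-r$ is its largest eigenvalue, and because $-(1+\lambda_i)$ is nondecreasing in $i$, the minimum of the remaining entries is $-(1+\lambda_2(A(G)))$. Hence
\begin{equation*}
\lambda_n(A(\overline{G}))=-1-\lambda_2(A(G)),
\end{equation*}
so $\lambda_1(A(G))+\lambda_2(A(G))=r-1-\lambda_n(A(\overline{G}))$. Applying Proposition \ref{prop::regular_graph_spectrum}(iii) once more yields $\lambda_n(A(\overline{G}))\ge -(n-1-r)$, and substituting delivers the desired bound $\lambda_1(A(G))+\lambda_2(A(G))\le n-2$.

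Equality then holds iff $-(n-1-r)\in\sigma(A(\overline{G}))$, i.e.\ iff some component $H$ of $\overline{G}$ (itself connected and $(n-1-r)$-regular) has $-(n-1-r)$ as an eigenvalue. The main step requiring care is showing that this is equivalent to $H$ being bipartite: the direction ``$H$ bipartite $\Rightarrow -(n-1-r)\in\sigma(A(H))$'' is immediate from Theorem \ref{theo::bipartite_spectrum} combined with Proposition \ref{prop::regular_graph_spectrum}(i), while the converse requires a short Perron--Frobenius-type argument on the sign pattern of a $-(n-1-r)$-eigenvector in the connected regular graph $H$. With this classical equivalence in hand, the equality case reduces to saying that $\overline{G}$ has at least one bipartite component, as claimed.
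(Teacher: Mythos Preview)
The paper does not supply its own proof of this statement: Theorem~\ref{theo::LargestEigenvaluesSum} is quoted in the Preliminaries from \cite{EBRAHIMIB20082781} and used later as a black box, so there is no in-paper argument to compare against.

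Your argument is correct and is the standard one. The identity $A(G)+A(\overline{G})=J-I$ together with the common eigenvector $\mathbf{1}$ gives the spectral correspondence $\lambda_i(A(G))\leftrightarrow -1-\lambda_i(A(G))$ on $\mathbf{1}^{\perp}$, and since Proposition~\ref{prop::regular_graph_spectrum}(iii) forces $n-1-r$ to dominate every eigenvalue of $\overline{G}$, the value $-(1+\lambda_2(A(G)))$ is indeed the global minimum of $\sigma(A(\overline{G}))$; from there $\lambda_n(A(\overline{G}))\ge -(n-1-r)$ yields the bound. For the equality case, your reduction to ``$-(n-1-r)\in\sigma(A(H))$ for some component $H$ of $\overline{G}$'' is right, and the equivalence of this with bipartiteness of $H$ is classical. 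One remark: the converse step you flag (that a connected $k$-regular graph with $-k$ in its spectrum must be bipartite) is not among the paper's cited preliminaries, so in this context it does require the short Perron--Frobenius sign-pattern argument you allude to; spelling it out would make the proof fully self-contained relative to Section~\ref{sec::preliminaries}.
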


\begin{proposition} \label{theo::upper_bound_laplacian_eigenvalue}
\cite{cvetkovic2009} Let $\sigma(L(G)) = \{\mu_1, \ldots ,\mu_n \}$ be the $L(G)$-spectrum such that $\mu_1 \geq \ldots \geq \mu_n$. If $G$ is a graph with $n$ vertices then $\mu_1 \leq n$, with equality occurring if and only if $\overline{G}$ is disconnected. 
\end{proposition}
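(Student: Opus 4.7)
The plan is to compare $L(G)$ with $L(\overline{G})$ via the identity $L(G)+L(\overline{G})=nI-J$, where $J$ is the all-ones matrix, and then extract the bound from Weyl's inequality (Corollary~\ref{cor::weyl}). First I would verify the identity: the degrees satisfy $d_G(v)+d_{\overline{G}}(v)=n-1$, so $D(G)+D(\overline{G})=(n-1)I$; and the adjacency matrices satisfy $A(G)+A(\overline{G})=J-I$. Subtracting gives $L(G)+L(\overline{G})=(n-1)I-(J-I)=nI-J$. Next I would record the spectrum of $nI-J$: the all-ones vector $\mathbf{1}$ is a $0$-eigenvector, and on $\mathbf{1}^{\perp}$ one has $J=0$, so $nI-J$ acts as $nI$ there. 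Hence $\lambda_1(nI-J)=n$.

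With the identity in hand, I would apply the lower inequality of Corollary~\ref{cor::weyl} to $A=L(G)$ and $B=L(\overline{G})$ with $i=1$:
\[
\lambda_1(L(G))+\lambda_n(L(\overline{G}))\le \lambda_1\bigl(L(G)+L(\overline{G})\bigr)=n.
\]
Since $L(\overline{G})$ is positive semidefinite and $\mathbf{1}$ is in its kernel, $\lambda_n(L(\overline{G}))=0$, and the desired bound $\mu_1\le n$ follows.

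For the equality case, I would use the characterization at the end of Corollary~\ref{cor::weyl}: $\mu_1=n$ holds if and only if there exists a common nonzero eigenvector $x$ with $L(G)x=nx$, $L(\overline{G})x=0$, and $(nI-J)x=nx$. The last equation forces $Jx=0$, i.e.\ $x\perp \mathbf{1}$; and $L(\overline{G})x=0$ forces $x$ to lie in the kernel of $L(\overline{G})$. Such a nonzero vector in $\ker L(\overline{G})\cap\mathbf{1}^{\perp}$ exists precisely when $\dim\ker L(\overline{G})\ge 2$, which is the well-known criterion that $\overline{G}$ is disconnected (the multiplicity of $0$ as a Laplacian eigenvalue equals the number of connected components). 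Conversely, if $\overline{G}$ is disconnected with components $C_1,\dots,C_k$ ($k\ge 2$), one can take $x=|C_2|\chi_{C_1}-|C_1|\chi_{C_2}$, which is nonzero, orthogonal to $\mathbf{1}$, and in $\ker L(\overline{G})$, giving $L(G)x=(nI-J-L(\overline{G}))x=nx$.

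I do not anticipate a serious obstacle: once the identity $L(G)+L(\overline{G})=nI-J$ is in place, the inequality is immediate from Weyl and the equality analysis reduces to a transparent kernel-dimension argument. The only mildly delicate point is producing the explicit common eigenvector in the equality case, but the indicator-vector construction above handles it cleanly.
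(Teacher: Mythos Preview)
The paper does not supply a proof of this proposition; it is listed in Section~\ref{sec::preliminaries} as a cited result from \cite{cvetkovic2009} with no accompanying argument. Your proof is correct and is essentially the standard one: the identity $L(G)+L(\overline{G})=nI-J$ together with the lower inequality in Corollary~\ref{cor::weyl} (taking $i=1$) and $\lambda_n(L(\overline{G}))=0$ yields $\mu_1\le n$, and the equality characterization via $\dim\ker L(\overline{G})\ge 2$ is exactly the usual component-count argument for the Laplacian.
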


The next results refer to the $A_\alpha$-matrix.

\begin{proposition} \label{rayleigh_alpha}
\cite{VN17} If $\alpha \in [0,1]$ and $G$ is a graph of order $n$, then
\begin{equation}
\lambda_1(A_\alpha(G)) = \max_{\vert x \vert = 1} \langle A_\alpha(G) x, x \rangle \text{ and }  \lambda_n(A_\alpha(G)) = \min_{\lvert x \rvert = 1} \langle A_\alpha(G) x, x \rangle.
\end{equation}
Furthermore, if $x$ is a unit vector, then $\lambda_1(A_\alpha(G)) = \langle A_\alpha(G) x, x \rangle $ if, and only if, $x$ is an eigenvector of $\lambda_1(A_\alpha(G))$, and $\lambda_n(A_\alpha(G)) = \langle A_\alpha(G) x, x \rangle $ if, and only if, $x$ is an eigenvector of $\lambda_n(A_\alpha(G))$.
\end{proposition}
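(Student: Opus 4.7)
The plan is to observe that $A_\alpha(G)$ is a real symmetric matrix (both $D(G)$ and $A(G)$ are real symmetric, and $\alpha, 1-\alpha$ are real scalars), and then invoke the standard Rayleigh--Ritz characterization of the extreme eigenvalues. The statement is in fact a special case of the general spectral theorem for real symmetric matrices; the argument is routine, and there is no genuine obstacle, only bookkeeping with the spectral decomposition.

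First, I would apply the spectral theorem to write $A_\alpha(G) = \sum_{i=1}^n \lambda_i(A_\alpha(G))\, u_i u_i^T$, where $u_1, \ldots, u_n$ is an orthonormal basis of eigenvectors with $A_\alpha(G)\, u_i = \lambda_i(A_\alpha(G))\, u_i$. For any unit vector $x \in \real^n$, expand $x = \sum_{i=1}^n c_i u_i$ with $\sum_{i=1}^n c_i^2 = |x|^2 = 1$. A direct computation then gives
\begin{equation*}
\langle A_\alpha(G) x, x\rangle = \sum_{i=1}^n \lambda_i(A_\alpha(G))\, c_i^2,
\end{equation*}
which is a convex combination of the eigenvalues. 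Using the ordering $\lambda_1(A_\alpha(G)) \geq \cdots \geq \lambda_n(A_\alpha(G))$ and $\sum c_i^2 = 1$, this immediately yields
\begin{equation*}
\lambda_n(A_\alpha(G)) \leq \langle A_\alpha(G) x, x\rangle \leq \lambda_1(A_\alpha(G)).
\end{equation*}
Both bounds are attained by choosing $x = u_1$ and $x = u_n$ respectively, which proves the max and min formulas.

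For the equality characterization, suppose $x$ is a unit vector with $\langle A_\alpha(G) x, x\rangle = \lambda_1(A_\alpha(G))$. Then $\sum_{i=1}^n (\lambda_1(A_\alpha(G)) - \lambda_i(A_\alpha(G)))\, c_i^2 = 0$, and since each summand is non-negative, $c_i = 0$ whenever $\lambda_i(A_\alpha(G)) < \lambda_1(A_\alpha(G))$. Hence $x$ lies in the eigenspace associated to $\lambda_1(A_\alpha(G))$, so $x$ is an eigenvector of $\lambda_1(A_\alpha(G))$. The converse is immediate: if $A_\alpha(G)x = \lambda_1(A_\alpha(G))\, x$ and $|x| = 1$, then $\langle A_\alpha(G) x, x\rangle = \lambda_1(A_\alpha(G)) \cdot |x|^2 = \lambda_1(A_\alpha(G))$. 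The argument for $\lambda_n(A_\alpha(G))$ is entirely analogous, replacing $\lambda_1 - \lambda_i \geq 0$ by $\lambda_i - \lambda_n \geq 0$.

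Since the proof is a verbatim specialization of Rayleigh--Ritz, the only real writing decision is whether to cite it directly (for instance from \cite{horn2013matrix}, which is already used elsewhere in the preliminaries) or to reproduce the short spectral-decomposition argument above. I would favor including the short proof for completeness, because the equality clause is the part actually used later in the paper and spelling it out costs only a few lines.
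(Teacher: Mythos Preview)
Your argument is correct: it is the standard Rayleigh--Ritz proof via the spectral decomposition of a real symmetric matrix, and both the variational formulas and the equality characterization are handled properly. Note, however, that the paper does not supply its own proof of this proposition; it is listed in the preliminaries as a result quoted from \cite{VN17}, so there is no in-paper argument to compare against.
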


\begin{lemma} \label{lemma::eigeneq_RegularGraphs}
\cite{VN17} If $\alpha \in [0,1]$ and $k = 1, \ldots, n$ and $G$ is a $r$-regular graph of order $n$, then there exists a linear correspondence between the eigenvalues of $A_\alpha(G)$ and $A(G)$, the following way
\begin{equation} \label{eq::autoequation}
\lambda_k(A_\alpha(G)) = \alpha r + (1-\alpha)\lambda_k(A(G)).
\end{equation}
In particular, if $G$ is $r$-regular, then $\lambda_1(A_\alpha(G)) = r, \ \ \forall \alpha \in [0,1]$.
\end{lemma}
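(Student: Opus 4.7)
The plan is to exploit the fact that for an $r$-regular graph the degree matrix degenerates into a scalar multiple of the identity, which reduces $A_\alpha(G)$ to an affine transformation of $A(G)$.

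More concretely, first I would observe that if $G$ is $r$-regular then $D(G) = rI_n$, so the definition of $A_\alpha(G)$ immediately gives
\[
A_\alpha(G) \;=\; \alpha r\, I_n + (1-\alpha) A(G).
\]
Hence every eigenvector $x$ of $A(G)$ with eigenvalue $\lambda$ satisfies $A_\alpha(G) x = \bigl(\alpha r + (1-\alpha)\lambda\bigr) x$, which shows that the spectrum of $A_\alpha(G)$ is obtained from $\sigma(A(G))$ by applying the affine map $\lambda \mapsto \alpha r + (1-\alpha)\lambda$. Since $A(G)$ is symmetric, its eigenvectors form an orthonormal basis that also diagonalizes $A_\alpha(G)$, so no eigenvalue is lost or duplicated in this correspondence.

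Next I would handle the indexing. The factor $1-\alpha$ is non-negative because $\alpha \in [0,1]$, so the affine map is monotonically non-decreasing. Therefore the decreasing order of the eigenvalues is preserved under the map, and one obtains
\[
\lambda_k(A_\alpha(G)) \;=\; \alpha r + (1-\alpha)\lambda_k(A(G)), \qquad k = 1,\dots,n.
\]

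Finally, for the special claim about $\lambda_1$, I would invoke Proposition \ref{prop::regular_graph_spectrum}: items (i) and (iii) imply that $\lambda_1(A(G)) = r$ whenever $G$ is $r$-regular, since $r$ is an eigenvalue and no eigenvalue exceeds $r$ in absolute value. Substituting into the displayed formula with $k=1$ yields $\lambda_1(A_\alpha(G)) = \alpha r + (1-\alpha) r = r$ for every $\alpha \in [0,1]$. There is no real obstacle here; the only subtlety is remembering to justify that the monotonicity of the affine map preserves the decreasing enumeration of eigenvalues, which is what makes the index-by-index equality legitimate.
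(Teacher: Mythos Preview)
Your argument is correct and is exactly the standard proof of this fact: for an $r$-regular graph $D(G)=rI_n$, so $A_\alpha(G)=\alpha r I_n+(1-\alpha)A(G)$ shares the eigenvectors of $A(G)$ and its spectrum is the affine image $\alpha r+(1-\alpha)\lambda_k(A(G))$, with the ordering preserved because $1-\alpha\ge 0$. Note that the paper does not actually supply its own proof of this lemma---it is quoted from \cite{VN17}---so there is nothing further to compare; your write-up matches the argument one finds in Nikiforov's original paper.
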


\begin{proposition} \label{prop::Perron_alpha}
\cite{VN17} Let $\alpha \in [0,1)$, $G$ be a graph and $x$ be a nonnegative eigenvector of $\lambda_1(A_\alpha(G))$.
\begin{itemize}
\item [(i)] If $G$ is connected, then $x$ is positive and unique minus scalar;
\item [(ii)] If $G$ is disconnected and $P$ is the set of vertices with positive entries of $x$, then the subgraph induced by $P$ is a union of $H$ components of $G$ with $\lambda_1(A_\alpha(H)) = \lambda_1(A_\alpha(G))$;
\item [(iii)] If $G$ is connected and $\mu$ is an eigenvalue of $A_\alpha(G)$ with a non-negative eigenvector, then $\mu = \lambda_1(A_\alpha(G))$;
\item [(iv)] If $G$ is connected, and $H$ is an eigengraph subgraph of $G$, then $\lambda_1(A_\alpha(H))< \lambda_1(A_\alpha(G))$.
\end{itemize}
\end{proposition}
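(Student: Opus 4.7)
The plan is to obtain all four assertions as direct consequences of the classical Perron--Frobenius theorem applied to the $A_\alpha$-matrix. The key observation is that for $\alpha \in [0,1)$, the matrix $A_\alpha(G)=\alpha D(G)+(1-\alpha)A(G)$ has non-negative entries, and its off-diagonal pattern coincides with $(1-\alpha)A(G)$. Hence $A_\alpha(G)$ is irreducible if and only if $G$ is connected, and the Perron--Frobenius theorem guarantees that when $G$ is connected, the spectral radius $\rho(A_\alpha(G))$ is a simple eigenvalue with a strictly positive eigenvector, unique up to scaling. To conclude that the Perron eigenvalue is actually $\lambda_1(A_\alpha(G))$ (rather than $-\rho$), I would apply Proposition~\ref{rayleigh_alpha} to the test vector $x=\frac{1}{\sqrt{n}}(1,\ldots,1)^T$, which yields $\lambda_1(A_\alpha(G)) \geq \frac{2m}{n}>0$, so since $A_\alpha(G)$ is symmetric with real spectrum, $\lambda_1(A_\alpha(G))=\rho(A_\alpha(G))$. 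This establishes (i).

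For (ii), when $G$ decomposes into components $G_1,\ldots,G_k$, after relabeling vertices, $A_\alpha(G)$ is block-diagonal with blocks $A_\alpha(G_j)$. Any non-negative eigenvector $x$ at $\lambda_1(A_\alpha(G))$ restricts to each block as either the zero vector or a non-negative eigenvector of $A_\alpha(G_j)$ with the same eigenvalue; invoking (i) for each connected $G_j$ forces the restriction to be either identically zero or strictly positive, and in the latter case $\lambda_1(A_\alpha(G_j))=\lambda_1(A_\alpha(G))$. Taking $P$ to be the support of $x$ yields the stated description. For (iii), if $y\geq 0$ is an eigenvector at some eigenvalue $\mu$ and $x>0$ is the Perron eigenvector at $\lambda_1$, symmetry of $A_\alpha(G)$ gives
\[
\mu\langle x,y\rangle \;=\; \langle A_\alpha(G)y,x\rangle \;=\; \langle y,A_\alpha(G)x\rangle \;=\; \lambda_1(A_\alpha(G))\langle x,y\rangle,
\]
and since $\langle x,y\rangle > 0$ (as $x$ is strictly positive and $y$ is a nonzero non-negative vector), we conclude $\mu=\lambda_1(A_\alpha(G))$.

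For (iv), interpret the ``eigengraph subgraph'' $H$ as a proper subgraph of $G$. After padding $A_\alpha(H)$ with zero rows and columns on the vertices of $V(G)\setminus V(H)$, one checks that $0 \leq A_\alpha(H) \leq A_\alpha(G)$ entry-wise, with strict inequality in at least one entry: indeed, either $H$ omits a vertex of $G$ (producing zero diagonal and off-diagonal entries where $A_\alpha(G)$ has positives on the vertex's star), or $H$ omits an edge $v_iv_j$ of $G$ (producing strict inequality on the $(i,j)$ and $(j,i)$ positions, as well as on the corresponding diagonal entries). Since $A_\alpha(G)$ is non-negative and irreducible, the strict monotonicity of the Perron--Frobenius eigenvalue under entry-wise domination by an irreducible matrix gives $\lambda_1(A_\alpha(H)) < \lambda_1(A_\alpha(G))$.

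The main obstacle is (iv): because $A_\alpha$ mixes the diagonal degree information with the adjacency, deleting an edge or vertex simultaneously decreases certain diagonal and off-diagonal entries, so one must verify carefully that $A_\alpha(H) \leq A_\alpha(G)$ holds entry-wise (i.e., that degrees never increase when passing to a subgraph, so both parts of the convex combination shrink). Once this entry-wise inequality is established, strict Perron--Frobenius monotonicity closes the argument.
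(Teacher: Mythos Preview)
The paper does not supply a proof of this proposition; it is quoted verbatim from Nikiforov \cite{VN17} as a background result in the Preliminaries section, so there is no in-paper argument to compare against.

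Your proof is the standard Perron--Frobenius argument and is correct. Two minor remarks. In (i), the detour through the test vector and the bound $\lambda_1\ge 2m/n$ is unnecessary: for any symmetric matrix with non-negative entries one has $\lambda_1(M)=\rho(M)$ automatically, since the Rayleigh quotient at any non-negative unit vector is non-negative while $\rho(M)$ is itself an eigenvalue by Perron--Frobenius. In (iv), your reading of ``eigengraph subgraph'' as ``proper subgraph'' is the intended one (the wording in the paper is a transcription slip from Nikiforov's original), and your entry-wise domination argument is exactly right: passing to a subgraph can only decrease degrees, so both the diagonal part $\alpha D$ and the off-diagonal part $(1-\alpha)A$ weakly decrease, with at least one strict decrease, and irreducibility of $A_\alpha(G)$ then forces $\lambda_1(A_\alpha(H))<\lambda_1(A_\alpha(G))$.
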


\begin{proposition} \label{prop::complete_graph_spectrum}
\cite{VN17} The eigenvalues of $A_\alpha(K_n)$ are $\lambda_1(A_\alpha(K_n)) = n-1$ and $\lambda_k(A_\alpha(K_n)) = \alpha n -1 \text{  for } 2 \leq k \leq n$.
\end{proposition}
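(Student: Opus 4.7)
The plan is to exploit the fact that $K_n$ is $(n-1)$-regular, which immediately ties the $A_\alpha$-spectrum to the already-known $A$-spectrum through Lemma \ref{lemma::eigeneq_RegularGraphs}. First I would recall (or quickly verify via the Perron-Frobenius/regularity argument) that the eigenvalues of the adjacency matrix of $K_n$ are $\lambda_1(A(K_n)) = n-1$ and $\lambda_k(A(K_n)) = -1$ for $k = 2, \ldots, n$; this follows because $A(K_n) = J - I$, where $J$ is the all-ones matrix whose spectrum is $\{n, 0, \ldots, 0\}$.

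Next, since $K_n$ is $(n-1)$-regular, Lemma \ref{lemma::eigeneq_RegularGraphs} gives the identity $\lambda_k(A_\alpha(K_n)) = \alpha(n-1) + (1-\alpha)\lambda_k(A(K_n))$ for each $k$. Substituting the two values of $\lambda_k(A(K_n))$ yields
\begin{align*}
\lambda_1(A_\alpha(K_n)) &= \alpha(n-1) + (1-\alpha)(n-1) = n-1, \\
\lambda_k(A_\alpha(K_n)) &= \alpha(n-1) + (1-\alpha)(-1) = \alpha n - 1, \quad 2 \leq k \leq n.
\end{align*}

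As a sanity check on the ordering, one notes that $n - 1 \geq \alpha n - 1$ since $\alpha \leq 1$, so $n-1$ is indeed the largest. Alternatively, one could bypass Lemma \ref{lemma::eigeneq_RegularGraphs} entirely by writing $A_\alpha(K_n) = \alpha(n-1)I + (1-\alpha)(J - I) = (\alpha n - 1)I + (1-\alpha)J$ and reading off the eigenvalues directly from the spectrum of $J$. There is no real obstacle here; the statement is an immediate consequence of regularity plus the well-known spectrum of $J$, so the ``hard part'' is only a matter of choosing the cleaner of two essentially one-line derivations.
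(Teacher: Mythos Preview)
Your argument is correct. Note, however, that the paper does not supply its own proof of this proposition; it merely quotes the result from \cite{VN17}, so there is no in-paper proof to compare against. Your derivation via Lemma \ref{lemma::eigeneq_RegularGraphs} (or, equivalently, the direct decomposition $A_\alpha(K_n)=(\alpha n-1)I+(1-\alpha)J$) is exactly the standard one and is entirely sound.
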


\begin{proposition} \label{prop::complete_bipartite_spectrum}
\cite{VN17} Let $a \geq b \geq 1$. If $\alpha \in [0,1]$, the eigenvalues of $A_\alpha(K_{a,b})$ are
\begin{align*}
    \lambda_1(A_\alpha(K_{a,b})) &= \dfrac{1}{2}\left(\alpha(a+b) + \sqrt{\alpha^2(a+b)^2 + 4ab(1-2\alpha)} \right),\\
    \lambda_{\min}(A_\alpha(K_{a,b})) &= \dfrac{1}{2}\left(\alpha(a+b) - \sqrt{\alpha^2(a+b)^2 + 4ab(1-2\alpha)} \right),\\
    \lambda_k(A_\alpha(K_{a,b})) &= \alpha a \text{ for } 1 < k \leq b,\\
    \lambda_k(A_\alpha(K_{a,b})) &= \alpha b \text{ for } b < k < a+b.
\end{align*}
\end{proposition}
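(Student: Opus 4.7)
The plan is to write $A_\alpha(K_{a,b})$ explicitly in block form and exhibit a complete system of eigenvectors.  Label the vertices so that the first $a$ belong to the larger part and the remaining $b$ to the smaller part.  Since every vertex in the $a$-part has degree $b$ and every vertex in the $b$-part has degree $a$, the matrix takes the block form
\[
A_\alpha(K_{a,b})=\begin{pmatrix}\alpha b\, I_a & (1-\alpha)J_{a,b}\\ (1-\alpha)J_{b,a} & \alpha a\, I_b\end{pmatrix},
\]
where $J_{p,q}$ is the all-ones $p\times q$ matrix.

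First I would hunt for the two ``Perron-type'' eigenvalues by restricting to vectors that are constant on each part, say $(x\,\mathbf 1_a,\,y\,\mathbf 1_b)^\top$.  Plugging this in collapses the eigenvalue equation to the $2\times 2$ system
\[
\begin{pmatrix}\alpha b & (1-\alpha)b\\ (1-\alpha)a & \alpha a\end{pmatrix}\begin{pmatrix}x\\ y\end{pmatrix}=\lambda\begin{pmatrix}x\\ y\end{pmatrix},
\]
whose characteristic polynomial is $\lambda^2-\alpha(a+b)\lambda+ab(2\alpha-1)=0$.  The quadratic formula then gives precisely the two claimed values $\lambda_1$ and $\lambda_{\min}$.

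Next I would produce the remaining $a+b-2$ eigenvalues from ``constant-zero-sum'' vectors.  For any $u\in\mathbb R^a$ with $\mathbf 1_a^\top u=0$, the vector $(u,\mathbf 0)^\top$ satisfies $A_\alpha(K_{a,b})(u,\mathbf 0)^\top=\alpha b\,(u,\mathbf 0)^\top$, because $J_{b,a}u=0$.  This contributes the eigenvalue $\alpha b$ with multiplicity $a-1$.  Symmetrically, vectors $(\mathbf 0,v)^\top$ with $\mathbf 1_b^\top v=0$ yield the eigenvalue $\alpha a$ with multiplicity $b-1$.  All four families of eigenvectors are mutually orthogonal, so together they account for $2+(a-1)+(b-1)=a+b$ eigenvalues, which is the full spectrum.

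Finally I would verify the ordering so that the indexing $\lambda_1\geq\lambda_2\geq\cdots\geq\lambda_{\min}$ matches the claim.  Since $a\geq b$ and $\alpha\geq 0$, it suffices to check $\lambda_1\geq \alpha a$ and $\alpha b\geq \lambda_{\min}$.  Both reduce, after squaring, to the identity
\[
\alpha^2(a+b)^2+4ab(1-2\alpha)-\alpha^2(a-b)^2=4ab(1-\alpha)^2\geq 0,
\]
which is the only mildly delicate step (one must note that the quantity $\alpha(a-b)$ being squared is non-negative, so the squaring is legitimate).  This confirms $\lambda_1\geq \alpha a\geq \alpha b\geq \lambda_{\min}$, completing the proof; the main obstacle is really bookkeeping of multiplicities and of the ordering between $\alpha a$ and $\alpha b$, since the spectral decomposition itself falls out immediately from the block structure.
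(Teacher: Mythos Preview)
Your argument is correct and is the standard way to compute this spectrum: exploit the block structure, reduce the ``constant-on-parts'' eigenvectors to a $2\times2$ problem, and pick up the remaining eigenvalues from zero-sum vectors in each part. The multiplicity count and the ordering verification are both fine; in fact your identity $\alpha^2(a+b)^2+4ab(1-2\alpha)=\alpha^2(a-b)^2+4ab(1-\alpha)^2$ simultaneously shows that the discriminant is nonnegative and that $\sqrt{D}\geq\alpha(a-b)$, which is exactly what is needed.

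There is nothing to compare against in this paper: Proposition~\ref{prop::complete_bipartite_spectrum} is quoted from \cite{VN17} as a preliminary result and no proof is given here. Your write-up would serve perfectly well as a self-contained proof; the only cosmetic point is that the phrase ``mildly delicate'' undersells how routine the squaring step is once you have rewritten the discriminant as a sum of two squares.
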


\section{Main Results} \label{sec::main}

This section presents the main results of this paper which involve bounds for the largest, the second largest and the smallest eigenvalues of $A_\alpha$-matrix.

\subsection{Bounds for \texorpdfstring{$\mathbf{\lambda_1(A_\alpha(G))}$}{lambda1}} \label{subsection::lambda1}

\begin{theorem} \label{theo::LowerBound_SpectralRadius}
Let $\alpha \in [0,1]$ and $G$ be a graph with $m \neq 0$ edges, $n \geq 3$ vertices, $\Delta$ and $\delta$ the maximum and minimum degrees, respectively. Then,
\begin{equation} \label{eq::lowerbound1}
    \displaystyle \lambda_1(A_\alpha(G)) \geq \left( \Delta^2 + \delta^2 + \frac{(2m -\Delta - \delta)^2}{n-2} \right)\frac{\alpha}{2m} + \frac{1-\alpha}{m} \sum_{i \sim j}\sqrt{d_id_j}.
\end{equation}
The equality occurs if and only if $G$ is a regular graph or $\displaystyle G \cong \bigcup_{k=1}^{l-1} G_k \cup K_1$ such that $G_k$ is $r$-regular.
\end{theorem}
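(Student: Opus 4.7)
The plan is to apply the Rayleigh variational characterization of $\lambda_1(A_\alpha(G))$ from Proposition \ref{rayleigh_alpha} with the degree-weighted test vector, and then invoke the first Zagreb lower bound from Theorem \ref{theo::sharp_DAS} to produce exactly the right-hand side of (\ref{eq::lowerbound1}).

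Concretely, I would define $x \in \real^n$ by $x_i = \sqrt{d_i/(2m)}$. Then $|x|^2 = \sum_i d_i/(2m) = 1$, so $x$ is a unit vector and Proposition \ref{rayleigh_alpha} gives $\lambda_1(A_\alpha(G)) \geq \langle A_\alpha(G)x, x\rangle = \alpha \langle D(G)x, x\rangle + (1-\alpha)\langle A(G)x, x\rangle.$ A direct computation yields
\begin{equation*}
\langle D(G)x, x\rangle = \sum_{i=1}^n d_i\,\frac{d_i}{2m} = \frac{Z_1(G)}{2m},
\end{equation*}
and, writing the adjacency quadratic form as a sum over edges,
\begin{equation*}
\langle A(G)x, x\rangle = 2\sum_{i\sim j} \sqrt{\frac{d_i}{2m}}\sqrt{\frac{d_j}{2m}} = \frac{1}{m}\sum_{i\sim j}\sqrt{d_id_j}.
\end{equation*}
Combining these two identities gives $\lambda_1(A_\alpha(G)) \geq \frac{\alpha}{2m}Z_1(G) + \frac{1-\alpha}{m}\sum_{i\sim j}\sqrt{d_id_j}$, and invoking Theorem \ref{theo::sharp_DAS} to replace $Z_1(G)$ by its lower bound $\Delta^2 + \delta^2 + (2m-\Delta-\delta)^2/(n-2)$ produces (\ref{eq::lowerbound1}) exactly.

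The main subtlety is the equality analysis, which is where I expect to spend most of the effort. Equality in (\ref{eq::lowerbound1}) requires equality in two places simultaneously: (a) Rayleigh equality, i.e.\ $x$ must be an eigenvector of $A_\alpha(G)$ for $\lambda_1(A_\alpha(G))$; and (b) equality in Theorem \ref{theo::sharp_DAS}, which forces $d_2 = \cdots = d_{n-1}$. Under (b) the degree sequence has the form $(\Delta, r, \dots, r, \delta)$ for some common middle degree $r$; if $G$ is connected then the Perron eigenvector of $A_\alpha(G)$ is strictly positive (Proposition \ref{prop::Perron_alpha}(i)) and, in view of the degree-weighted form of $x$, one can argue that $A_\alpha(G)x = \lambda_1 x$ forces $d_i$ to be constant, so $G$ is $r$-regular. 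If $G$ is disconnected, Proposition \ref{prop::Perron_alpha}(ii) says the Perron eigenvector vanishes off the components achieving the maximum $\lambda_1(A_\alpha(\cdot))$, and our chosen $x$ vanishes only on isolated vertices; the combination of $d_2 = \cdots = d_{n-1}$, the vanishing pattern and the regularity of the non-isolated components forces $G$ to be the disjoint union of $r$-regular graphs $G_1,\dots,G_{l-1}$ together with at most one copy of $K_1$.

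Conversely, in both extremal cases the verification is immediate: on an $r$-regular graph $x$ is constant and is the Perron eigenvector of $A_\alpha(G)$, while on the disjoint union with an isolated vertex $x$ vanishes on $K_1$ and is constant on each $G_k$, which again lies in the $\lambda_1$-eigenspace of $A_\alpha(G)$; in both cases the degree sequence trivially satisfies the Das equality condition. The only genuinely delicate point is ruling out other disconnected configurations in the equality analysis, for which I would argue componentwise using Proposition \ref{prop::Perron_alpha} and the constraint $d_2 = \cdots = d_{n-1}$.
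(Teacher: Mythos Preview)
Your proposal is correct and follows essentially the same approach as the paper: the Rayleigh quotient with the test vector $x_i=\sqrt{d_i}$ (you normalize, the paper does not, which is immaterial), followed by the Zagreb lower bound of Theorem~\ref{theo::sharp_DAS}, and then an equality analysis splitting into connected and disconnected cases. Your equality discussion is organized slightly more cleanly via Proposition~\ref{prop::Perron_alpha}(ii), whereas the paper argues componentwise directly, but the substance is the same.
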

\begin{proof}
From Proposition \ref{rayleigh_alpha} we know that exist an eigenvector $x \in \real^n$ associated to $\lambda_1(A_\alpha(G))$ that satisfies $\displaystyle \lambda_1(A_\alpha(G)) = \max_{x \in \real^n} \frac{x^T A_\alpha(G) x}{x^Tx}$. So  for all $y \neq kx$,  $\; k \in \real$, we have
$$\lambda_1(A_\alpha(G)) \geq  \frac{y^T A_\alpha(G) y}{y^Ty} = \frac{y^T(\alpha D(G) + (1-\alpha)A(G)) y}{y^Ty}=\frac{\alpha y^T D(G)y + (1-\alpha)y^TA(G) y}{y^Ty}.$$
Taking $y = (\sqrt{d_1}, \sqrt{d_2}, \ldots, \sqrt{d_n})$, follows that
{\small
$$\lambda_1(A_\alpha(G)) \geq \dfrac{\alpha \displaystyle \sum_{i=1}^n d_{i}^2+ (1-\alpha)y^TA(G)) y}{2m}=\dfrac{\alpha \displaystyle \sum_{i=1}^n d_{i}^2}{2m} + \dfrac{(1-\alpha) \displaystyle \sum_{v_i \sim v_j} 2\sqrt{d_{i}d_{j}}}{2m}= \dfrac{\alpha \displaystyle \sum_{i=1}^n d_{i}^2}{2m} + \dfrac{(1-\alpha)}{m} \displaystyle \sum_{v_i \sim v_j} \sqrt{d_{i}d_{j}}
$$}
From Theorem \ref{theo::sharp_DAS} follows that
\begin{equation} \label{ineq::SpactralRadiusBound_alpha}
	\lambda_1(A_\alpha(G)) \geq \dfrac{\alpha}{2m} \left[ \Delta^2 + \delta^2 + \dfrac{(2m - \Delta - \delta)^2}{n-2}\right] + \dfrac{(1-\alpha)}{m} \sum_{v_i \sim v_j} \sqrt{d_{i}d_{j}} 
\end{equation}

Suppose initially that $G$ is $r$-regular graph. So $\Delta = \delta = r$ and $m = \dfrac{nr}{2}$. Then,
\begin{align*}
	&\dfrac{\alpha}{2m} \left[ \Delta^2 + \delta^2 + \dfrac{(2m - \Delta - \delta)^2}{n-2}\right] + \dfrac{(1-\alpha)}{m} \sum_{v_i \sim v_j} \sqrt{d_{i}d_{j}} =
	\dfrac{\alpha}{nr} \left[ r^2 + r^2 + \dfrac{(nr - r - r)^2}{n-2}\right] + \dfrac{(1-\alpha)}{m} mr = \\
	&\dfrac{\alpha}{nr} \left[ 2r^2 + \dfrac{r^2(n -2)^2}{n-2} \right] + (1-\alpha)r =
	\dfrac{\alpha}{nr} r^2\left[ 2 + n -2 \right] + (1-\alpha)r =
	 \alpha r + (1-\alpha)r = r
\end{align*}

Now, suppose that $\displaystyle G \cong \bigcup_{k=1}^{l-1} G_k \cup K_1$ such that $G_k$ is $r$-regular. We have $\displaystyle \sum_{k=1}^{l-1}|V(G_k)| + 1 = n,$  $\Delta = r$, $\delta = 0$ and $m = \dfrac{(n-1)r}{2}$. So, 
\begin{align*}
	&\dfrac{\alpha}{2m} \left[ \Delta^2 + \delta^2 + \dfrac{(2m - \Delta - \delta)^2}{n-2}\right] + \dfrac{(1-\alpha)}{m} \sum_{v_i \sim v_j} \sqrt{d_{i}d_{j}} = 
	\dfrac{\alpha}{(n-1)r} \left[ r^2  + \dfrac{((n-1)r - r)^2}{n-2}\right] + \dfrac{(1-\alpha)}{m} mr = \\
	&\dfrac{\alpha}{(n-1)r} \left[ r^2 + \dfrac{r^2(n -2)^2}{n-2} \right] + (1-\alpha)r =
	\dfrac{\alpha}{(n-1)r} r^2\left[ n -1 \right] + (1-\alpha)r =
	 \alpha r + (1-\alpha)r = r.
\end{align*}
Moreover, from Lemma \ref{lemma::eigeneq_RegularGraphs} we have $\lambda_1(A_\alpha(G)) = r$.

Now, suppose there is a graph $G$ that satisfies the equality
$$\displaystyle \lambda_1(A_\alpha(G)) = \left( \Delta^2 + \delta^2 + \frac{(2m -\Delta - \delta)^2}{n-2} \right)\frac{\alpha}{2m} + \frac{1-\alpha}{m} \sum_{v_i \sim v_j}\sqrt{d_id_j}.$$
This implies that $x = (\sqrt{d_{1}}, \ldots,\sqrt{d_{n}})$ is eigenvector of $A_\alpha(G)$ associated to $\lambda_1(A_\alpha(G))$, this is $A_\alpha(G)x = \lambda_1(A_\alpha(G))x$.

If $G$ is connected, we have \begin{equation*}
    \alpha d_i + (1-\alpha) \dfrac{\displaystyle \sum_{v_j \sim v_i} \sqrt{d_id_j}}{d_i}  = \lambda_1(A_\alpha(G)),
\end{equation*} 
for all $i = 1, \ldots n$. So, for arbitrary $i$ and $s$, such that $i \neq s$  we have
\begin{equation*}
    \alpha d_i + (1-\alpha) \dfrac{\displaystyle \sum_{v_j \sim v_i} \sqrt{d_id_j}}{d_i}  = \alpha d_s + (1-\alpha) \dfrac{\displaystyle \sum_{v_j \sim v_s} \sqrt{d_sd_j}}{d_s}  
\end{equation*}
which implies that $d_i = d_s$ and then $G$ is regular.

Now, suppose that $G$ is disconnected. Then $ \displaystyle G \cong \bigcup_{k=1}^{l} G_k$ and consequently for each component such that $|E(G_k)| \neq 0$, we also have
\begin{equation*}
    \alpha d_i + (1-\alpha) \dfrac{\displaystyle \sum_{v_j \sim v_i} \sqrt{d_id_j}}{d_i}  = \lambda_1(A_\alpha(G_k)),
\end{equation*}
for all $i = 1, \ldots ,\vert V(G_k) \vert$. So, for arbitrary $i$ and $s$, such that $i \neq s$  we have
\begin{equation*}
    \alpha d_i + (1-\alpha) \dfrac{\displaystyle \sum_{v_j \sim v_i} \sqrt{d_id_j}}{d_i}  = \alpha d_s + (1-\alpha) \dfrac{\displaystyle \sum_{v_j \sim v_s} \sqrt{d_sd_j}}{d_s}  
\end{equation*}
which implies that $d_i = d_s$ and then $G_k,$ $\forall k,$ is regular. From Theorem \ref{theo::sharp_DAS} $\displaystyle Z_1(G) = \Delta^2 + \delta^2 + \frac{(2m - \Delta - \delta)^2}{n-2}$ must occurs if and only if $d_2 = \ldots = d_{n-1}$. Then, $\displaystyle G \cong \bigcup_{k=1}^{l} G_k$ or $\displaystyle G \cong \bigcup_{k=1}^{l-1} G_k \cup K_1$ where $G_k$ is $r$-regular.
\end{proof}

Theorem \ref{theo::LowerBound_SpectralRadius2} presents other lower bound of $\lambda_1(A_\alpha(G))$. This bound is obtained similarly of the lower bound for $\lambda_1(A(G))$, obtained by Kumar, \cite{Kumar2010}. Denote by $(\biga_\alpha)_i$, $\biga_i$ and $\bigd_i$ to represent the $i^{th}$ column vector of the matrices $A_\alpha(G)$, $A(G)$ and $D(G)$, respectively.  Furthermore, we define $c_{ij} = \vert N_{v_i} \cap N_{v_j} \vert$.

\begin{theorem} \label{theo::LowerBound_SpectralRadius2}
   Let $G$ be a graph with $n \geq 2$ vertices and $\alpha \in [0,1]$. Then
   {\small
   \begin{equation} \label{eq::lowerbound2}
       \lambda_1(A_\alpha(G)) \geq \max \left\{ \max_{j < i}\sqrt{\dfrac{\alpha^2(d_i^2 + d_j^2) + (1-\alpha)^2(d_i+d_j) + \sqrt{C}}{2}},\max_{j < i} \sqrt{\dfrac{\alpha^2(d_i^2 + d_j^2) + (1-\alpha)^2(d_i+d_j) + \sqrt{F}}{2}} \right\},
   \end{equation}
   }
where $C = \alpha^4(d_i^4 + d_j^4) + 2\alpha^2(1-\alpha)^2(d_i^3+d_j^3) + 8\alpha c_{ij}d_i(1-\alpha)^3 - 2d_j^2d_i\alpha^2(\alpha^2d_i + (1-\alpha)^2) - 2d_j\alpha(1-\alpha)^2(\alpha d_i^2 + 2c_{ij}(\alpha - 1)) +(4c_{ij}^2 + d_i^2 + d_j^2)(1-\alpha)^4 + 2d_id_j(7\alpha^2 + 2\alpha -1)(1-\alpha)^2$ and $F = (d_i^2 - d_j^2)^2 + 2\alpha^2(1-\alpha)^2(d_i^3 + d_i^2d_j - d_id_j^2 + d_j^3) + (1-\alpha)^4(4c_{ij} + (d_i + d_j)^2)$
\end{theorem}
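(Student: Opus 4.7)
The plan is to combine the Cauchy--Poincaré Interlacing Theorem (Theorem \ref{theo::interlacing}) with the nonnegativity of $A_\alpha(G)$. Since $A_\alpha(G)$ has nonnegative entries, its Perron root is $\lambda_1(A_\alpha(G))$, and therefore $\lambda_1\!\left(A_\alpha(G)^2\right) = \lambda_1(A_\alpha(G))^2$. Hence for every pair of indices $j < i$, interlacing applied to the $2\times 2$ principal submatrix $M_{ij}$ of $A_\alpha(G)^2$ indexed by $\{i,j\}$ will give $\lambda_1(M_{ij}) \leq \lambda_1(A_\alpha(G))^2$, and taking a square root will yield a lower bound for $\lambda_1(A_\alpha(G))$ of exactly the shape appearing in \eqref{eq::lowerbound2}.

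First I would write $A_\alpha(G) = \alpha D(G) + (1-\alpha)A(G)$ and compute the entries of $A_\alpha(G)^2$ explicitly in terms of $d_i$, $d_j$ and the common-neighborhood count $c_{ij} = \lvert N_{v_i} \cap N_{v_j}\rvert$. A direct expansion yields diagonal entries $(A_\alpha(G)^2)_{kk} = \alpha^2 d_k^2 + (1-\alpha)^2 d_k$ for $k \in \{i,j\}$, so the trace of $M_{ij}$ is $\alpha^2(d_i^2+d_j^2) + (1-\alpha)^2(d_i+d_j)$, matching the linear piece under the inner square root in \eqref{eq::lowerbound2}. The off-diagonal entry $(A_\alpha(G)^2)_{ij}$ splits into the two cases that will account for the two alternatives inside the outer maximum: if $v_i \sim v_j$ then
\[
(A_\alpha(G)^2)_{ij} = \alpha(1-\alpha)(d_i+d_j) + (1-\alpha)^2 c_{ij},
\]
whereas if $v_i \nsim v_j$ it reduces to $(1-\alpha)^2 c_{ij}$.

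Next I would invoke the standard closed form for the dominant eigenvalue of a symmetric $2\times 2$ matrix, namely
\[
\lambda_1(M_{ij}) = \tfrac{1}{2}\!\left( \mathrm{tr}(M_{ij}) + \sqrt{(\mathrm{tr}(M_{ij}))^2 - 4\det(M_{ij})}\,\right),
\]
together with the identity $(\mathrm{tr}(M_{ij}))^2 - 4\det(M_{ij}) = \left((M_{ij})_{11} - (M_{ij})_{22}\right)^2 + 4\left((M_{ij})_{12}\right)^2$. Substituting the diagonal difference $(d_i - d_j)\!\left(\alpha^2(d_i+d_j) + (1-\alpha)^2\right)$ together with each of the two off-diagonal formulas and expanding produces two discriminants; the adjacent case delivers $C$ and the non-adjacent case delivers $F$. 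Taking the maximum over all $j < i$ and applying $\lambda_1(A_\alpha(G)) \geq \sqrt{\lambda_1(M_{ij})}$ then yields \eqref{eq::lowerbound2}.

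The main obstacle is purely bookkeeping: collecting the many monomials in $\alpha$, $1-\alpha$, $d_i$, $d_j$ and $c_{ij}$ so that the expanded discriminants rearrange exactly into the forms $C$ and $F$ stated. This step requires care but introduces no new conceptual ingredient beyond the interlacing argument on $A_\alpha(G)^2$.
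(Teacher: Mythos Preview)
Your proposal is correct and follows essentially the same route as the paper: interlace a $2\times 2$ principal submatrix of $A_\alpha(G)^2$, compute its entries via $A_\alpha^2=\alpha^2 D^2+\alpha(1-\alpha)(DA+AD)+(1-\alpha)^2A^2$, split into the adjacent and non-adjacent cases, and apply the closed-form eigenvalue formula for a symmetric $2\times 2$ matrix. If anything, you are slightly more careful than the paper in justifying $\lambda_1\!\left(A_\alpha(G)^2\right)=\lambda_1(A_\alpha(G))^2$ via nonnegativity of $A_\alpha(G)$; the paper uses this identity without comment.
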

\begin{proof}
We know that $\lambda_1(M)$, for all  symmetric matrix $M$, is greater than or equal to the largest eigenvalue of any principal sub-matrix of $M$. We also know that  any principal sub-matrix of order two of $A_\alpha^2(G)$ is of the form
$B = \begin{bmatrix}
(\biga_\alpha^T)_i(\biga_\alpha)_i & (\biga_\alpha^T)_i(\biga_\alpha)_j \\
(\biga_\alpha^T)_j(\biga_\alpha)_i & (\biga_\alpha^T)_j(\biga_\alpha)_j
\end{bmatrix}.$
Then,
{\small
\begin{align}
&\begin{bmatrix}
	(\biga_\alpha^T)_i(\biga_\alpha)_i & (\biga_\alpha^T)_i(\biga_\alpha)_j \\
	(\biga_\alpha^T)_j(\biga_\alpha)_i & (\biga_\alpha^T)_j(\biga_\alpha)_j
\end{bmatrix} =
\begin{bmatrix}
	(\alpha \bigd_i^T + (1-\alpha)\biga_i^T)(\alpha \bigd_i + (1-\alpha)\biga_i) & (\alpha \bigd_i^T + (1-\alpha)\biga_i^T)(\alpha \bigd_j + (1-\alpha)\biga_j) \\
	(\alpha \bigd_j^T + (1-\alpha)\biga_j^T)(\alpha \bigd_i + (1-\alpha)\biga_i) & (\alpha \bigd_j^T + (1-\alpha)\biga_j^T)(\alpha \bigd_j + (1-\alpha)\biga_j)
\end{bmatrix} = \nonumber \\
& \begin{bmatrix}
    \alpha^2 \bigd_i^T\bigd_i + \alpha(1-\alpha)(\bigd_i^T\biga_i + \biga_i^T\bigd_i) + (1-\alpha)^2\biga_i^T\biga_i & \alpha^2 \bigd_i^T\bigd_j + \alpha(1-\alpha)(\bigd_i^T\biga_j + \biga_i^T\bigd_j) + (1-\alpha)^2\biga_i^T\biga_j \\
    \alpha^2\bigd_j^T\bigd_i + \alpha(1-\alpha)(\bigd_j^T\biga_i + \biga_j^T\bigd_i) + (1-\alpha)^2\biga_j^T\biga_i & \alpha^2 \bigd_j^T\bigd_j + \alpha(1-\alpha)(\bigd_j^T\biga_j + \biga_j^T\bigd_j) + (1-\alpha)^2\biga_j^T\biga_j
\end{bmatrix} \label{eq::powMatrix}
\end{align}
}
We have two cases to consider. The first one is if $v_i \sim v_j$, from (\ref{eq::powMatrix}) we have
\begin{align*}
\begin{bmatrix}
	(\biga_\alpha^T)_i(\biga_\alpha)_i & (\biga_\alpha^T)_i(\biga_\alpha)_j \\
	(\biga_\alpha^T)_j(\biga_\alpha)_i & (\biga_\alpha^T)_j(\biga_\alpha)_j
\end{bmatrix} =
\begin{bmatrix}
    \alpha^2d_i^2 + (1-\alpha)^2d_i & \alpha(1-\alpha)(d_i+d_j) + (1-\alpha)^2c_{ij}\\
    \alpha(1-\alpha)(d_i+d_j) + (1-\alpha)^2c_{ij} & \alpha^2d_j^2 + (1-\alpha)^2d_j
\end{bmatrix}.
\end{align*}
As $\lambda_1(A_\alpha^2(G)) \geq \lambda_1(B)$ we obtain 

\begin{equation*}
    \lambda_1(A_\alpha(G)) \geq \dfrac{1}{\sqrt{2}}\sqrt{\alpha^2(d_i^2 + d_j^2) + (1-\alpha)^2(d_i+d_j) + \sqrt{C}}
\end{equation*}
where $C = \alpha^4(d_i^2 - d_j^2)^2 + 2\alpha^2(1-\alpha)^2(d_i^3+d_j^3) -8\alpha c_{ij}(\alpha - 1)^3(d_i + d_j) + (\alpha-1)^2(5\alpha^2 -2\alpha+1)(d_i^2 + d_j^2) - 2\alpha^2(1-\alpha)^2d_id_j(d_i+d_j) + 4c_{ij}^2(\alpha - 1)^4 + 2d_id_j(1-\alpha)^2(1+\alpha)(3\alpha - 1)$ 

The second one is if $v_i \nsim v_j$. In this case, from (\ref{eq::powMatrix}) we have  that
\begin{align*}
B = \begin{bmatrix}
	(\biga_\alpha^T)_i(\biga_\alpha)_i & (\biga_\alpha^T)_i(\biga_\alpha)_j \\
	(\biga_\alpha^T)_j(\biga_\alpha)_i & (\biga_\alpha^T)_j(\biga_\alpha)_j
\end{bmatrix} =
\begin{bmatrix}
    \alpha^2d_i^2 + (1-\alpha)^2d_i & (1-\alpha)^2c_{ij}\\
    (1-\alpha)^2c_{ij} & \alpha^2d_j^2 + (1-\alpha)^2d_j
\end{bmatrix}.
\end{align*}

Then, 
\begin{equation*}
    \lambda_1(A_\alpha(G)) \geq \dfrac{1}{\sqrt{2}}\sqrt{\alpha^2(d_i^2 + d_j^2) + (1-\alpha)^2(d_i+d_j) + \sqrt{F}}
\end{equation*}
where $F = (d_i^2 - d_j^2)^2 + 2\alpha^2(1-\alpha)^2(d_i^3 + d_i^2d_j - d_id_j^2 + d_j^3) + (1-\alpha)^4(4c_{ij} + (d_i + d_j)^2)$ and  the result follows.
\end{proof}
 
\begin{example} Let $G$ be a graph in Figure \ref{fig::graph1}. Table \ref{tab:table1} shows that the lower bounds presented in $(\ref{eq::lowerbound1})$ and $(\ref{eq::lowerbound2})$ are incomparable.
\begin{figure}[!h]%
    \centering
    \includegraphics[width=0.2\textwidth]{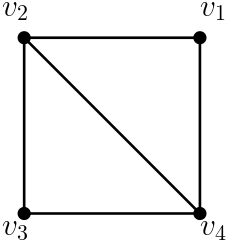}
    \centering
    \caption{Graph $G$.}
    \label{fig::graph1}
\end{figure}
\begin{table}[h]
    \centering
    \resizebox{16cm}{!}
    {
    \begin{tabular}{|c|c|c|c|c|c|c|c|c|c|c|}
    \hline
    & $\alpha = 0.0$ & $\alpha = 0.1$ & $\alpha = 0.2$ & $\alpha = 0.3$ & $\alpha = 0.4$ & $\alpha = 0.5$ & $\alpha = 0.6$ & $\alpha = 0.7$ & $\alpha = 0.8$ & $\alpha = 0.9$ \\  
    \hline
    $\lambda_1(A_\alpha(G))$ & 2.56155 &  2.56815 & 2.57631 & 2.58661 & 2.6 & 2.61803 & 2.6434 & 2.68102 & 2.74031 & 2.83852\\
    \hline
    (\ref{eq::lowerbound1}) & 2.55959 & 2.55863 & 2.55767 & 2.55671 & 2.55576 & 2.5548 & 2.55384 & 2.55288 & 2.55192 & 2.55096\\ 
    \hline
    (\ref{eq::lowerbound2}) & 2.23607 & 2.16333 & 2.12603 & 2.12603 & 2.16333 & 2.23607 & 2.34094 & 2.47386 & 2.63059 & 2.80713\\
    \hline 
    \end{tabular}
    }
    \caption{Table with lower bounds of $\lambda_1(A_\alpha(G))$ with different $\alpha$ values.}
    \label{tab:table1}
\end{table}

\end{example}

\begin{theorem}
If $G$ is a connected graph with $n$ vertices, $m$ edges, maximum degree $\Delta$, minimum degree $\delta$ and $\alpha \in [0,1]$, then 
\begin{equation} \label{eq::UpperBound1}
    \lambda_1(A_\alpha(G)) \leq \sqrt{\alpha^2 (2mn -n(n-1)\delta + 2m(\delta-1)) + 2m(1-\alpha)^2 + \delta(\alpha^2\delta+(1-\alpha)^2)(\Delta - n + 1)}.
\end{equation}
\end{theorem}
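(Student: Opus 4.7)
The plan is to combine the trace identity with a refinement that captures the correction term. Start from the identity
\[
\mathrm{tr}(A_\alpha(G)^2) \;=\; \alpha^2 Z_1(G) + (1-\alpha)^2\cdot 2m,
\]
which follows upon expanding $A_\alpha^2 = \alpha^2 D^2 + \alpha(1-\alpha)(DA+AD) + (1-\alpha)^2 A^2$ and observing that the cross term $2\alpha(1-\alpha)\mathrm{tr}(DA)$ vanishes because $A$ has zero diagonal. Since $A_\alpha^2$ is symmetric (in fact positive semidefinite) and $A_\alpha$ is a non-negative irreducible matrix (so its largest eigenvalue has largest magnitude), one has $\lambda_1(A_\alpha)^2=\lambda_1(A_\alpha^2)\le \mathrm{tr}(A_\alpha^2)$. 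Applying Theorem~\ref{theo::sharp_DAS_upper} to replace $Z_1(G)$ by its upper bound $2mn-n(n-1)\delta+2m(\delta-1)$ already produces the first two terms of the stated right-hand side.

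For the correction term $\delta(\alpha^2\delta+(1-\alpha)^2)(\Delta-n+1)$ (which is non-positive since $\Delta\le n-1$ and thus sharpens the trace estimate), the key observation is that every diagonal entry of $A_\alpha^2$ is
\[
(A_\alpha^2)_{vv} \;=\; \alpha^2 d_v^2+(1-\alpha)^2 d_v \;\ge\; \alpha^2\delta^2+(1-\alpha)^2\delta,
\]
because the function $t\mapsto \alpha^2 t^2+(1-\alpha)^2 t$ is increasing on $[0,\infty)$. I would then select a vertex $v^\star$ with $d(v^\star)=\Delta$ and consider the principal sub-matrix $(A_\alpha^2)[S]$ indexed by $S=V\setminus N[v^\star]$, with $|S|=n-1-\Delta$. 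Its trace is at least $(n-1-\Delta)\bigl(\alpha^2\delta^2+(1-\alpha)^2\delta\bigr)$, and by Cauchy interlacing (Theorem~\ref{theo::interlacing}) the top $n-1-\Delta$ eigenvalues of $A_\alpha^2$ dominate those of $(A_\alpha^2)[S]$, giving the lower bound
\[
\sum_{i=1}^{n-1-\Delta} \lambda_i(A_\alpha^2) \;\ge\; (n-1-\Delta)\bigl(\alpha^2\delta^2+(1-\alpha)^2\delta\bigr).
\]
Subtracting this from $\mathrm{tr}(A_\alpha^2)$ and applying the Zagreb bound yields exactly the form in the statement.

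The main obstacle is the transition from the lower bound on $\sum_{i=1}^{n-1-\Delta}\lambda_i(A_\alpha^2)$ to the lower bound on $\sum_{i=2}^{n}\lambda_i(A_\alpha^2)=\mathrm{tr}(A_\alpha^2)-\lambda_1(A_\alpha)^2$ that is actually needed to isolate $\lambda_1(A_\alpha)^2$. This step has to be handled by carefully exploiting the positive semi-definiteness of $A_\alpha^2$ (so that the remaining eigenvalues $\lambda_{n-\Delta},\dots,\lambda_n$ are non-negative) and separating the contribution of $\lambda_1$; the combinatorial bookkeeping to make the $\delta(\alpha^2\delta+(1-\alpha)^2)$ factor appear cleanly (rather than a weaker constant) is where the calculation becomes most delicate.
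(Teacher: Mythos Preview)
Your approach is genuinely different from the paper's, and the ``main obstacle'' you flag is not a bookkeeping issue but an actual gap that cannot be closed along the lines you sketch.

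You correctly establish, via interlacing on the principal sub-matrix indexed by $S=V\setminus N[v^\star]$, that
\[
\sum_{i=1}^{\,n-1-\Delta}\lambda_i(A_\alpha^2)\;\ge\;(n-1-\Delta)\bigl(\alpha^2\delta^2+(1-\alpha)^2\delta\bigr)=:c.
\]
But what is needed to deduce $\lambda_1(A_\alpha)^2\le \mathrm{tr}(A_\alpha^2)-c$ is the inequality
\[
\sum_{i=2}^{n}\lambda_i(A_\alpha^2)\;\ge\;c,
\]
and passing from your sum to this one would require $\lambda_1(A_\alpha^2)\le \sum_{i=n-\Delta}^{n}\lambda_i(A_\alpha^2)$, i.e.\ that the single top eigenvalue is dominated by the $\Delta+1$ \emph{smallest} ones. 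Positive semidefiniteness only guarantees these tail eigenvalues are non-negative; it does not make them large, and the inequality fails in general. Interlacing and Schur--Horn majorization both go in the wrong direction here: they bound partial sums of the largest eigenvalues from below (or of the smallest from above) by diagonal data, never the reverse.

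The paper avoids this entirely by working not with the trace but with the Perron eigenvector. Let $X=(x_1,\dots,x_n)$ be the unit positive eigenvector for $\lambda_1(A_\alpha)$ and, for each $i$, let $X^i$ be $X$ with the coordinates outside $N[v_i]$ zeroed out. Then $\lambda_1 x_i=(A_\alpha)_i X=(A_\alpha)_i X^i$, and Cauchy--Schwarz row by row gives
\[
\lambda_1^2 x_i^2 \;\le\; \bigl(\alpha^2 d_i^2+(1-\alpha)^2 d_i\bigr)\Bigl(1-\sum_{v_j\nsim v_i}x_j^2\Bigr).
\]
Summing over $i$ produces $\lambda_1^2\le \alpha^2 Z_1(G)+2m(1-\alpha)^2-\sum_i(\alpha^2 d_i^2+(1-\alpha)^2 d_i)\sum_{v_j\nsim v_i}x_j^2$, and the last double sum is then bounded below by replacing each $d_i$ by $\delta$, swapping the order of summation, and using $\sum_i d_i x_i^2\le \Delta$. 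This is what manufactures the correction term $\delta(\alpha^2\delta+(1-\alpha)^2)(\Delta-n+1)$; the eigenvector weights $x_j^2$ are essential, and purely spectral data (trace, interlacing) do not recover them. Only after this step does the paper invoke Theorem~\ref{theo::sharp_DAS_upper} for $Z_1(G)$.
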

\begin{proof}
Denote by $(A_\alpha)_i$ the $i$-th row of the matrix $A_\alpha(G)$. From Proposition \ref{prop::Perron_alpha}, let $X = (x_1, \ldots, x_n)$ be the unit positive eigenvector associated to $\lambda_1(A_\alpha(G))$. Denote $X^i$ the vector obtained from $X$ by replacing $x_j$ by $0$ if $v_j$ is not adjacent to $v_i$. Since $A_\alpha(G) X = \lambda_1(A_\alpha(G))X$, we have that 
$\lambda_1(A_\alpha(G)) x_i = (A_\alpha)_iX = (A_\alpha)_iX^i.$

From the Cauchy-Schwarz inequality, we have
\begin{equation}\label{ineq::cauchy}
	\lambda_1^2(A_\alpha(G)) x_i^2 = |(A_\alpha)_iX^i|^2 \leq |(A_\alpha)_i|^2|X^i|^2 = \left(d_i^2\alpha^2 + (1-\alpha)^2d_i \right)\left(1-\sum_{v_i \nsim v_j}x_j^2 \right).
\end{equation}
Taking the inequality (\ref{ineq::cauchy}) for all $i,$ we have
\begin{equation} \label{ineq1::UpperBound_radius} 
	\lambda_1^2(A_\alpha(G)) \leq \alpha^2 \sum_{i=1}^n d_i^2 + (1-\alpha)^2\sum_{i=1}^n d_i - \sum_{i=1}^n \left(d_i^2\alpha^2 + (1-\alpha)^2d_i\right)\left(\sum_{v_i \nsim v_j} x_j^2 \right).
\end{equation}
Since
\begin{align}
	\sum_{i=1}^n \left(d_i^2\alpha^2 + (1-\alpha)^2d_i\right)\left(\sum_{v_i \nsim v_j} x_j^2 \right) &= \alpha^2 \sum_{i=1}^nd_i^2\left(\sum_{\substack{v_i \nsim v_j}}x_j^2 \right) 
	+ (1-\alpha)^2 \sum_{i=1}^n d_i\left(\sum_{\substack{v_i \nsim v_j}}x_j^2 \right) \nonumber\\
	&\geq \alpha^2 \delta^2 \sum_{i=1}^n \left(\sum_{\substack{v_i \nsim v_j}}x_j^2 \right) 
	+ (1-\alpha)^2 \delta \sum_{i=1}^n\left(\sum_{\substack{v_i \nsim v_j}}x_j^2 \right) \nonumber \\
	&=\alpha^2\delta^2\sum_{i=1}^n(n-d_i -1)x_i^2	+(1-\alpha)^2 \delta \sum_{i=1}^n (n-d_i -1)x_i^2 \nonumber\\
	&= \alpha^2\delta^2 (n-1) + (1-\alpha)^2\delta(n-1) - \alpha^2\delta^2 \sum_{i=1}^nd_ix_i^2 - (1-\alpha)^2\delta \sum_{i=1}^n d_ix_i^2 \label{ineq2::UpperBound_radius}
\end{align}

Therefore, substituting (\ref{ineq2::UpperBound_radius}) into (\ref{ineq1::UpperBound_radius}), we have
\begin{align*}
	\lambda_1^2(A_\alpha(G)) &\leq \alpha^2 \sum_{i=1}^n d_i^2 + (1-\alpha)^2\sum_{i=1}^n d_i - \delta(n-1)(\alpha^2\delta+(1-\alpha)^2) + \alpha^2\delta^2 \sum_{i=1}^nd_ix_i^2 + (1-\alpha)^2\delta \sum_{i=1}^n d_ix_i^2	\\
	& \leq \alpha^2 \sum_{i=1}^n d_i^2 + 2m(1-\alpha)^2 + \delta(\alpha^2\delta+(1-\alpha)^2)(\Delta - n + 1)
\end{align*}
From Theorem \ref{theo::sharp_DAS_upper} we have
\begin{align*}
    \lambda_1^2(A_\alpha(G)) \leq \alpha^2 (2mn -n(n-1)\delta + 2m(\delta-1)) + 2m(1-\alpha)^2 + \delta(\alpha^2\delta+(1-\alpha)^2)(\Delta - n + 1).
\end{align*}
	
Then
\begin{align} \label{ineq3::UpperBound_radius}
	\lambda_1(A_\alpha(G)) &\leq \sqrt{\alpha^2 (2mn -n(n-1)\delta + 2m(\delta-1)) + 2m(1-\alpha)^2 + \delta(\alpha^2\delta+(1-\alpha)^2)(\Delta - n + 1)}
\end{align}
and the result follows.
\end{proof}

\begin{theorem} \label{theo::UpperBound2_SpectralRadius}
    Let $G$ be a graph with $n$ vertices, $m$ edges, maximum degree $\Delta$, minimum degree $\delta$ and $\alpha \in [0,1]$. Then
    \begin{equation} \label{eq::UpperBound2}
        \lambda_1(A_\alpha(G)) \leq \dfrac{1}{2} \left( \delta -1 + \sqrt{(\delta - 1)^2 + 4(\alpha \Delta -\alpha(\delta-1)\delta + (1-\alpha)(2m - \delta(n-1))} \right).
    \end{equation}
    Equality holds if and only if $G$ is regular.
\end{theorem}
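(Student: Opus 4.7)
The plan is to emulate the Hong--Ellingham-style derivation of the adjacency analogue $\lambda_1(A(G)) \le \tfrac12\bigl(\delta - 1 + \sqrt{(\delta-1)^2 + 4(2m - \delta(n-1))}\bigr)$: bound the maximal row sum of $A_\alpha^2 - (\delta-1)A_\alpha$, translate it into a quadratic inequality for $\lambda_1(A_\alpha(G))$, and solve. Assuming $G$ is connected and $\alpha \in [0,1)$, Proposition~\ref{prop::Perron_alpha}(i) supplies a positive Perron eigenvector $x$ of $A_\alpha(G)$, and the identity $P(\lambda_1(A_\alpha(G)))\langle x,\mathbf{1}\rangle = \langle x, P(A_\alpha(G))\mathbf{1}\rangle$ (valid for any polynomial $P$) shows that Lemma~\ref{lemma::rowsums} extends verbatim to $A_\alpha$: $P(\lambda_1(A_\alpha(G))) \le \max_i S_i(P(A_\alpha(G)))$, with equality iff all row sums of $P(A_\alpha(G))$ coincide.

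The core computation is the expansion $A_\alpha^2 = \alpha^2 D^2 + \alpha(1-\alpha)(DA+AD) + (1-\alpha)^2 A^2$. Using $S_i(D^2) = d_i^2$, $S_i(DA) = d_i^2$, and $S_i(AD) = S_i(A^2) = \sum_{v_j\sim v_i}d_j$, the $\alpha(1-\alpha)$ cross terms telescope to
\[
S_i(A_\alpha^2) = \alpha d_i^2 + (1-\alpha)\,S_i(A^2),
\]
so that $S_i(A_\alpha^2 - (\delta-1)A_\alpha) = \alpha d_i(d_i - \delta + 1) + (1-\alpha)\,S_i(A^2 - (\delta-1)A)$. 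Lemma~\ref{lemma:: rowsumsBound} bounds the second summand by $2m - \delta(n-1)$, and because $t \mapsto t(t - \delta + 1)$ is strictly increasing on $[\delta,\Delta]$ (its derivative $2t - (\delta-1)$ is $\ge \delta+1 > 0$ there), the first summand is at most $\alpha\Delta(\Delta - \delta + 1)$. Applying the extended row-sum inequality to $P(t) = t^2 - (\delta-1)t$ gives
\[
\lambda_1(A_\alpha(G))^2 - (\delta-1)\,\lambda_1(A_\alpha(G)) \le \alpha\Delta(\Delta - \delta + 1) + (1-\alpha)(2m - \delta(n-1)).
\]
Since $\lambda_1(A_\alpha(G)) \ge \tfrac{2m}{n} \ge \delta$ (Rayleigh quotient at $\mathbf{1}$), the quadratic $t^2 - (\delta-1)t$ is increasing at $\lambda_1(A_\alpha(G))$, so taking the positive root delivers exactly~(\ref{eq::UpperBound2}).

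For the equality clause, if $G$ is $r$-regular then $\delta = \Delta = r$ and $2m = nr$; direct substitution collapses the right-hand side to $r$, which equals $\lambda_1(A_\alpha(G))$ by Lemma~\ref{lemma::eigeneq_RegularGraphs}. Conversely, equality in the chain forces all row sums of $A_\alpha^2 - (\delta-1)A_\alpha$ to equal their common maximum (by the equality clause of the extended row-sum lemma), and strict monotonicity of $t(t-\delta+1)$ on $[\delta,\Delta]$ then forces $d_i = \Delta$ for every $i$, i.e., $G$ is regular. The main obstacle is really the bookkeeping behind the clean identity $S_i(A_\alpha^2) = \alpha d_i^2 + (1-\alpha)S_i(A^2)$; once that telescoping is isolated and the row-sum lemma is seen to transfer from $A$ to $A_\alpha$ via the Perron vector, the rest is a mechanical transcription of Hong's template together with the equality analysis of Lemma~\ref{lemma::rowsums}.
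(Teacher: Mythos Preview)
Your approach is essentially the paper's: both compute the row sums of $A_\alpha^2-(\delta-1)A_\alpha$ via the identity $S_i(A_\alpha^2)=\alpha d_i^2+(1-\alpha)S_i(A^2)$, invoke Lemma~\ref{lemma:: rowsumsBound} on the $(1-\alpha)$ part, pass to eigenvalues through Lemma~\ref{lemma::rowsums}, and solve the quadratic; your explicit remark that Lemma~\ref{lemma::rowsums} must be extended from $A$ to $A_\alpha$ via the Perron vector is something the paper glosses over.

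One bookkeeping discrepancy, however: you bound $\alpha d_i(d_i-\delta+1)\le \alpha\Delta(\Delta-\delta+1)$, whereas the paper bounds $\alpha d_i^2\le\alpha\Delta^2$ and $-\alpha(\delta-1)d_i\le-\alpha(\delta-1)\delta$ separately, arriving at $\alpha\Delta^2-\alpha(\delta-1)\delta$ inside the radical (the ``$\alpha\Delta$'' printed in~(\ref{eq::UpperBound2}) is evidently a typo for $\alpha\Delta^2$, as the paper's own proof shows). Thus your argument does \emph{not} ``deliver exactly~(\ref{eq::UpperBound2})'': it delivers the variant with $\alpha\Delta(\Delta-\delta+1)$ in place of $\alpha\Delta^2-\alpha(\delta-1)\delta$. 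For $\delta\ge 1$ your constant is the smaller one, so the stated inequality follows a fortiori; but your equality analysis (forcing $d_i=\Delta$ via strict monotonicity of $t\mapsto t(t-\delta+1)$) pertains to your tighter bound and, as written, needs $\alpha>0$. The paper's equality argument instead goes through the equality case of Lemma~\ref{lemma:: rowsumsBound}, obtaining that $G$ is regular or bidegreed with degrees $\delta$ and $n-1$, and then rules out the bidegreed case using the strictness of the step $\alpha d(v)^2-\alpha(\delta-1)d(v)\le\alpha\Delta^2-\alpha(\delta-1)\delta$.
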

\begin{proof}
Let $M$ be any matrix associated to a graph $G$ and $S_v(M))$ the sum of the row of $M$ corresponding to the vertex $v$. It is easy to see that 
\begin{align*}
    S_v(A_\alpha^2(G) - (\delta-1)A_\alpha(G)) &= S_v(A_\alpha^2(G)) - (\delta-1)S_v(A_\alpha(G)).
\end{align*}
From \cite{VN17}, we have
\begin{equation*}
    S_v(A_\alpha^2(G)) = \alpha S_v(D^2(G)) + (1-\alpha)S_v(A^2(G)) 
\end{equation*}
and therefore
{\small
\begin{align*}
    S_v(A_\alpha^2(G) - (\delta-1)A_\alpha(G)) &= \alpha S_v(D^2(G)) + (1-\alpha)S_v(A^2(G)) -\alpha(\delta-1)S_v(D(G)) - (1-\alpha)(1-\delta)S_v(A(G))\\
    &= \alpha S_v(D^2(G)) -\alpha(\delta-1)S_v(D(G)) +  (1-\alpha)S_v(A^2(G) - (1-\delta)A(G)).
\end{align*}
}
From Lemma \ref{lemma:: rowsumsBound}, we have
{\small
\begin{align} \label{ineq::rowsum}
   S_v(A_\alpha^2(G) - (\delta-1)A_\alpha(G)) &\leq \alpha S_v(D^2(G))-\alpha(\delta-1)S_v(D(G)) +  (1-\alpha)(2m - \delta(n-1))
\end{align}
}
The inequality (\ref{ineq::rowsum}), holds for every vertex $v \in V(G)$ and for $\alpha \in [0,1]$. From Lemma \ref{lemma::rowsums} we have that
{\small
\begin{align}
    \lambda_1^2(A_\alpha(G)) - (\delta-1)\lambda_1(A_\alpha(G)) & \leq \alpha d^2(v) - \alpha(\delta-1)d(v) + (1-\alpha)(2m - \delta(n-1)) \nonumber\\
    &\leq \alpha \Delta^2 - \alpha(\delta-1)\delta + (1-\alpha)(2m - \delta(n-1)) \label{eq::eqBound}
\end{align}
}
Then, solving the quadratic inequality we obtain
\begin{equation*}
    \lambda_1(A_\alpha(G)) \leq \dfrac{1}{2} \left( \delta -1 + \sqrt{(\delta - 1)^2 + 4(\alpha \Delta -\alpha(\delta-1)\delta + (1-\alpha)(2m - \delta(n-1))} \right).
\end{equation*}

Now, suppose that $G$ is a $r$-regular graph. So  $\delta = \Delta = r$ and $m = \dfrac{nr}{2}$. So, $$\dfrac{1}{2} \left( \delta -1 + \sqrt{(\delta - 1)^2 + 4(\alpha \Delta -\alpha(\delta-1)\delta + (1-\alpha)(2m - \delta(n-1))} \right) = r = \lambda_1(A_\alpha(G)).$$ Now, suppose the equality holds. Then, all inequalities in the above argument must be equalities. From Lemma \ref{lemma:: rowsumsBound}

$$\sum_{u \nsim v, u \neq v}d(u) = (n-d(v)-1)\delta,$$

for all $v \in V(G)$. Hence either $d(v) = n-1$ or $d(u)=\delta$, for all $u \in V(G)$ and $u \nsim v$, which implies that either $G$ is a regular graph or $G$ is a bidegreed graph in which each vertex is of degree either $\delta$ or $n-1$. As, the second one can not occurs because of inequality (\ref{eq::eqBound}) we get the result.
\end{proof}

\begin{example}
Let $G \cong K_{1,6}$  as in Figure \ref{fig::star}. The Table \ref{tab:table2} presents a comparison of the upper bounds (\ref{eq::UpperBound1}) and (\ref{eq::UpperBound2}) and concludes that they are incomparable.
\begin{figure}[h]%
    \centering    \includegraphics[width=0.3\textwidth]{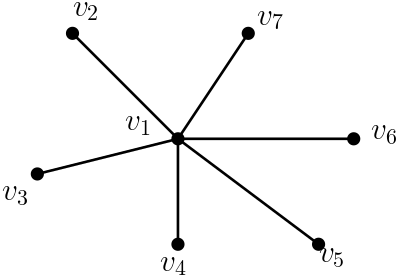}
    \centering
    \caption{Graph $G$.}
    \label{fig::star}
\end{figure}
\begin{table}[!h]
    \centering
    \resizebox{16cm}{!}
    {
    \begin{tabular}{|c|c|c|c|c|c|c|c|c|c|c|}
    \hline
    & $\alpha = 0.0$ & $\alpha = 0.1$ & $\alpha = 0.2$ & $\alpha = 0.3$ & $\alpha = 0.4$ & $\alpha = 0.5$ & $\alpha = 0.6$ & $\alpha = 0.7$ & $\alpha = 0.8$ & $\alpha = 0.9$ \\  
    \hline
    $\lambda_1(A_\alpha(G))$ & 2.44949 &  2.56867 & 2.72237 & 2.9215 & 3.17764 & 3.5 & 3.89165 & 4.34803 & 4.85913 & 5.41329\\
    \hline
    (\ref{eq::UpperBound1}) & 3.4641 & 3.18434 & 3.05941 & 3.10805 & 3.32265 & 3.67423 & 4.12795 & 4.65403 & 5.23068 & 5.84294\\ 
    \hline
    (\ref{eq::UpperBound2}) & 2.44949 & 3.0 & 3.4641 & 3.87298 & 4.24264 & 4.58258 & 4.89898 & 5.19615 & 5.47723 & 5.74456\\
    \hline 
    \end{tabular}
    }
    \caption{Table with upper bounds of $\lambda_1(A_\alpha(G))$ with different $\alpha$ values.}
    \label{tab:table2}
\end{table}
\end{example}

\begin{rmk}
    From Theorem \ref{theo::UpperBound2_SpectralRadius} we have that the equality in (\ref{eq::UpperBound2}) occurs if and only if $G$ is regular. Taking this extremal graph and applying its information in (\ref{eq::UpperBound1}) we obtain
    {\small
    \begin{equation*}
        \alpha^2 (2mn -n(n-1)\delta + 2m(\delta-1)) + 2m(1-\alpha)^2 + \delta(\alpha^2\delta+(1-\alpha)^2)(\Delta - n + 1)  = \alpha^2(r^3 + 2r^2 +r)+ (1 - 2\alpha)(r^2 +r) \geq r^2.
    \end{equation*}}
    Then, the bounds obtained in (\ref{eq::UpperBound1}) is bigger than the bound obtained in (\ref{eq::UpperBound2}) for regular graphs.
\end{rmk}

\subsection{Bounds for \texorpdfstring{$\mathbf{\lambda_2(A_\alpha(G))}$}{lambda2}} \label{subsection::lambda2}

\begin{proposition}
    Let $G$ be a $r$-regular graph with $n$ vertices and independence number $\gamma(G)$ and $\alpha \in [0,1]$. Then,
    \begin{equation}
        \lambda_2(A_\alpha(G)) \geq \alpha r + (1-\alpha)\left(-1 + \dfrac{2(n-1-r)^{\gamma(G)}}{\gamma(G) n^{\gamma(G) - 1}}\right).
    \end{equation}
    Equality holds if $G \cong K_n$.
\end{proposition}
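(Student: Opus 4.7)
The proof is essentially a direct combination of two ingredients already collected in the preliminaries. The plan is to exploit the fact that for an $r$-regular graph $G$ the spectra of $A_\alpha(G)$ and $A(G)$ are related by an affine transformation, and then plug in the known lower bound for $\lambda_2(A(G))$.

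First, since $G$ is $r$-regular, I would invoke Lemma~\ref{lemma::eigeneq_RegularGraphs} with $k=2$, which gives the exact identity
\begin{equation*}
\lambda_2(A_\alpha(G)) = \alpha r + (1-\alpha)\lambda_2(A(G)).
\end{equation*}
Next, Theorem~\ref{theo::LowerBound_SecondLargest_adjacency} applies verbatim (its hypotheses are exactly that $G$ is $r$-regular of order $n$ with independence number $\gamma(G)$) and yields
\begin{equation*}
\lambda_2(A(G)) \geq -1 + \dfrac{2(n-1-r)^{\gamma(G)}}{\gamma(G)\, n^{\gamma(G)-1}}.
\end{equation*}
Since $1-\alpha \geq 0$, multiplying this inequality by $1-\alpha$ and adding $\alpha r$ preserves the direction of the inequality, and the claimed bound for $\lambda_2(A_\alpha(G))$ follows immediately.

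For the equality statement when $G \cong K_n$, I would verify both sides separately. By Proposition~\ref{prop::complete_graph_spectrum}, $\lambda_2(A_\alpha(K_n)) = \alpha n - 1$. On the right-hand side of the claimed bound, $K_n$ is $(n-1)$-regular with $\gamma(K_n) = 1$, so $n-1-r = 0$ and the fraction collapses to $0$, giving $\alpha(n-1) + (1-\alpha)(-1) = \alpha n - 1$. The two expressions agree, establishing the equality case.

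There is no real obstacle here: the work was done in Lemma~\ref{lemma::eigeneq_RegularGraphs} and Theorem~\ref{theo::LowerBound_SecondLargest_adjacency}. The only small subtlety is remembering that the monotonicity step requires $1-\alpha \geq 0$, which is guaranteed by the hypothesis $\alpha \in [0,1]$; if one wanted the statement for $\alpha > 1$ or $\alpha < 0$ the direction of the inequality could flip, but this does not arise.
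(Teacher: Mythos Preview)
Your proof is correct and follows essentially the same two-step structure as the paper: obtain $\lambda_2(A_\alpha(G)) \geq \alpha r + (1-\alpha)\lambda_2(A(G))$, then apply Theorem~\ref{theo::LowerBound_SecondLargest_adjacency}, and finally check the $K_n$ case via Proposition~\ref{prop::complete_graph_spectrum}. The only difference lies in the first step: the paper invokes Weyl's inequality (Theorem~\ref{theo::weyl}) to get $\lambda_2(A_\alpha(G)) \geq \alpha\lambda_n(D(G)) + (1-\alpha)\lambda_2(A(G)) = \alpha r + (1-\alpha)\lambda_2(A(G))$, whereas you go directly through the exact identity of Lemma~\ref{lemma::eigeneq_RegularGraphs}. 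Your route is in fact tighter and more natural here, since for a regular graph $D(G) = rI$ and the Weyl step is automatically an equality anyway; using the lemma avoids the appearance of introducing slack that is not really there.
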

\begin{proof}
    From Theorem \ref{theo::weyl} we know that
    $$\lambda_2(A_\alpha(G)) = \lambda_2(\alpha D(G) + (1-\alpha)A(G)) \geq \alpha \lambda_n(D(G)) + (1-\alpha)\lambda_2(A(G)) = \alpha r + (1-\alpha)\lambda_2(A(G))$$
    and from Theorem \ref{theo::LowerBound_SecondLargest_adjacency} the result follows.
    
    Now, suppose that $G \cong K_n$. Then $\gamma(G) = 1,$ $r = n-1$ and consequently  $$\alpha r + (1-\alpha)\left(-1 + \dfrac{2(n-1-r)^{\gamma(G)}}{\gamma(G) n^{\gamma(G) - 1}}\right) =  \alpha n -1$$ and from Proposition \ref{prop::complete_graph_spectrum} the equality holds.
\end{proof}

\begin{proposition} \label{theo::ineq_bipartite}
	Let $G$ be a connected bipartite graph with $n$ vertices, $\widetilde{G}$ the bipartite complement of $G$, $K_{p,q}$ ($p\leq q)$ the complete bipartite graph whose partitions are the same as those of $G$, and $\alpha \in [0,1]$. Then,
\begin{equation*}
    p\alpha + \lambda_{j+1}(A_\alpha(\widetilde{G})) \leq \lambda_j(A_\alpha(G)) \leq p\alpha + \lambda_{j-1}(A_\alpha(\widetilde{G}))
\end{equation*}
for $ 2 \leq j \leq n-1$.
\end{proposition}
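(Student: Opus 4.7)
The plan is to build on the matrix identity
\[
A_\alpha(G)+A_\alpha(\widetilde{G})=A_\alpha(K_{p,q}),
\]
which I would verify first: the edge sets of $G$ and $\widetilde{G}$ partition $E(K_{p,q})$, so $A(G)+A(\widetilde{G})=A(K_{p,q})$, and the degrees satisfy $D(G)+D(\widetilde{G})=D(K_{p,q})$ since a vertex's degrees in $G$ and $\widetilde{G}$ are complementary within its part of $K_{p,q}$. Linearity of $A_\alpha$ in $(A,D)$ then closes the identity.

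The structural input is Proposition \ref{prop::complete_bipartite_spectrum}: the spectrum of $A_\alpha(K_{p,q})$ contains a flat plateau at the value $p\alpha$ with multiplicity $q-1$, occupying positions $p+1,\dots,n-1$ in the non-increasing ordering. This is precisely the source of the $p\alpha$ shift on the right-hand sides of the proposition, and it also explains the restriction $2\le j\le n-1$: only when the Weyl index falling on $A_\alpha(K_{p,q})$ lands inside this plateau do the inequalities reduce cleanly.

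With these in hand I would apply Weyl's inequality (Theorem \ref{theo::weyl}) to the decomposition $A_\alpha(K_{p,q})=A_\alpha(G)+A_\alpha(\widetilde{G})$ in both directions. For the lower bound $p\alpha+\lambda_{j+1}(A_\alpha(\widetilde{G}))\le\lambda_j(A_\alpha(G))$, I would use the upper form $\lambda_{i+k-1}(A+B)\le\lambda_i(A)+\lambda_k(B)$ with $A=A_\alpha(G)$, $B=A_\alpha(\widetilde{G})$, taking $i=j$ and $k$ so that $i+k-1$ lies in $\{p+1,\dots,n-1\}$; this forces the left-hand side to equal $p\alpha$ and rearranges to a lower estimate on $\lambda_j(A_\alpha(G))$ in terms of $\lambda_k(A_\alpha(\widetilde{G}))$. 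For the upper bound $\lambda_j(A_\alpha(G))\le p\alpha+\lambda_{j-1}(A_\alpha(\widetilde{G}))$, I would use the lower Weyl form $\lambda_{i+k-n}(A+B)\ge\lambda_i(A)+\lambda_k(B)$ with $i=j$ and $k$ chosen so that $i+k-n$ again falls in the plateau $\{p+1,\dots,n-1\}$.

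The main obstacle is index bookkeeping: the Weyl index $k$ must simultaneously place the $A_\alpha(K_{p,q})$-index in the plateau and yield $\lambda_{j\pm 1}(A_\alpha(\widetilde{G}))$ on the right. Matching both constraints forces $k$ to differ from $j\pm 1$ by a specific amount, and reconciling the result with the stated bound uses the bipartite structure of $\widetilde{G}$ (Theorem \ref{theo::bipartite_spectrum}) to rewrite the resulting spectral index via the reflection $\lambda_k\leftrightarrow -\lambda_{n-k+1}$. The edge cases $j=1$ and $j=n$ are excluded because the corresponding Weyl indices cannot be kept inside the $p\alpha$-plateau on both sides simultaneously, which is exactly the range restriction $2\le j\le n-1$ appearing in the statement.
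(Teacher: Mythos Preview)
Your proposal follows essentially the same approach as the paper: establish $A_\alpha(G)+A_\alpha(\widetilde{G})=A_\alpha(K_{p,q})$, apply Weyl's inequalities (Theorem \ref{theo::weyl}) to this sum, arrange for the $K_{p,q}$-index to hit the eigenvalue $p\alpha$ supplied by Proposition \ref{prop::complete_bipartite_spectrum}, and then convert the resulting $\widetilde{G}$-index via the bipartite spectral reflection of Theorem \ref{theo::bipartite_spectrum}. The paper executes this by fixing the $K_{p,q}$-index at $i=2$ rather than describing a plateau of indices, but the mechanism and the ingredients are the same.
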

\begin{proof}
It is easy to see that $A_\alpha(G) +A_\alpha(\widetilde{G}) = A_\alpha(K_{p,q})$. From Theorem \ref{theo::weyl} follows that
\begin{equation} \label{eq1::theo_ineq_bipartite}
	\lambda_j(A_\alpha(G)) + \lambda_{i-j+n}(A_\alpha(\widetilde{G})) \leq \lambda_i(A_\alpha(K_{p,q})), \ \ 1 \leq i \leq j \leq n
\end{equation}
and
\begin{equation}\label{eq2::theo_ineq_bipartite}
    \lambda_i(A_\alpha(K_{p,q})) \leq \lambda_j(A_\alpha(G)) + \lambda_{i-j+1}(A_\alpha(\widetilde{G})),\ \ 1 \leq j \leq i \leq n.
\end{equation}
From inequalities (\ref{eq1::theo_ineq_bipartite}) and (\ref{eq2::theo_ineq_bipartite}) we have 
\begin{equation} \label{eq3::theo_ineq_bipartite}
    \lambda_j(A_\alpha(G)) \leq \lambda_i(A_\alpha(K_{p,q})) - \lambda_{i-j+n}(A_\alpha(\widetilde{G}))
\end{equation}
and
\begin{equation} \label{eq4::theo_ineq_bipartite}
    \lambda_i(A_\alpha(K_{p,q})) - \lambda_{i-j+1}(A_\alpha(\widetilde{G})) \leq \lambda_{j}(A_\alpha(G))
\end{equation}
From Proposition \ref{prop::complete_bipartite_spectrum} we obtain $\lambda_2(A_\alpha(K_{p,q})) = \alpha p$ and taking $i=2$ in the inequalities (\ref{eq3::theo_ineq_bipartite}) and (\ref{eq4::theo_ineq_bipartite}) we have
\begin{equation*}
    \lambda_j(A_\alpha(G)) \leq p\alpha - \lambda_{2-j+n}(A_\alpha(\widetilde{G})).
\end{equation*}
and
\begin{equation*}
   p\alpha - \lambda_{3-j}(A_\alpha(\widetilde{G})) \leq \lambda_j(A_\alpha(G))
   \end{equation*}

Moreover, from Theorem \ref{theo::bipartite_spectrum} follows that $\lambda_{2-j+n}(A_\alpha(\widetilde{G})) = -\lambda_{j-1}(A_\alpha(\widetilde{G}))$ and $\lambda_{3-j}(A_\alpha(\widetilde{G})) = - \lambda_{j+1}(A_\alpha(\widetilde{G}))$. Then

\begin{equation*}
    p\alpha  + \lambda_{j+1}(A_\alpha(\widetilde{G}))  \leq \lambda_j(A_\alpha(G)) \leq p\alpha + \lambda_{j-1}(A_\alpha(\widetilde{G})),\ \  2 \leq j \leq n-1.
\end{equation*}
\end{proof}

\begin{corollary} \label{cor::ineq_bipartite}
Let $G$ be a connected bipartite and $r$-regular graph with $n$ vertices. If  $\widetilde{G}$ is the bipartite complement of $G$ and $\alpha \in [0,1]$ then
\begin{equation} \label{eq::secondlargest_upper1}
    \lambda_2(A_\alpha(G)) \leq \dfrac{n}{2}(\alpha + 1) -r.
\end{equation}
\end{corollary}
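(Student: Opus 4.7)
The plan is to derive this corollary as a direct specialization of Proposition \ref{theo::ineq_bipartite}, combined with the fact that the bipartite complement of a regular bipartite graph is itself regular, whose $A_\alpha$-spectral radius we already know explicitly.

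First I would observe that since $G$ is bipartite and $r$-regular, its two color classes must have equal size. Indeed, if $|V_1| = n_1$ and $|V_2| = n_2$, counting edges from each side gives $n_1 r = n_2 r$, hence $n_1 = n_2 = n/2$. Therefore, in the notation of Proposition \ref{theo::ineq_bipartite}, we have $p = q = n/2$.

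Next I would apply Proposition \ref{theo::ineq_bipartite} with $j = 2$, which yields the upper bound
\begin{equation*}
\lambda_2(A_\alpha(G)) \leq p\alpha + \lambda_1(A_\alpha(\widetilde{G})) = \frac{n}{2}\alpha + \lambda_1(A_\alpha(\widetilde{G})).
\end{equation*}
It now remains only to evaluate $\lambda_1(A_\alpha(\widetilde{G}))$. Since $G$ is $r$-regular with each color class of size $n/2$, each vertex of $\widetilde{G}$ is joined to exactly $n/2 - r$ vertices in the opposite class, so $\widetilde{G}$ is $(n/2 - r)$-regular. By the ``in particular'' clause of Lemma \ref{lemma::eigeneq_RegularGraphs}, any regular graph $H$ of degree $s$ satisfies $\lambda_1(A_\alpha(H)) = s$ for every $\alpha \in [0,1]$, so $\lambda_1(A_\alpha(\widetilde{G})) = n/2 - r$.

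Substituting gives
\begin{equation*}
\lambda_2(A_\alpha(G)) \leq \frac{n}{2}\alpha + \frac{n}{2} - r = \frac{n}{2}(\alpha + 1) - r,
\end{equation*}
as required. There is no real obstacle here: the whole content is packaged in Proposition \ref{theo::ineq_bipartite} and Lemma \ref{lemma::eigeneq_RegularGraphs}; the only small points to be careful about are justifying $p = n/2$ from regularity and observing that $\widetilde{G}$ remains bipartite regular so that the ``$\lambda_1$ equals degree'' identity applies.
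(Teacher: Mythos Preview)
Your proof is correct and follows essentially the same route as the paper: apply Proposition~\ref{theo::ineq_bipartite} with $j=2$, use regularity to get $p=n/2$, and then use that $\widetilde{G}$ is $(n/2-r)$-regular so $\lambda_1(A_\alpha(\widetilde{G}))=n/2-r$. Your version is in fact slightly more careful, since you explicitly justify $p=n/2$ via the edge-count $n_1 r=n_2 r$ and invoke Lemma~\ref{lemma::eigeneq_RegularGraphs} for the spectral radius of the regular bipartite complement, whereas the paper states these facts without citation.
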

\begin{proof}
Let $G$ be a connected bipartite and $r$-regular graph such that $V = V_1 \cup V_2$, where $\vert V_1 \vert = p$ and $\vert V_2 \vert = q$ ($p \leq q$). From Proposition \ref{theo::ineq_bipartite} we knows that
\begin{equation*}
	\lambda_j(A_\alpha(G)) \leq p\alpha + \lambda_{j-1}(A_\alpha(\widetilde{G})),\ \  2 \leq j \leq n-1.
\end{equation*}
Taking $j=2$ we have
\begin{equation*}
	\lambda_2(A_\alpha(G)) \leq p\alpha + \lambda_{1}(A_\alpha(\widetilde{G}))
\end{equation*}
As $G$ is $r$-regular bipartite, it follows that $p = \dfrac{n}{2}$. Furthermore, we know that the graph $\widetilde{G}$ is also bipartite and $\left(\displaystyle \frac{n}{2}-r\right)$-regular which implies $\lambda_{1}(A_\alpha(\widetilde{G})) = \dfrac{n}{2} - r $, and consequently the result follows.
\end{proof}

The upper bound obtained by Corollary \ref{cor::ineq_bipartite} can be improved, as we can see in Proposition \ref{theo::adapt_alpha_matrix_bound_lambda2}.  

\begin{proposition} \label{theo::adapt_alpha_matrix_bound_lambda2}
Let $G$ be a $r$-regular bipartite and connected graph with $n$ vertices. Then
\begin{equation}\label{eq::secondlargest_upper2}
    \displaystyle \lambda_2(A_\alpha(G)) \leq \alpha\left(2r - \frac{n}{2}\right) + \frac{n}{2} - r.
\end{equation}

\end{proposition}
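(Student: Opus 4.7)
The plan is to exploit the fact that $G$ is regular, which lets us convert the problem on $A_\alpha(G)$ directly into one on the adjacency matrix $A(G)$, where a sharper spectral bound is already available. Specifically, I would invoke the linear correspondence from Lemma \ref{lemma::eigeneq_RegularGraphs}, which gives
\[
\lambda_2(A_\alpha(G)) = \alpha r + (1-\alpha)\lambda_2(A(G)).
\]
This is the key step: the weaker bound in Corollary \ref{cor::ineq_bipartite} went through the bipartite complement and Weyl's inequality, which loses information in the regular case because it ignores the rigid spectral shift available to regular graphs.

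Next, since $G$ is $r$-regular, bipartite and connected, I would apply Theorem \ref{theo::lambda2_KS_bound} directly to obtain $\lambda_2(A(G)) \leq \tfrac{n}{2} - r$. Substituting this into the previous identity yields
\[
\lambda_2(A_\alpha(G)) \leq \alpha r + (1-\alpha)\left(\tfrac{n}{2}-r\right).
\]

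Finally I would expand and regroup the right-hand side. Writing
\[
\alpha r + (1-\alpha)\left(\tfrac{n}{2}-r\right) = \alpha r + \tfrac{n}{2} - r - \tfrac{\alpha n}{2} + \alpha r = \alpha\!\left(2r - \tfrac{n}{2}\right) + \tfrac{n}{2} - r,
\]
which is precisely the stated inequality. There is essentially no obstacle here: once one notices the combination of Lemma \ref{lemma::eigeneq_RegularGraphs} and Theorem \ref{theo::lambda2_KS_bound}, the argument is a two-line substitution and an algebraic rearrangement. The only thing worth checking afterwards (to justify calling this an improvement over \eqref{eq::secondlargest_upper1}) is that the difference between the two bounds equals $\alpha(n-2r)\geq 0$, which holds because a bipartite graph satisfies $r \leq n/2$.
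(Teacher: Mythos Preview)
Your argument is correct, and it is also shorter and more direct than the paper's. You immediately combine Lemma~\ref{lemma::eigeneq_RegularGraphs} with Theorem~\ref{theo::lambda2_KS_bound}, so the inequality drops out after a one-line substitution and the algebraic regrouping you wrote. The paper, by contrast, never invokes Theorem~\ref{theo::lambda2_KS_bound} directly. Instead it works through the bipartite complement $\widetilde{G}$: using Theorem~\ref{theo::equality_roots_polynomial} to match $\lambda_k(A(G))=\lambda_k(A(\widetilde{G}))$ for $2\le k\le n-1$, applying Lemma~\ref{lemma::eigeneq_RegularGraphs} to both $G$ and $\widetilde{G}$, and subtracting to obtain the exact identity
\[
\lambda_k(A_\alpha(G)) \;=\; \alpha\!\left(2r-\tfrac{n}{2}\right) + \lambda_k(A_\alpha(\widetilde{G})), \qquad 2\le k\le n-1,
\]
before bounding $\lambda_2(A_\alpha(\widetilde{G}))\le \lambda_1(A_\alpha(\widetilde{G}))=\tfrac{n}{2}-r$. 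What this buys the paper is an intermediate identity valid for every intermediate index $k$, not just $k=2$; your route is more economical for the stated proposition but does not produce that extra structural information. Your closing remark that the improvement over \eqref{eq::secondlargest_upper1} equals $\alpha(n-2r)\ge 0$ is also correct, though note that for a regular bipartite graph one actually has $r\le n/2$ with equality only when $G=K_{n/2,n/2}$.
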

\begin{proof}
Since $G$ is $r$-regular and bipartite from Theorems \ref{theo::bipartite_spectrum} and Proposition \ref{prop::regular_graph_spectrum} follows that $\lambda_1(A(G)) = r$ and $\lambda_n(A(G)) = -r$ are eigenvalues of $A(G)$. Moreover, we have that $\widetilde{G}$ is $\left(\displaystyle \frac{n}{2}-r\right)$-regular and bipartite, so $\displaystyle \lambda_1(A(\widetilde{G})) = \frac{n}{2}-r$ and $\displaystyle \lambda_n(A(\widetilde{G})) = -\frac{n}{2}+r$ are eigenvalues of $A(\widetilde{G})$. From Theorem \ref{theo::equality_roots_polynomial} and Lemma \ref{lemma::eigeneq_RegularGraphs} we have $\lambda_k(A(G)) = \lambda_k(A(\widetilde{G})), \text{ for } 2 \leq k \leq n-1$,

\begin{equation} \label{eq1::eigeneq_RegularBip}
\lambda_k(A_\alpha(G)) = \alpha r + (1-\alpha)\lambda_k(A(G)), \text{ for } 1 \leq k \leq n
\end{equation}
and
\begin{equation} \label{eq2::eigeneq_RegularBipComp}
\lambda_k(A_\alpha(\widetilde{G})) = \alpha \left(\displaystyle \frac{n}{2}-r\right) + (1-\alpha)\lambda_k(A(\widetilde{G})), \text{ for } 1 \leq k \leq n. 
\end{equation}

So, $\lambda_1(A_\alpha(G)) = r$,  $\lambda_1(A_\alpha(\widetilde{G})) = \displaystyle \frac{n}{2} - r$,  $\lambda_n(A_\alpha(G)) = r(2\alpha - 1)$ and $\lambda_n(A_\alpha(\widetilde{G})) = \left(\displaystyle \frac{n}{2} - r\right)(2\alpha - 1)$.

Subtracting the Equation (\ref{eq2::eigeneq_RegularBipComp}) from (\ref{eq1::eigeneq_RegularBip}) and using the relation between $\lambda_k(A_\alpha(G))$ and $\lambda_k(A_\alpha(\widetilde{G}))$ for $2 \leq k \leq n-1$ we get
\begin{equation}
	\lambda_k(A_\alpha(G)) = \alpha\left(2r - \displaystyle \frac{n}{2}\right) + \lambda_k(A_\alpha(\widetilde{G}))
\end{equation}
Taking $k = 2$ and considering $\lambda_1(A_\alpha(\widetilde{G})) \geq \lambda_2(A_\alpha(\widetilde{G}))$, we have that
\begin{equation}
	\lambda_2(A_\alpha(G)) \leq \alpha\left(2r - \displaystyle \frac{n}{2}\right) + \frac{n}{2} - r = \frac{n}{2}(1-\alpha) + r(2\alpha - 1)
\end{equation}
and the result follows.
\end{proof}

\begin{rmk}
    It is worth noting that if $G$ is a regular connected and bipartite graph and the regularity is $\dfrac{n}{2}$ then the upper bounds, shown in the equations (\ref{eq::secondlargest_upper1}) and (\ref{eq::secondlargest_upper2}), are equal.
\end{rmk}

\begin{proposition}
	Let $G$ be an $r$-regular graph of order $n$. Then $\lambda_1(A_\alpha(G)) + \lambda_2(A_\alpha(G)) \leq 2r\alpha + (1- \alpha)(n-2)$. Equality holds if, and only if, $G$ has a connected component that is a bipartite graph.
\end{proposition}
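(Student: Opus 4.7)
The plan is to reduce the claim about $A_\alpha$-eigenvalues to the already-available result Theorem \ref{theo::LargestEigenvaluesSum} about $A(G)$-eigenvalues via the linear correspondence that exists for regular graphs (Lemma \ref{lemma::eigeneq_RegularGraphs}). The main obstacle is really nothing but bookkeeping in the equality case; the inequality itself is essentially free once the correspondence is invoked.

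First, since $G$ is $r$-regular, Lemma \ref{lemma::eigeneq_RegularGraphs} applies to every index, so
$$\lambda_i(A_\alpha(G)) = \alpha r + (1-\alpha)\lambda_i(A(G)), \quad i = 1,\ldots,n.$$
Specializing to $i=1$ and $i=2$ and adding gives
$$\lambda_1(A_\alpha(G)) + \lambda_2(A_\alpha(G)) = 2\alpha r + (1-\alpha)\bigl(\lambda_1(A(G)) + \lambda_2(A(G))\bigr).$$
Because $1-\alpha \geq 0$, I can now invoke Theorem \ref{theo::LargestEigenvaluesSum}, which yields $\lambda_1(A(G)) + \lambda_2(A(G)) \leq n-2$ for any regular $G$. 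Substituting this bound gives exactly
$$\lambda_1(A_\alpha(G)) + \lambda_2(A_\alpha(G)) \leq 2\alpha r + (1-\alpha)(n-2),$$
which is the desired inequality.

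For the equality discussion, I would split on $\alpha$. If $\alpha = 1$, then $A_\alpha(G) = D(G) = rI$, so $\lambda_1 = \lambda_2 = r$ and the right-hand side also equals $2r$, so equality holds automatically. If $\alpha \in [0,1)$, then $1-\alpha > 0$, so the displayed identity shows that equality in the proposition is equivalent to equality in Theorem \ref{theo::LargestEigenvaluesSum}, i.e.\ to $\lambda_1(A(G)) + \lambda_2(A(G)) = n-2$. By that theorem, this occurs precisely when the complement of $G$ contains a bipartite component, which is the characterization obtained (up to the wording used in the proposition statement, which should be interpreted through Theorem \ref{theo::LargestEigenvaluesSum}).
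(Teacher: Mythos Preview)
Your proof is correct and follows essentially the same route as the paper: apply Lemma \ref{lemma::eigeneq_RegularGraphs} to rewrite $\lambda_1(A_\alpha(G))+\lambda_2(A_\alpha(G))$ as $2\alpha r+(1-\alpha)(\lambda_1(A(G))+\lambda_2(A(G)))$ and then invoke Theorem \ref{theo::LargestEigenvaluesSum}. Your treatment is in fact more careful than the paper's in two respects: you separate the degenerate case $\alpha=1$ (where equality is automatic and the characterization is vacuous), and you correctly flag that the equality condition coming from Theorem \ref{theo::LargestEigenvaluesSum} concerns the \emph{complement} $\overline{G}$ having a bipartite component, which does not match the wording of the proposition as stated.
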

\begin{proof}
Let $G$ be a $r$-regular graph of order $n$. From Theorem \ref{theo::LargestEigenvaluesSum} and Lemma \ref{lemma::eigeneq_RegularGraphs} we have
\begin{equation} 
  \label{eq1::eigeneq_regular}
	\lambda_1(A_\alpha(G)) + \lambda_2(A_\alpha(G)) = 2\alpha r + (1-\alpha)(\lambda_1(A(G)) + \lambda_2(A(G))) \leq 2\alpha r + (1-\alpha)(n-2)
\end{equation}
with equality occurs if and only if $G$ has a connected component that is a bipartite graph and the result follows.
\end{proof}

\begin{proposition}
	Let $G$ be a $r$-regular bipartite graph of order $n$ and $\alpha \in [0,1)$. Then $\lambda_1(A_\alpha(G)) + \lambda_2(A_\alpha(G)) \leq \displaystyle 2r\alpha + (1-\alpha)\frac{n}{2}$. Equality occurs if, and only if, $\widetilde{G}$ is disconnected.
\end{proposition}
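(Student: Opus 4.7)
The plan is to reduce the statement directly to the adjacency-matrix version already available in the excerpt, namely Corollary \ref{cor::KS2013_corollary31}, by using the linear correspondence between $A_\alpha$- and $A$-eigenvalues for regular graphs given in Lemma \ref{lemma::eigeneq_RegularGraphs}.

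First I would apply Lemma \ref{lemma::eigeneq_RegularGraphs} to the $r$-regular graph $G$ to write
\[
\lambda_1(A_\alpha(G)) + \lambda_2(A_\alpha(G)) = 2\alpha r + (1-\alpha)\bigl(\lambda_1(A(G)) + \lambda_2(A(G))\bigr).
\]
Next, since $G$ is $r$-regular and bipartite, Corollary \ref{cor::KS2013_corollary31} gives $\lambda_1(A(G)) + \lambda_2(A(G)) \le n/2$, and because $1-\alpha \ge 0$ this yields
\[
\lambda_1(A_\alpha(G)) + \lambda_2(A_\alpha(G)) \le 2\alpha r + (1-\alpha)\frac{n}{2},
\]
which is the required inequality.

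For the equality case, I use the hypothesis $\alpha \in [0,1)$, which makes $1-\alpha$ strictly positive. Hence equality in the displayed bound holds if and only if equality holds in $\lambda_1(A(G)) + \lambda_2(A(G)) \le n/2$, and by Corollary \ref{cor::KS2013_corollary31} this is equivalent to $\widetilde{G}$ being disconnected. This completes the characterization. There is no real obstacle here, since both ingredients are already recorded; the only point requiring mild care is the strict positivity of $1-\alpha$, which is exactly why the statement restricts $\alpha$ to $[0,1)$ rather than $[0,1]$.
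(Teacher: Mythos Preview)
Your proof is correct and uses the same two ingredients as the paper, Lemma \ref{lemma::eigeneq_RegularGraphs} and Corollary \ref{cor::KS2013_corollary31}. The only difference is that for the inequality the paper first invokes Proposition \ref{theo::adapt_alpha_matrix_bound_lambda2} to bound $\lambda_2(A_\alpha(G))$ and then adds $\lambda_1(A_\alpha(G))=r$, whereas you apply the linear correspondence to the sum $\lambda_1+\lambda_2$ and appeal to Corollary \ref{cor::KS2013_corollary31} immediately; your route is slightly more direct and, incidentally, avoids the connectivity hypothesis built into Proposition \ref{theo::adapt_alpha_matrix_bound_lambda2}, which the present statement does not assume.
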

\begin{proof}
Let $G$ be a $r$-regular bipartite graph. From Lemma \ref{lemma::eigeneq_RegularGraphs} we have $\lambda_1(A_\alpha(G)) = r$.
Moreover, from Proposition \ref{theo::adapt_alpha_matrix_bound_lambda2} we have
$$\displaystyle \lambda_1(A_\alpha(G)) + \lambda_2(A_\alpha(G)) \leq r + \alpha\left(2r - \frac{n}{2}\right) + \frac{n}{2} - r = \alpha\left(2r - \frac{n}{2}\right) + \frac{n}{2} = 2r\alpha + (1- \alpha)\frac{n}{2}.$$
Now, suppose that $\displaystyle \lambda_1(A_\alpha(G)) + \lambda_2(A_\alpha(G)) = 2r\alpha + (1- \alpha)\frac{n}{2}$. As $G$ is $r$-regular, we have that
$$ \lambda_1(A_\alpha(G)) = \alpha r + (1-\alpha)\lambda_1(A(G)) \text{ and } \lambda_2(A_\alpha(G)) = \alpha r + (1-\alpha)\lambda_2(A(G))$$
and consequently
$$\displaystyle \lambda_1(A_\alpha(G)) + \lambda_2(A_\alpha(G))= 2\alpha r + (1-\alpha)(\lambda_1(A(G)) + \lambda_2(A(G))).$$
So,
$$\displaystyle \lambda_1(A(G)) + \lambda_2(A(G)) = \frac{n}{2}$$
and from Corollary \ref{cor::KS2013_corollary31} follows that $\widetilde{G}$ is disconnected.

Now, suppose that the $\widetilde{G}$ is disconnected. From Corollary \ref{cor::KS2013_corollary31},  $\displaystyle \lambda_1(A(G)) + \lambda_2(A(G)) = \frac{n}{2}.$  As $G$ is $r$-regular we have
$$\displaystyle \lambda_1(A_\alpha(G)) + \lambda_2(A_\alpha(G))= 2\alpha r + (1-\alpha)(\lambda_1(A(G)) + \lambda_2(A(G))) = 2\alpha r + (1-\alpha)\frac{n}{2}.$$
and the result follows.
\end{proof}

\subsection{Bounds for \texorpdfstring{$\mathbf{\lambda_n(A_\alpha(G))}$}{lambdan}} \label{subsection::lambdan}

\begin{proposition}
Let $G$ be a graph with n vertices, minimum degree $\delta \neq 0$, maximum degree $\Delta$ and independence number $\gamma(G)$. Then
\begin{equation} \label{eq::BoundLambda_n}
    \lambda_n(A_\alpha(G)) \leq \alpha \Delta + (1-\alpha) \dfrac{\gamma(G) \delta^2}{\lambda_1(A(G))(\gamma(G) - n)}
\end{equation}
In particular, if $G$ is an $r$-regular graph we have
\begin{equation} \label{eq::BoundLambda_nRegular}
\lambda_n(A_\alpha(G)) \leq \alpha r + (1-\alpha) \dfrac{\gamma(G) r}{\gamma(G) - n},
\end{equation}
whose equality holds if $G \cong K_n$ or,  when $n$  is even, $G \cong \displaystyle \bigcup_{i=1}^{\frac{n}{2}}K_2$, or $G \cong K_{\frac{n}{2}} \cup K_{\frac{n}{2}}$.
\end{proposition}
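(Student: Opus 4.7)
The plan is to peel off the two ingredients of $A_\alpha(G)=\alpha D(G)+(1-\alpha)A(G)$ using the Weyl-type inequality from Corollary \ref{cor::weyl}, and then plug in a known upper bound on $\lambda_n(A(G))$. Specifically, applying the upper half of \eqref{weyl_inequalities} with $i=n$ to the Hermitian matrices $\alpha D(G)$ and $(1-\alpha)A(G)$ gives
\begin{equation*}
\lambda_n(A_\alpha(G))\;\leq\;\lambda_1(\alpha D(G))+\lambda_n((1-\alpha)A(G))\;=\;\alpha\Delta+(1-\alpha)\,\lambda_n(A(G)),
\end{equation*}
since $\alpha,1-\alpha\geq 0$. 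For the general bound \eqref{eq::BoundLambda_n} I would then substitute the Haemers-type inequality from Theorem \ref{lambda_nGeneralBound}. Note that $\gamma(G)-n<0$ (otherwise $G$ is edgeless, contradicting $\delta\neq 0$), so the quantity $\frac{\gamma(G)\delta^2}{\lambda_1(A(G))(\gamma(G)-n)}$ is nonpositive and multiplying by $(1-\alpha)\geq 0$ preserves the inequality. For the regular refinement \eqref{eq::BoundLambda_nRegular} the same Weyl argument reduces to $\alpha r+(1-\alpha)\lambda_n(A(G))$ and the Hoffman bound of Theorem \ref{theo::HoffmanBound} finishes the estimate.

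The delicate part is the equality discussion in the regular case, which is really the only place genuine verification is needed. The idea is to check each of the three candidate graphs directly by computing the right-hand side of \eqref{eq::BoundLambda_nRegular} and comparing it with the smallest $A_\alpha$-eigenvalue, which can be read off from the structural results already stated. For $G\cong K_n$: $\gamma=1$ and $r=n-1$, so the right-hand side equals $\alpha(n-1)+(1-\alpha)(-1)=\alpha n-1$, which matches $\lambda_n(A_\alpha(K_n))=\alpha n-1$ given by Proposition \ref{prop::complete_graph_spectrum}. For $G\cong\bigcup_{i=1}^{n/2}K_2$: $r=1$, $\gamma=n/2$, and the right-hand side collapses to $2\alpha-1$; on the other hand each $K_2$-block contributes the eigenvalue $2\alpha-1$ of $A_\alpha(K_2)$ (the disjoint union simply takes the union of spectra, so the block structure of $A_\alpha$ gives $\lambda_n=2\alpha-1$). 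For $G\cong K_{n/2}\cup K_{n/2}$: $r=n/2-1$, $\gamma=2$, and the right-hand side becomes $\alpha n/2-1$, which equals $\lambda_n(A_\alpha(K_{n/2}))=\alpha(n/2)-1$ from Proposition \ref{prop::complete_graph_spectrum}.

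The main obstacle I expect is not the inequality itself (Weyl plus Haemers/Hoffman does the whole job mechanically) but rather being careful with the sign of $\gamma(G)-n$ when passing the upper bound on $\lambda_n(A(G))$ through the multiplication by $(1-\alpha)$, and with the fact that Corollary \ref{cor::weyl} must be applied with the right indices (using $\lambda_n(X+Y)\leq \lambda_1(X)+\lambda_n(Y)$ rather than the seemingly natural $\lambda_n(X)+\lambda_n(Y)$, which would not give an upper bound). Because the proposition does not claim the equality characterisation is necessary, only sufficient, no converse argument is required, and the proof is therefore essentially a two-line application of Weyl followed by the verifications above.
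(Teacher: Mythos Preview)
Your proposal is correct and follows essentially the same route as the paper: apply Corollary~\ref{cor::weyl} to split $A_\alpha(G)=\alpha D(G)+(1-\alpha)A(G)$ and obtain $\lambda_n(A_\alpha(G))\le \alpha\Delta+(1-\alpha)\lambda_n(A(G))$, then invoke Theorem~\ref{lambda_nGeneralBound} (respectively Theorem~\ref{theo::HoffmanBound}) for the general (respectively regular) bound, and finally verify the three equality cases by direct computation using Proposition~\ref{prop::complete_graph_spectrum} and the block structure of disjoint unions. Your additional remarks on the sign of $\gamma(G)-n$ and on choosing the correct Weyl index are accurate and slightly more careful than the paper's own write-up.
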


\begin{proof}
From Corollary \ref{cor::weyl} and Theorem \ref{lambda_nGeneralBound} we obtain the bound in (\ref{eq::BoundLambda_n}) and from Corollary \ref{cor::weyl} and Theorem \ref{theo::HoffmanBound} we have the bound in (\ref{eq::BoundLambda_nRegular}). Initially suppose that $G \cong K_n$. Then $\gamma(G) = 1$ and from Proposition \ref{prop::complete_graph_spectrum}, $\lambda_n(A_\alpha(G)) = \alpha n -1$. So, $\alpha r + (1-\alpha) \dfrac{\gamma(G) r}{\gamma(G) - n} = \alpha n - 1.$

Now suppose that $n$ is even and $G \cong \displaystyle \bigcup_{i=1}^{\frac{n}{2}}K_2$. We know that $\gamma(G) = \dfrac{n}{2}$, $r = 1$, $\lambda_n(A_\alpha(G)) = 2 \alpha - 1$ and consequently $\alpha r + (1-\alpha) \dfrac{\gamma(G) r}{\gamma(G) - n} = 2\alpha - 1$. Finally, suppose that $n$ is even and  $G \cong K_{\frac{n}{2}} \cup K_{\frac{n}{2}}$. In this case, $\gamma(G) = 2$, $r = \dfrac{n}{2} - 1$ and $\lambda_n(A_\alpha(G)) = \dfrac{n}{2} \alpha - 1$. So $\alpha r + (1-\alpha) \dfrac{\gamma(G) r}{\gamma(G) - n} = \alpha \dfrac{n}{2} - 1$, and the result follows.
\end{proof}

\begin{proposition}
    Let $G$ be a triangle-free graph with n vertices and independence number $\gamma(G)$. Then
    \begin{equation*}
        \lambda_n(A_\alpha(G)) \leq \alpha \Delta + (1-\alpha) \dfrac{\lambda_1^2(A(G))}{\lambda_1(A(G)) - n}
    \end{equation*}
\end{proposition}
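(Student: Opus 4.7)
The plan is to decompose $A_\alpha(G)$ as a sum of two Hermitian matrices and combine Weyl's inequality (Theorem \ref{theo::weyl}) with the triangle-free bound for $\lambda_n(A(G))$ (Lemma \ref{lemma::lambda_nBoundTriangleFree}). This is essentially the same pattern already used earlier in Section \ref{subsection::lambdan} to derive inequality (\ref{eq::BoundLambda_n}) from Corollary \ref{cor::weyl} and Theorem \ref{lambda_nGeneralBound}, just with the triangle-free bound replacing Haemers' bound.

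Concretely, write $A_\alpha(G) = \alpha D(G) + (1-\alpha) A(G)$ and apply the upper Weyl inequality (\ref{weyl_upper_bound}) with $i=1$ and $j=n$, so that $i+j-1 = n$. This gives
$$\lambda_n(A_\alpha(G)) \leq \lambda_1(\alpha D(G)) + \lambda_n((1-\alpha)A(G)).$$
Since $\alpha \geq 0$ and $D(G)$ is diagonal with maximum entry $\Delta$, the first term is $\alpha\Delta$. Since $1-\alpha \geq 0$, scaling preserves the ordering of eigenvalues, so the second term is $(1-\alpha)\lambda_n(A(G))$. Hence
$$\lambda_n(A_\alpha(G)) \leq \alpha\Delta + (1-\alpha)\lambda_n(A(G)).$$

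Finally, because $G$ is triangle-free, Lemma \ref{lemma::lambda_nBoundTriangleFree} yields $\lambda_n(A(G)) \leq \lambda_1^2(A(G)) / (\lambda_1(A(G)) - n)$. Multiplying by the nonnegative factor $1-\alpha$ (which preserves the inequality direction regardless of the sign of the fraction) and adding $\alpha\Delta$ yields the claimed bound. There is no substantive obstacle here: the argument is a direct two-step combination of previously established results, and the only point that requires attention is verifying that $1-\alpha \geq 0$ so that the scaling does not reverse the inequality. Note also that the independence number $\gamma(G)$ appearing in the statement plays no role in the proof; it appears to be vestigial from the preceding proposition.
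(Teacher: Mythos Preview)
Your proof is correct and follows exactly the same approach as the paper: apply Weyl's inequality to obtain $\lambda_n(A_\alpha(G)) \leq \alpha\Delta + (1-\alpha)\lambda_n(A(G))$, then invoke Lemma~\ref{lemma::lambda_nBoundTriangleFree}. Your observation that $\gamma(G)$ plays no role is also accurate; it is indeed vestigial from the preceding proposition.
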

\begin{proof}
From Theorem \ref{theo::weyl} we have that $\lambda_n(A_\alpha(G)) \leq \alpha \Delta + (1-\alpha) \lambda_n(A(G))$ and applying  Theorem \ref{lemma::lambda_nBoundTriangleFree}  the result follows.
\end{proof}

\section*{Acknowledgments}

The research of C. S. Oliveira is supported by CNPq Grant 304548/2020-0.

\bibliographystyle{unsrt}  
\bibliography{bounds}

\end{document}